	\numberwithin{equation}{section} 
\def\Xint#1{\mathchoice
{\XXint\displaystyle\textstyle{#1}}%
{\XXint\textstyle\scriptstyle{#1}}%
{\XXint\scriptstyle\scriptscriptstyle{#1}}%
{\XXint\scriptscriptstyle\scriptscriptstyle{#1}}%
\!\int}
\def\XXint#1#2#3{{\setbox0=\hbox{$#1{#2#3}{\int}$}
\vcenter{\hbox{$#2#3$}}\kern-.5\wd0}}
\def\pvint{\Xint-}
\newcommand{\ee}{{\rm e}}
\newcommand{\ii}{{\rm i}}
\newcommand{\R}{{\mathbb R}}
\newcommand{\C}{{\mathbb C}}
\newcommand{\Z}{{\mathbb Z}}
\newtheorem{theorem}{Theorem}[section]
\newtheorem{proposition}{Proposition}[section]
\newtheorem{lemma}{Lemma}[section]
\newtheorem{remark}{Remark}[section]
\newcommand{\cT}{\mathcal{T}}
\newcommand{\ftwo}{\varkappa}
\newcommand{\cc}{\gamma_0}
\newcommand{\re}{\mathrm{Re} }
\newcommand{\im}{\mathrm{Im} }
\title{\Large{A focusing-defocusing intermediate nonlinear Schr\"{o}dinger system}}
\author{Bjorn K. Berntson \\ bbernts@kth.se \and Alexander Fagerlund \\ afage@kth.se}
\date{Department of Physics, KTH Royal Institute of Technology, Stockholm, Sweden \\
\vspace{1em}
\today}
\begin{document}

\bigskip

{\let\newpage\relax\maketitle}

\maketitle

\begin{abstract}

We introduce and study a system of coupled nonlocal nonlinear Schr\"{o}dinger equations that interpolates between the mixed, focusing-defocusing Manakov system on one hand and a limiting case of the intermediate nonlinear Schr\"{o}dinger equation on the other. We show that this new system, which we call the intermediate mixed Manakov (IMM) system, admits multi-soliton solutions governed by a complexification of the hyperbolic Calogero-Moser (CM) system. Furthermore, we introduce a spatially periodic version of the IMM system, for which our result is a class of exact solutions governed by a complexified elliptic CM system.
\end{abstract} 

\noindent
{\small{\sc AMS Subject Classification (2020)}: 33E05, 35C08, 35Q51, 35Q55, 35Q70}
 
\noindent
{\small{\sc Keywords}: {integrable system, coupled nonlinear Schr\"{o}dinger equations, solitons, elliptic functions}

\section{Introduction} 

The nonlinear Schr\"{o}dinger (NLS) equation is a fundamental model both for wave propagation in weakly nonlinear, dispersive media \cite{benney1967} and in the theory of integrable systems \cite{ablowitz2003discrete}. The ability of the NLS equation to model a variety of nonlinear physics together with its amenability to exact analytic methods has inspired the development of various integrability-preserving generalizations and relative equations. Prominent among these is the Manakov system \cite{manakov1974,zakharov1982},
\begin{equation}\label{eq:manakov}
\begin{split}
\ii u_t=&\; u_{xx}+ u(\sigma_1 |u|^2+\sigma_2|v|^2), \\
\ii v_t=&\; v_{xx}+ v(\sigma_1 |u|^2+\sigma_2 |v|^2)
\end{split}\qquad  (\sigma_1,\sigma_2=\pm 1),
\end{equation}
a two-component variant of the NLS equation with applications to optics \cite{berkhoer1970}, water waves \cite{baronio2012}, and Bose-Einstein condensates \cite{bludov2010}. The Manakov system \eqref{eq:manakov} comes in three cases, up to equivalence.\footnote{The cases $\sigma_1=-\sigma_2=+1$ and $\sigma_1=-\sigma_2=-1$ of \eqref{eq:manakov} are equivalent via $u\leftrightarrow v$.} The focusing, $\sigma_1=\sigma_2=+1$, and defocusing, $\sigma_1=\sigma_2=-1$, cases of \eqref{eq:manakov} are generalizations of the corresponding cases of the NLS equation, which are recovered via the reduction $v=u$. The mixed, focusing-defocusing Manakov system, \eqref{eq:manakov} with $\sigma_1\sigma_2=-1$, does not reduce to a one-component NLS equation, but is an interesting integrable system in its own right, having recently been studied from the perspectives of boundary value problems \cite{tian2016,tian2017} and soliton phenomena \cite{kanna2006,ohta2011general,feng2014}. In this paper, we connect the mixed Manakov system to nonlocal (integro-differential) NLS systems by introducing a nonlocal deformation of the former. 

The \textit{intermediate mixed Manakov} (IMM) system reads
\begin{equation}\label{eq:IMM}
\begin{split}
\ii u_t=&\; u_{xx}+ u(\ii + T)(|u|^2)_x- u\tilde{T}(|v|^2)_x     , \\
\ii v_t=&\; v_{xx}+ v(\ii - T)(|v|^2)_x + v\tilde{T}(|u|^2)_x,
\end{split}
\end{equation}
with the integral operators
\begin{equation} 
\label{eq:TT}
\begin{split} 
(Tf)(x) \coloneqq &\; \frac1{2\delta}\pvint_{\R}\coth\left(\frac{\pi}{2\delta}(x'-x)\right)f(x')\,\mathrm{d}x', \\
(\tilde{T} f)(x) \coloneqq &\; \frac1{2\delta}\int_{\R}\tanh\left(\frac{\pi}{2\delta}(x'-x)\right)f(x')\,\mathrm{d}x',
\end{split} 
\end{equation} 
where $\delta>0$ is an arbitrary parameter and the dashed integral indicates a principal value prescription at $x'=x$. The intermediacy of \eqref{eq:IMM} corresponds to the fact, elaborated in Section~\ref{subsec:basic}, that it interpolates between the mixed Manakov system, which is obtained in the limit $\delta\downarrow 0$, and the following system of uncoupled Hilbert NLS (HNLS) equations \cite{matsuno2000,gerard2022},\begin{equation}\label{eq:HNLS}
\begin{split}
\ii u_t=&\;  u_{xx}+ u(\ii+ H)(|u|^2)_x, \\
\ii v_t= &\; v_{xx}+ v(\ii- H)(|v|^2)_x 
\end{split}
\end{equation}
with $H$ the Hilbert transform,
\begin{equation}\label{eq:H}
(Hf)(x)\coloneqq \frac{1}{\pi} \pvint_{\R} \frac{f(x')}{x'-x}\,\mathrm{d}x',
\end{equation}
which is obtained in the $\delta\to +\infty$ limit. In addition to generalizing the mixed Manakov system, the IMM system falls neatly into two established classes of integrable systems, as we now describe.
\begin{enumerate}
\item The study of integro-differential NLS equations was initiated by Pelinovsky in \cite{pelinovsky1995}, where the intermediate NLS (INLS) equation
\begin{equation}\label{eq:INLS}
	\ii u_t=u_{xx}+u(\ii-\sigma T)(|u|^2)_x \quad (\sigma=\pm 1)
\end{equation}
was derived as a description of envelope waves in the intermediate long wave equation. The INLS equation generalizes the standard NLS equation, which is recovered in the $\delta\downarrow 0$ limit. An inverse scattering transform for the defocusing $(\sigma=-1)$ INLS equation has been developed \cite{pelinovsky1995spectral} and multi-soliton solutions have been found in both the defocusing and focusing ($\sigma=+1)$ cases \cite{pelinovsky1995,matsuno2001,tutiya2009}. Similar results for the HNLS equation \eqref{eq:HNLS}, which is obtained from the INLS equation in the $\delta\to +\infty$ limit, have been established by Matsuno \cite{matsuno2000,matsuno2002exactly,matsuno2004}. Moreover, and of particular relevance to us, Matsuno established that soliton and certain spatially periodic solutions of the defocusing HNLS equation (i.e., the first equation in \eqref{eq:HNLS}) are governed by the rational and trigonometric Calogero-Moser (CM) systems, respectively, subject to certain constraints on their initial conditions \cite{matsuno2002}. G\'{e}rard and Lenzmann have recently established several rigorous results for the focusing HNLS equation (i.e., the second equation in \eqref{eq:HNLS}) including a novel Lax pair structure and global-in-time multi-soliton solutions \cite{gerard2022}. The IMM system is the first-studied two-component system in this class. 

\item Several exactly-solvable systems involving both the $T$ and $\tilde{T}$ operators \eqref{eq:TT} have recently been introduced \cite{berntson2020,berntson2022,berntsonlangmannlenells2022}; the IMM system provides a further example in this class. A prominent feature of known such examples is the existence of families of solutions governed by CM systems. We will show that the IMM system likewise has such solutions. However, in contrast to known examples, which only have meaningful limits as $\delta\to +\infty$, the IMM system is a genuine intermediate system in the sense described above. 
\end{enumerate}

Our results on the IMM system are motivated by and use tools from the study of the classes of systems discussed above. Most importantly, we view our results as hyperbolic and elliptic generalizations of Matsuno's work \cite{matsuno2002} on solving an integro-differential NLS equation (the HNLS equation \eqref{eq:HNLS}) using rational and trigonometric CM systems. In the remainder of this section, we describe these results, establish certain basic properties of the IMM system, outline our plan for the paper, and introduce the notation we use in the main text. 

\subsection{Summary of results}

We establish a precise connection between the IMM system and the hyperbolic and elliptic cases of the CM system, defined for $N\in \Z_{\geq 1}$ to be the system of ordinary differential equations (ODEs),
\begin{equation}\label{eq:CM}
\ddot{a}_j=-4 \sum_{k\neq j}^N V'(a_j-a_k) 	\quad (j=1,\ldots,N)
\end{equation}
with
\begin{equation}\label{eq:V}
V(z)\coloneqq \begin{cases}
1/z^2 & \text{(I: rational case)} \\ 
(\pi/2\ell)^2/\sin^2(\pi z/2\ell) & \text{(II: trigonometric case)} \\
(\pi/2\delta)^2/\sinh^2(\pi z/2\delta) & \text{(III: hyperbolic case)}\\
\wp_2(z;\ell,\ii\delta) & \text{(IV: elliptic case)},  	
\end{cases}	
\end{equation}
where $\ell,\delta>0$ are free parameters and in case IV, $\wp_2(z;\ell,\ii\delta)$ is equal to the Weierstrass elliptic function with half-periods $(\ell,\ii\delta)$ up to an additive constant (see \eqref{eq:wp2} for the precise definition, but note that the value of the constant is irrelevant for \eqref{eq:CM}). These systems are exactly solvable \cite{olshanetsky1983,krichever1980} and, as we describe below, give rise to exact solutions to the IMM system. As mentioned above, our results extend a known correspondence between the HNLS equation and the rational and trigonometric cases of the CM system \cite{matsuno2002,gerard2022}. 

More specifically, we construct solutions of the IMM system as the values on the real line of certain meromorphic functions (one for $u$ and one for $v$), following an idea due to Kruskal \cite{kruskal1974} (see also \cite{thickstun1976,airault1977,choodnovsky1977}) and applied to the HNLS system by Matsuno \cite{matsuno2002} (see also \cite{gerard2022}). We make ans\"{a}tze for these functions involving (dynamical) parameters that determine the (simple) poles and the corresponding residues and an additive term, which we refer to as the background.

Our first result, whose precise statement is given in Theorem~\ref{thm:solitons}, provides a recipe to construct $N$-soliton solutions of the IMM system as the real-line values of $2\ii\delta$-periodic, meromorphic functions, each with a constant background and $N$ dynamical simple poles and corresponding residues. The poles of the meromorphic functions are determined by certain solutions $\{a_j\}_{j=1}^N$ of the complexified hyperbolic CM system. The corresponding residues $\{c_j\}_{j=1}^N$ are obtained as certain solutions of a linear system of ODEs with coefficients depending on $\{a_j\}_{j=1}^N$.

Our second result is an adaptation of the first result that applies to the IMM system with $2\ell$-periodic boundary conditions; the precise statement is given in Theorem~\ref{thm:ellipticsolitons}. We construct $2\ell$-periodic solutions of the IMM system as the real-line values of $2\ell$- and $2\ii\delta$-periodic, meromorphic (i.e., elliptic) functions, each having a dynamical background and $N$ dynamical simple poles and corresponding residues. The parameters $\{a_j\}_{j=1}^N$ are certain solutions of the complexified elliptic CM system and determine the poles of the elliptic solutions to the IMM system. Similarly as in the first result, the parameters $\{c_j\}_{j=1}^N$ determine the residues corresponding to these poles and solve a linear system of ODEs with coefficients depending on $\{a_j\}_{j=1}^N$. The dynamical nature of the background is a new feature in the periodic setting; the dynamics is determined by the value of a quantity $\lambda$, which solves an ODE of the form $\dot{\lambda}=F\big(\{a_j\}_{j=1}^N,\{c_j\}_{j=1}^N\big)$.

Our results are supplemented by examples with corresponding visualizations. The key step in constructing such examples is to solve certain nonlinear constraints that the initial values of the time-dependent parameters must satisfy. In the case of our first result, we develop methods to do this when $N=1,2,3$; this process gives rise to one-, two-, and three-soliton solutions of the IMM system, which strongly suggests that the IMM system is integrable (and that it admits $N$-soliton solutions for $N>3$) \cite{ramani1981,hietarinta1987}. We also develop methods to solve the analogous constraints in the $N=2,3$ cases of our second result (we exclude the $N=1$ case from our second result as it would give only trivial (i.e., constant) solutions; see Section~\ref{sec:periodic} for details).

\subsection{Basic properties of the IMM system}\label{subsec:basic}

We collect some useful properties of the IMM system: limits to known systems, symmetries that we apply in the main text, and a two-vector notation used in the proof of our results. 

\paragraph{Limits.} The nonlocal operators in the IMM system have simple representations as Fourier multipliers, from which several basic properties (such as limits) of the IMM system follow.  With the convention $\hat{f}(k)\coloneqq \int_{\R} f(x)\ee^{-\ii kx}\,\mathrm{d}x$, the Fourier transforms of the operators $T$ and $\tilde{T}$ in \eqref{eq:TT} are \cite{berntson2020}
\begin{equation}
\begin{split}
(\widehat{Tf})(k)=&\; \ii\,\coth(k\delta)\hat{f}(k), \\
(\widehat{\tilde{T}f})(k)=&\; \frac{\ii}{\sinh(k\delta)}\hat{f}(k)  
\end{split}\qquad (k\in \R\setminus\{0\}). 
\end{equation}
Thus, in the limit $\delta\rightarrow+\infty$,
\begin{equation}\label{eq:deepwaterT}
\begin{split}
(\widehat{Tf})(k)\rightarrow&\; \ii\,\mathrm{sgn}(k)\hat{f}(k),\\
(\widehat{\tilde{T}f})(k)\rightarrow &\; 0 
\end{split}\qquad (k\in \R\setminus\{0\}).
\end{equation}  
We recognize the Fourier multiplier $\ii\,\mathrm{sgn}(k)$ as that of the Fourier space representation of the Hilbert transform \eqref{eq:H}. Hence, taking the $\delta\to +\infty$ limit of \eqref{eq:IMM} and using \eqref{eq:deepwaterT}, we obtain the HNLS equation \eqref{eq:HNLS}. 

In the limit $\delta\downarrow 0$, known asymptotic expansions of the $T$ and $\tilde{T}$ operators \cite{berntsonlangmann2020}  can be used to show that the IMM equation \eqref{eq:IMM} reduces to the mixed Manakov system \eqref{eq:manakov} with $\sigma_1=-\sigma_2=+1$. Details are given in Appendix~\ref{app:local}.     

\paragraph{Symmetries.}
The IMM system, like the Manakov system, possesses the $\mathrm{U}(1)\times\mathrm{U}(1)$ symmetry,
\begin{equation}\label{eq:U1timesU1}
(u,v)\to (\ee^{\ii\theta_1}u, \ee^{\ii\theta_2}v) \qquad (\theta_1,\theta_2\in\mathbb{R})  
\end{equation}
and the Galilean symmetry,
\begin{equation}\label{eq:galilean}
\big(u(x,t),v(x,t)\big)\to \ee^{-\ii \eta x+\ii \eta^2 t}\big(u(x-2\eta t,t),v(x-2\eta t,t)\big) \qquad (\eta\in \R).
\end{equation}

It is interesting to note that, in contrast to known systems involving $T$ and $\tilde{T}$ operators, the IMM system does not admit a natural discrete symmetry interchanging the two component equations (which can be interpreted as non-chirality \cite{berntson2020,berntson2022,berntsonlangmannlenells2022}). To be more specific, there is no combination of variable interchange ($u\leftrightarrow v$), parity inversion ($x\to -x)$, time reversal ($t\to -t$), and complex conjugation that interchanges the component equations in \eqref{eq:IMM}.
  
\paragraph{Two-vector notation.}  
  
The IMM system can be written as a single equation using the following notation developed within similar contexts \cite{berntson2020,berntson2022}.  
  
  Given $\C$-valued functions $F_j,G_j$, $j=1,2$, we define the product 
\begin{equation}\label{eq:circ}
\left(\begin{array}{c} F_1 \\ F_2  \end{array}\right)\circ \left(\begin{array}{c} G_1 \\ G_2    \end{array}\right)\coloneqq \left(\begin{array}{c} F_1G_1 \\ -F_2 G_2    \end{array}\right)
\end{equation}
and the linear operators
\begin{equation}\label{eq:cT}
\cT: \left(\begin{array}{c} F_1 \\ F_2  \end{array}\right)\mapsto \left(\begin{array}{cc} T & \tilde{T} \\ -\tilde{T} & -T    \end{array}\right)\left(\begin{array}{c} F_1 \\ F_2  \end{array}\right) \coloneqq \left(\begin{array}{c} TF_1+\tilde{T}F_2 \\ -\tilde{T} F_1-TF_2  \end{array}\right),
\end{equation}
with $T$ and $\tilde{T}$ as in \eqref{eq:TT}. With this, the IMM equation \eqref{eq:IMM} can be written as
\begin{equation}\label{eq:IMM2}
\ii U_t=U_{xx}+ U\circ (\ii+\cT)(U\circ U^*)_x,\qquad U\coloneqq \left(\begin{array}{c} u \\ v \end{array}\right).
\end{equation}
  In this way, the IMM system is a natural two-component variant of the INLS equation \eqref{eq:INLS}. 
  
\subsection{Plan of the paper}

 In Section~\ref{sec:solitons}, an ansatz is used to construct soliton solutions of the IMM system controlled by the hyperbolic CM system. The periodic version of the IMM system is introduced in Section~\ref{sec:periodic}, where we also extend the results of Section~\ref{sec:solitons} to the periodic setting. Properties of the special functions we use are collected in Appendix~\ref{app:functional}. Appendix~\ref{app:proofs} is devoted to a proof of the main result of Section~\ref{sec:periodic}, which (as we explain in Section~\ref{subsubsec:limit}) essentially contains the proof of the main result of Section~\ref{sec:solitons} as a special case. In Appendix~\ref{app:local}, we provide details on the $\delta\downarrow 0$ limit of the IMM system discussed in Section~\ref{subsec:basic}.
Appendix~\ref{app:amplitudes} contains a derivation of useful expressions  for the (squared) amplitudes of our solutions.
\subsection{Notation}

We use the shorthand notation $\sum_{k\neq j}^{N}$ for sums $\sum_{k=1,k\neq j}^N$, etc. A dot above a variable indicates differentiation with respect to time while a prime indicates differentiation with respect to the argument of a function. Complex conjugation is denoted by $*$. The two-vector notation introduced in Section~\ref{subsec:basic} is used in Appendix~\ref{app:proofs} and at selected points in the main text. 

\section{Solitons}\label{sec:solitons}

We will construct multi-soliton solutions of the IMM system by making an ansatz with time-dependent complex poles and residues and showing that the poles evolve according to a complexified version of the hyperbolic CM system while the residues solve a linear system of ODEs. To be more concrete, we first introduce the following special function that will play a key role in our analysis,
\begin{equation}\label{eq:alpha}
\alpha(z)\coloneqq \frac{\pi}{2\delta}\coth\bigg(\frac{\pi }{2\delta}z\bigg)	;
\end{equation}
note that $V(z)$ in Case III of \eqref{eq:V} is equal to $-\alpha'(z)$. 

Our ansatz for the soliton solutions of the IMM system is
\begin{equation}\label{eq:ansatz}
\left(\begin{array}{c} u(x,t) \\ v(x,t) \end{array}\right) = \lambda \left(\begin{array}{c} 1 \\ -1 \end{array}\right)  + \ii\sum_{j=1}^N c_j(t) \left(\begin{array}{c} \alpha(x-a_j(t)-\ii\delta/2) \\ -\alpha(x-a_j(t)+\ii\delta/2)   \end{array}\right),
\end{equation}
where $\lambda$ is a constant, which, without loss of generality, may be chosen to be real by the $\mathrm{U}(1)\times\mathrm{U}(1)$ invariance of the IMM system \eqref{eq:U1timesU1}, and $\{a_j,c_j\}_{j=1}^N$ are complex-valued functions of $t$.

The key necessary condition for the ansatz \eqref{eq:ansatz} to be consistent is that $\{a_j\}_{j=1}^N$ satisfy the hyperbolic CM system, \eqref{eq:CM} with $V(z)$ in Case III of \eqref{eq:V}.  The parameters $\{c_j\}_{j=1}^N$ must satisfy the following system of first-order linear ODEs,
\begin{equation}\label{eq:cjdot}
\dot{c}_j= 2\ii \sum_{k\neq j}^N (c_j-c_k)V(a_j-a_k) \quad (j=1,\ldots,N).
\end{equation}

The precise statement of our result is given as Theorem~\ref{thm:solitons} in Section~\ref{subsec:result}. One-, two-, and three-soliton solutions, corresponding to the cases $N=1$, $2$,  and $3$ of Theorem~\ref{thm:solitons} are given special attention in Section~\ref{subsec:examples}. 

\subsection{Result}\label{subsec:result}

Our result is stated and followed by several remarks. 

\begin{theorem}\label{thm:solitons}
For $N\in \Z_{\geq 1}$ and $\lambda\in \R$, let $\{a_j,c_j\}_{j=1}^N$ be a solution of the system of ODEs consisting of \eqref{eq:CM} and \eqref{eq:cjdot} on an interval $[0,\tau)$ for some $\tau \in (0,\infty)\cup\{\infty\}$ and with initial conditions that satisfy
\begin{equation}\label{eq:ajdot}
c_j\dot{a}_j=2\lambda+2\ii\sum_{k\neq j}^N c_k \alpha(a_j-a_k) \quad (j=1,\ldots,N)
\end{equation}
and
\begin{equation}\label{eq:constraint}
c_j\Bigg(\lambda-\ii\sum_{k=1}^N c_k^*\alpha(a_j-a_k^*+\ii\delta)\Bigg)+1=0 \quad (j=1,\ldots,N)
\end{equation}
at $t=0$. Moreover, suppose that the conditions
\begin{equation}\label{eq:imaj}
-\frac{3\delta}{2}< \im(a_j) < -\frac{\delta}{2} \quad (j=1,\ldots,N)
\end{equation}
and
\begin{equation}\label{eq:ajak}
a_j\neq a_k \quad (1\leq j<k\leq N)
\end{equation}
are satisfied for $t\in [0,\tau)$. Then, \eqref{eq:ansatz} solves the IMM system \eqref{eq:IMM}--\eqref{eq:TT} on $[0,\tau)$. 
\end{theorem}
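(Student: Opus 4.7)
The plan is a direct pole-matching verification. Substituting \eqref{eq:ansatz} into \eqref{eq:IMM}, I would view each component as the restriction to $\R$ of a $2\ii\delta$-periodic function meromorphic in $x$; by the strip condition \eqref{eq:imaj}, the poles of $u$ at $x = a_j+\ii\delta/2 + 2\ii n\delta$ and of $v$ at $x = a_j-\ii\delta/2+2\ii n\delta$ avoid the real line, so the ansatz is smooth on $\R$. Writing $|u|^2$ and $|v|^2$ via the analytic continuations of the complex conjugates, $\bar u(x) \coloneqq \lambda - \ii \sum_j c_j^*\alpha(x-a_j^*+\ii\delta/2)$ and similarly $\bar v$, both products extend to $2\ii\delta$-periodic meromorphic functions of $x$ with poles at $a_j\pm\ii\delta/2$ and $a_j^*\mp\ii\delta/2$ (mod $2\ii\delta$). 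The IMM system \eqref{eq:IMM} then becomes a proposed identity between two such meromorphic functions, to be verified by matching every principal and regular part.

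The essential analytical input is the action of $T$ and $\tilde T$ on functions of the form $\alpha(\cdot - b)$. Via contour deformation across a period strip, exploiting the $2\ii\delta$-periodicity of both the kernels in \eqref{eq:TT} and of $\alpha$, one derives Sokhotski-type identities schematically of the form $T[\alpha(\cdot-b)] = \mp\ii\,\alpha(\cdot-b)$ and $\tilde T[\alpha(\cdot-b)] = \pm\ii\,\alpha(\cdot-b\mp\ii\delta)$, with the sign determined by the side of the real line on which $b$ sits in the fundamental strip. Combined with the partial-fraction/addition identities for products $\alpha(z-a)\alpha(z-b)$ (classical consequences of the hyperbolic sine addition theorem, expressible through $\alpha$, its derivative $\alpha' = -V$, and $V(a-b)$), these formulas render $(\ii+T)(|u|^2)_x$ and $\tilde T(|v|^2)_x$ as explicit finite sums of $\alpha$-type terms plus a constant, meromorphic in $x$ with exactly the same pole set as the left-hand side of \eqref{eq:IMM}.

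I would then match Laurent expansions of both sides of \eqref{eq:IMM} at each simple pole $x = a_j+\ii\delta/2$ of $u$ (and analogously at the poles of $v$). The $(x-a_j-\ii\delta/2)^{-2}$ coefficients, after invoking the partial-fraction identity, yield the hyperbolic Calogero-Moser equation \eqref{eq:CM}; the simple-pole residue identity gives the linear ODE \eqref{eq:cjdot} for $\dot c_j$; the regular part at each pole gives \eqref{eq:ajdot} for $c_j\dot a_j$. Matching the constant parts at $\im(x)\to\pm\infty$, equivalently the coefficient of the background $\binom{1}{-1}$ in the two-vector formulation \eqref{eq:IMM2}, yields the algebraic constraint \eqref{eq:constraint}. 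This chain of identifications shows that the ansatz \eqref{eq:ansatz} solves \eqref{eq:IMM} \emph{provided} each of \eqref{eq:CM}, \eqref{eq:cjdot}, \eqref{eq:ajdot}, and \eqref{eq:constraint} holds for every $t\in[0,\tau)$.

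The hardest step is to show that \eqref{eq:ajdot} and \eqref{eq:constraint}, which the hypothesis imposes only at $t=0$, are preserved by the flow \eqref{eq:CM}--\eqref{eq:cjdot}. Defining the residuals
\begin{equation*}
\mathcal R^{(1)}_j \coloneqq c_j\dot a_j - 2\lambda - 2\ii\sum_{k\neq j}^N c_k\alpha(a_j-a_k), \qquad \mathcal R^{(2)}_j \coloneqq c_j\Big(\lambda - \ii\sum_{k=1}^N c_k^*\alpha(a_j-a_k^*+\ii\delta)\Big) + 1,
\end{equation*}
I would differentiate in $t$, substitute \eqref{eq:CM} and \eqref{eq:cjdot}, and aim to express each $\dot{\mathcal R}^{(a)}_j$ as a linear combination of $\{\mathcal R^{(a)}_k\}_{k=1}^N$ with coefficients that remain smooth on $[0,\tau)$ under \eqref{eq:ajak}. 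Standard ODE uniqueness then forces $\mathcal R^{(a)}(t)\equiv 0$ from $\mathcal R^{(a)}(0) = 0$. The cancellations required rest on $\alpha'=-V$ and on the functional equations for $\alpha$ that underlie the integrability of the hyperbolic Calogero-Moser system, and this is the algebraically heaviest part of the argument.
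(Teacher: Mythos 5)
Your overall architecture is the same as the paper's: first show that the ansatz \eqref{eq:ansatz} solves \eqref{eq:IMM} at each time provided a first-order system plus the algebraic constraint hold, then propagate the conditions imposed at $t=0$ along the flow \eqref{eq:CM}, \eqref{eq:cjdot} by a linear-ODE/uniqueness argument for the residuals. Your treatment of the second stage is essentially the paper's (Proposition~\ref{prop:constraints} shows $\dot D_j$ is real-linear in $\{D_k,D_k^*\}$ with regular coefficients under \eqref{eq:ajak}; Proposition~\ref{prop:CM} handles \eqref{eq:ajdot} via uniqueness of the two initial value problems), with two caveats: the linear system for your $\mathcal R^{(2)}$ necessarily couples $\mathcal R^{(2)}_k$ to $(\mathcal R^{(2)}_k)^*$, and the computation of $\dot{\mathcal R}^{(2)}_j$ uses \eqref{eq:ajdot} at all times, so the persistence of \eqref{eq:ajdot} must be established first (which in turn requires knowing $c_j\neq 0$ throughout, a point the paper handles via Lemma~\ref{lem:cjneq0} and which your sketch does not address).

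The genuine problem is in your pole-matching bookkeeping, where the assignments are wrong in ways that would derail the derivation as written. First, the constraint \eqref{eq:constraint} does \emph{not} come from matching the constant background: in the real-line case the coefficient of $\binom{1}{-1}$ in $\ii U_t-U_{xx}-U\circ(\ii+\cT)(U\circ U^*)_x$ reduces to $\ii\dot\lambda$, which vanishes identically for constant $\lambda$ (cf.\ \eqref{eq:IMM3} with $\ftwo'\equiv 0$), so the background yields $0=0$ and no information. The constraint is instead exactly the condition that the \emph{third-order} poles cancel: $u_{xx}$ contributes $\ii c_j\alpha''(x-a_j-\ii\delta/2)$, the self-interaction term $c_j^2\big(\lambda^*-\ii\sum_k c_k^*\alpha(a_j-a_k^*+\ii\delta)\big)\,A_+\circ A_+'$ contributes another $\alpha''$ via \eqref{eq:solId2}, and their sum vanishes precisely when $c_j\big(\lambda^*-\ii\sum_k c_k^*\alpha(a_j-a_k^*+\ii\delta)\big)=-1$. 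If you look to the background for \eqref{eq:constraint} you will find nothing and be left with unmatched triple poles. Second, the double-pole coefficients yield the first-order equation \eqref{eq:ajdot}, not the Calogero--Moser system: \eqref{eq:CM} never appears as a Laurent coefficient of the PDE at all, but only as the time-derivative compatibility of \eqref{eq:ajdot} with \eqref{eq:cjdot} (this is the content of Lemma~\ref{lem:CM} and Remark~4 after Theorem~\ref{thm:solitons}). A minor further imprecision: your ``Sokhotski-type identities'' should be stated for $\alpha'(\cdot-b)$, not $\alpha(\cdot-b)$; the latter does not decay, so $T\alpha(\cdot-b)$ is not defined by the integrals \eqref{eq:TT}, whereas the eigenrelations for the derivatives, $\cT A_\pm'(\cdot-a_j)=\pm\ii A_\pm'(\cdot-a_j)$ under \eqref{eq:imaj}, are what the argument actually needs.
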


\subsubsection{Remarks on Theorem~\ref{thm:solitons}}\label{subsubsec:remarks}

\begin{enumerate}
\item We omit the proof of Theorem~\ref{thm:solitons} because, as will be elaborated in Section~\ref{subsubsec:remarksperiodic}, it is essentially a special case of Theorem~\ref{thm:ellipticsolitons}, which is stated in Section~\ref{subsec:resultelliptic} and proven in detail in Appendix~\ref{app:proofs}. 
\item Our solutions \eqref{eq:ansatz} obey the boundary conditions
\begin{equation}\label{eq:bc}
\lim_{x\to\pm\infty} u(x,t)=-\lim_{x\to\pm\infty} v(x,t)=\lambda\pm \ii\frac{\pi}{2\delta}\sum_{j=1}^N c_j(t),
\end{equation}
which follow from
\begin{equation}
\lim_{x\to\pm\infty} \alpha(x+\ii y)=\frac{\pi}{2\delta} \lim_{x\to\pm\infty} \coth\bigg(\frac{\pi}{2\delta}(x+\ii y)\bigg)=\pm \frac{\pi}{2\delta}. 
\end{equation}
The boundary conditions \eqref{eq:bc} are time-independent. To see this, observe that \eqref{eq:cjdot} implies that $\sum_{j=1}^N c_j$ is conserved in time (see Lemma~\ref{lem:cjconstraint} in Appendix~\ref{app:thmproof} for details of this calculation). It can further be seen, using \eqref{eq:constraint}, that $\sum_{j=1}^N \im(c_j)=0$ (see Sections~\ref{subsubsec:twosol}--\ref{subsubsec:threesol} for details in the cases $N=2,3$). Consequently, the second term determining the boundary conditions in \eqref{eq:bc} is purely imaginary.

\item The ansatz \eqref{eq:ansatz} (or its complex conjugate appearing implicitly in \eqref{eq:IMM}) can be written in terms of the functions
\begin{equation}\label{eq:Apm}
A_{\pm}(z)\coloneqq\left(\begin{array}{c}\alpha(z\mp\ii\delta/2) \\ -\alpha(z\pm\ii\delta/2)\end{array}\right). 
\end{equation}
using the two-vector notation introduced in Section~\ref{subsec:basic}. To establish our result, a crucial feature of the functions \eqref{eq:Apm} is that their derivatives $A_{\pm}'(x-a_j)$ are eigenfunctions of the operator $\cT$ in \eqref{eq:cT} \cite{berntson2020},
\begin{equation}
(\cT A_{\pm}'(\cdot-a_j))(x)=\pm \ii A_{\pm}'(x-a_j)	
\end{equation} 
when \eqref{eq:imaj} holds. 
\item The ansatz \eqref{eq:ansatz} satisfies the IMM system provided that the first-order system \eqref{eq:cjdot}--\eqref{eq:ajdot} is satisfied and the conditions \eqref{eq:imaj}--\eqref{eq:ajak} hold. This first-order system is equivalent to the system of equations \eqref{eq:CM} and \eqref{eq:cjdot} when equipped with compatible initial conditions satisfying \eqref{eq:constraint} and when \eqref{eq:ajak} holds; see Proposition~\ref{prop:CM} in Appendix~\ref{app:thmproof} for the precise statement. In this way, the first-order system resembles the well-known B\"{a}cklund transformation for the CM system \cite{wojciechowski1982}. 
\item The condition \eqref{eq:ajak} excludes the possibility of pole collisions on the interval $[0,\tau)$. Such collisions cannot occur in (repulsive) CM systems \eqref{eq:CM}--\eqref{eq:V} with real-valued $\{a_j\}_{j=1}^N$, but are possible in the complexified CM systems we use; see \cite{wilson1998,gerard2022} for details in Case I of \eqref{eq:CM}--\eqref{eq:V}.  
\end{enumerate}

\subsection{Examples of solutions}\label{subsec:examples}

We present examples of solutions coming from the $N=1,2,3$ cases of Theorem~\ref{thm:solitons}. Doing so involves solving the nonlinear constraints \eqref{eq:constraint} at $t=0$ and using the equations of motion \eqref{eq:CM}, 
\eqref{eq:cjdot} to propagate this initial date forward in time. In the case $N=1$ we obtain a fully explicit solution, while in the cases $N=2,3$ we give 
	numerical solutions\footnote{We note that the hyperbolic CM system can be solved exactly by linear algebra methods \cite{olshanetsky1983}. Moreover, a method developed by Matsuno \cite{matsuno2002} to exactly solve a linear system of ODEs analogous to \eqref{eq:cjdot} could be adapted for our purposes. However, for efficiency and due to the fact that exactly solving the elliptic CM system \cite{krichever1980} is a more challenging procedure, we focus on generating numerical examples in this paper.} arising from initial data that exactly solves the constraints.

To visualize our results, we employ the following formula for the squared amplitudes of the pole ansatz solutions \eqref{eq:ansatz}, for which a proof is given in Appendix~\ref{app:moduli},   
\begin{equation}\label{eq:moduli}
\left(\begin{array}{c} |u|^2 \\ |v|^2 \end{array}\right)	= B      \left(\begin{array}{c} 1 \\ 1     \end{array}\right)-\ii \sum_{j=1}^N \left(\begin{array}{c} \alpha(x-a_j-\ii\delta/2)-\alpha(x-a_j^*+\ii\delta/2) \\
\alpha(x-a_j+\ii\delta/2)-\alpha(x-a_j^*-\ii\delta/2)     \end{array}\right).
\end{equation}
where
\begin{equation}\label{eq:moduli2}
B\coloneqq  \lambda^2+\bigg(\frac{\pi}{2\delta}\bigg)^2\Bigg|\sum_{j=1}^N c_j \Bigg|^2
\end{equation}
It is interesting to note that the dynamics of $\{c_j\}_{j=1}^N$ play no role in \eqref{eq:moduli}--\eqref{eq:moduli2}: because $\sum_{j=1}^N c_j$ is conserved in time, only the initial values of $\{c_j\}_{j=1}^N$ are required to compute \eqref{eq:moduli2}. We remark that an analogous formula in the context of HNLS solitons was given in \cite[Eq.~(19)]{matsuno2002}.

From \eqref{eq:moduli}--\eqref{eq:moduli2} and the fact that the functions $f_{j,\pm}(x)=-\ii(\alpha(x-a_j\mp\ii\delta/2)-\alpha(x-a_j^*\mp\ii\delta/2))$ decay rapidly (when \eqref{eq:imaj} holds), it is clear that the soliton amplitudes exhibit localized excitations on the (amplitude) background $\sqrt{B}$ about the points $x=\re(a_j)$ for $j=1,\ldots,N$. Moreover, it is straightforward to verify that (when \eqref{eq:imaj} holds) $f_{j,\pm}(x)\lessgtr 0$ is satisfied for $x\in \R$; the $u$-solitons are ``dark" while the $v$-solitons are ``bright." This phenomenon is illustrated in the figures presented below. 

The parameter $\delta$ can be removed in \eqref{eq:CM}, \eqref{eq:ansatz}--\eqref{eq:imaj}, and \eqref{eq:moduli} by the rescalings
\begin{equation}\label{eq:units}	
x\to \delta x,\qquad t\to \delta^2 t, \qquad (u,v)\to (u,v)/\sqrt{\delta},\qquad \lambda\to \lambda/\sqrt{\delta}, \qquad a_j\to \delta a_j\qquad c_j\to \sqrt{\delta}c_j
\end{equation}
(for $j=1,\ldots,N$ in the case of $\{a_j,c_j\}_{j=1}^N$). These transformations provide the natural units for the quantities in question, which we use in our figures.

\subsubsection{One-soliton solutions}\label{subsec:1sol}

When $N=1$, the system of equations \eqref{eq:CM} and \eqref{eq:cjdot} reduces to 
\begin{equation}
\ddot{a}_1=0,\qquad \dot{c}_1=0,
\end{equation}
for which the general solution is
\begin{equation}\label{eq:gensol}
a_1(t)=a_{1,0}+\eta_{1}t,\qquad c_1(t)=c_{1,0}
\end{equation}
for some complex constants $a_{1,0}$, $\eta_1$, and $c_{1,0}$. The requirement \eqref{eq:imaj} imposes the condition $-3\delta/2<\im(a_{1,0})<-\delta/2$. In terms of the parameters in \eqref{eq:gensol}, the conditions \eqref{eq:ajdot} and \eqref{eq:constraint} at $t=0$ read, respectively
\begin{equation}\label{eq:c10constraint1}
|c_{1,0}|^2\bigg(\frac{\lambda}{c_{1,0}^*}-\ii \alpha(a_{1,0}-a_{1,0}^*+\ii\delta)\bigg)=-1,\qquad c_{1,0} \eta_1 =2\lambda,
\end{equation}
which, recalling \eqref{eq:alpha} and the standard identities $\coth(z+\ii\pi/2)=\tanh(z)$, $\tanh(\ii z)=\ii\tan(z)$, imply
\begin{equation}\label{eq:c10constraint2}
|c_{1,0}|^2\bigg(\frac12 \eta_{1}^*+\frac{\pi}{2\delta}\tan\bigg(\frac{\pi}{\delta}\,\im(a_{1,0})\bigg)\bigg)=-1.
\end{equation}
This equation has a positive solution for $|c_{1,0}|$ if and only if $\eta_1\in\R$ and
\begin{equation}\label{eq:ajineq}
\eta_1+\frac{\pi}{\delta}\tan\bigg(\frac{\pi}{\delta}\,\im(a_{1,0})\bigg)<0.
\end{equation}
Suppose $a_{1,0}\in \C$ and $v_{1}\in \R$ satisfying \eqref{eq:imaj} at $t=0$ and \eqref{eq:ajineq} are given. Then $c_{1,0}$ is determined up to a phase by \eqref{eq:c10constraint2}, but the requirement that $\lambda$ is real and the second equation in \eqref{eq:c10constraint1} restrict $c_{1,0}$ to be real. Applying Theorem~\ref{thm:solitons} to the ansatz \eqref{eq:ansatz} with $N=1$, we arrive at the following explicit traveling wave solution for the IMM system,
\begin{equation}\label{eq:explicit1sol}
\left(\begin{array}{c} u(x,t) \\ v(x,t) \end{array}\right)= \frac{c_{1,0}\eta_{1}}{2}\left(\begin{array}{c} 1 \\ -1 \end{array}\right)+\ii c_{1,0} \left(\begin{array}{c} \alpha(x-a_{1,0}-\eta_1 t-\ii\delta/2) \\ -\alpha(x-a_{1,0}-\eta_1 t+\ii\delta/2)   \end{array}\right),
\end{equation}
where
\begin{equation}
c_{1,0}=\frac{\sqrt{2}}{\sqrt{-\big(\eta_{1}+\frac{\pi}{\delta}\tan\big(\frac{\pi}{\delta}\,\im(a_{1,0})\big)\big)}}.
\end{equation}
An example of such a solution is presented in Fig.~\ref{fig:1soliton}. 

\begin{remark}
Note that when $\eta_1=0$, we generate stationary solutions and $\lambda=0$. Note, however, that such solutions still have non-trivial asymptotics (see \eqref{eq:moduli}--\eqref{eq:moduli2}). Via the Galilean transformation \eqref{eq:galilean}, these solutions may be transformed into non-stationary solutions. 
\end{remark}

\begin{figure}
  \includegraphics[width=\linewidth]{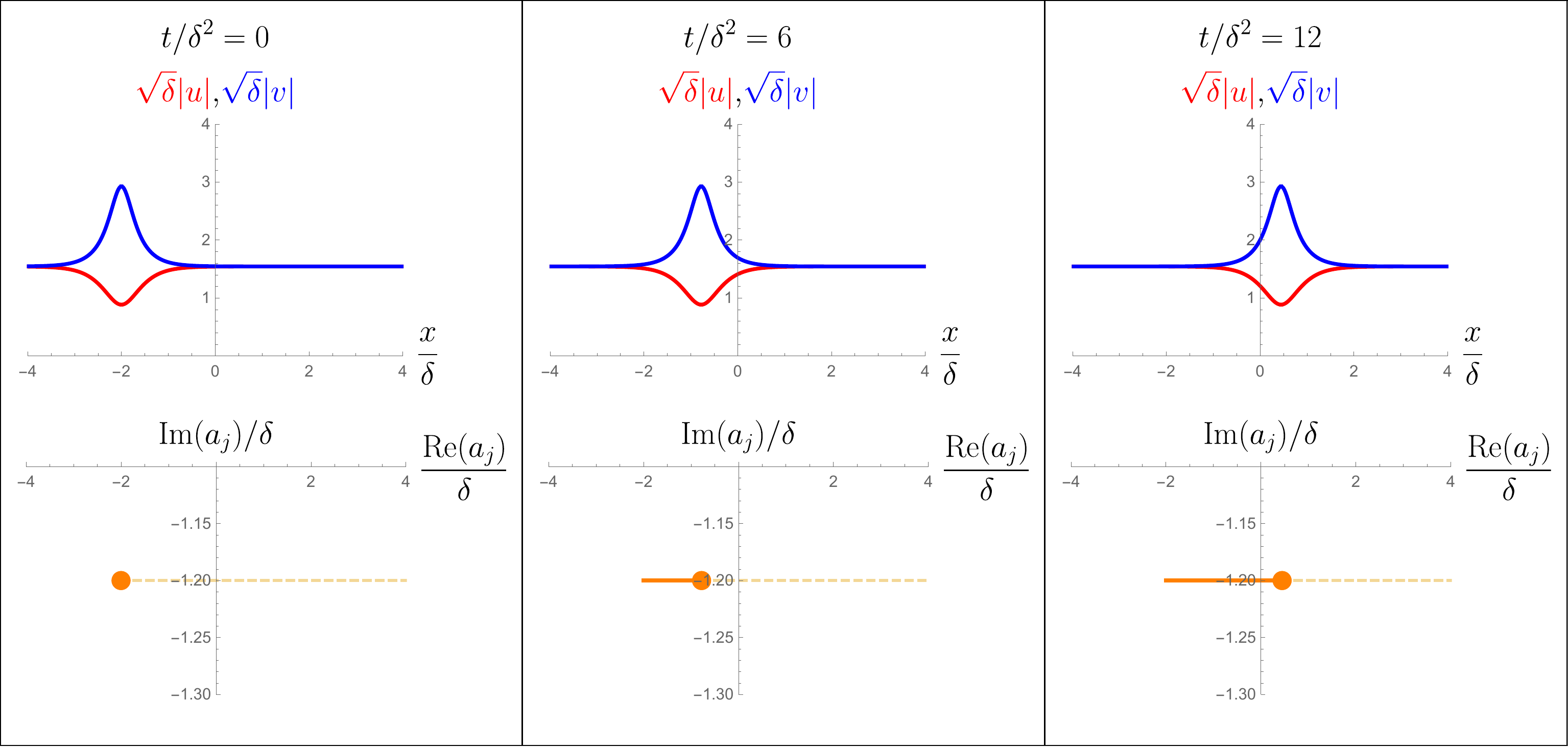}
  \caption{A one-soliton solution with the following (initial) parameter values: $\lambda=1/(10\sqrt{\delta})$, $a_1(0)=-(2+12\ii/10)\delta,$ $c_1(0)\approx 0.98091\sqrt{\delta}$ (the velocity of $a_1$ is determined by \eqref{eq:ajdot}). The resulting dynamics of the parameters is obtained from \eqref{eq:gensol}. The (amplitudes of the) solution to the IMM system \eqref{eq:IMM}--\eqref{eq:TT} is plotted in the upper frames at three points in time. The corresponding dynamics of $a_1$ is plotted in the lower frames, where a dot indicates the value at each time, the bold line shows the trajectory since $t=0$, and the dotted line shows the future trajectory (until it exceeds the range of the plot). }
  \label{fig:1soliton}
\end{figure}

\subsubsection{Two-soliton solutions}\label{subsubsec:twosol}

We first develop a method for generating exact initial data for two-soliton solutions. When $N=2$, the constraints \eqref{eq:constraint} reduce to 
\begin{equation}\label{eq:c1c2}
\begin{split}
&c_{1}\big(\lambda-\ii c_{1}^*\alpha(a_{1}-a_{1}^*+\ii\delta)-\ii c_{2}^*\alpha(a_{1}-a_{2}^*+\ii\delta)\big)+1=0, \\
&c_{2}	\big(\lambda-\ii c_{1}^*\alpha(a_{2}-a_{1}^*+\ii\delta)-\ii c_{2}^*\alpha(a_{2}-a_{2}^*+\ii\delta)\big)+1=0.
\end{split}
\end{equation}

The imaginary parts of \eqref{eq:c1c2} read
\begin{equation}\label{eq:c1c2Im}
\begin{split}
\im(c_{1})\lambda-\re\big(c_{1} c_{2}^*\alpha(a_{1}-a_{2}^*+\ii\delta)\big)=0, \\
\im(c_{2})\lambda-\re\big(c_{1}^* c_{2}\alpha(a_{2}-a_{1}^*+\ii\delta)\big)=0. \\
\end{split}	
\end{equation}
We assume  $\im(c_{1})\neq 0$, in which case the equations in \eqref{eq:c1c2Im} are satisfied provided that
\begin{equation}\label{eq:c1c2d}
c_{2}=-c_{1}+d
\end{equation}
for some $d\in \R$ (i.e., $\im(c_1+c_2)=0$) and that
\begin{equation}\label{eq:lambda}
\lambda=-\frac{\re\big( c_{1}(c_{1}^*-d)\alpha(a_{1}-a_{2}^*+\ii\delta)\big)}{\im(c_{1})}. 	
\end{equation}

Taking the real parts of \eqref{eq:c1c2} with \eqref{eq:c1c2d}--\eqref{eq:lambda} gives the following system,
\begin{align}\label{eq:c1dsystem}
&\frac{\re(c_{1})\re\big( c_{1}(c_{1}^*-d)\alpha(a_{1}-a_{2}^*+\ii\delta)\big)}{\im(c_{1})}+\ii|c_1|^2\alpha(a_1-a_1^*+\ii\delta)+\im\big( c_1 (c^*_1-d)\alpha(a_1-a_2^*+\ii\delta)\big)-1=0, \nonumber \\
&\begin{multlined}
\frac{\big(\re(c_1)-d\big)\re\big( c_1(c_1^*-d)\alpha(a_1-a_2^*+\ii\delta)\big)}{\im(c_1)}-\im\big(c_1^*(c_1-d)\alpha(a_2-a_1^*+\ii\delta)\big)\\
-\ii|c_1-d|^2\alpha(a_2-a_2^*+\ii\delta)+1=0.
\end{multlined}
\end{align}

The first equation in \eqref{eq:c1dsystem} is linear in $d$; provided the coefficient of $d$ in \eqref{eq:c1dsystem} is nonzero, we can solve explicitly for $d$ in terms of $c_1$, $a_1$, and $a_2$.  Substituting the resulting expression for $d$ into the second equation in \eqref{eq:c1dsystem} and clearing out denominators gives a seventh-degree polynomial equation in $\re(c_1)$ with real coefficients given by functions of $a_1$, $a_2$, and $\im(c_1)$. The coefficient of the highest-order term in this polynomial has the form $\im(c_1)f(a_1,a_2)$, where $f(a_1,a_2)$ is nonzero for generic $a_1,a_2$. Thus, we may choose $\im(c_1)\neq 0$ and generic $a_1,a_2$ satisfying \eqref{eq:imaj}--\eqref{eq:ajak} to obtain a seventh-degree polynomial in $\re(c_1)$ with real coefficients, which is guaranteed to have at least one real root. When this root together with our chosen values of $\im(c_1)$ and $a_1,a_2$ corresponds to a nonzero coefficient of $d$ in the first equation in \eqref{eq:c1dsystem}, we have obtained a solution of the constraints \eqref{eq:c1c2}, and admissible initial data for Theorem~\ref{thm:solitons}. An example of such a solution is provided in Figs.~\ref{fig:2soliton}--\ref{fig:2solitoncs}.

\begin{figure}
  \includegraphics[width=\linewidth]{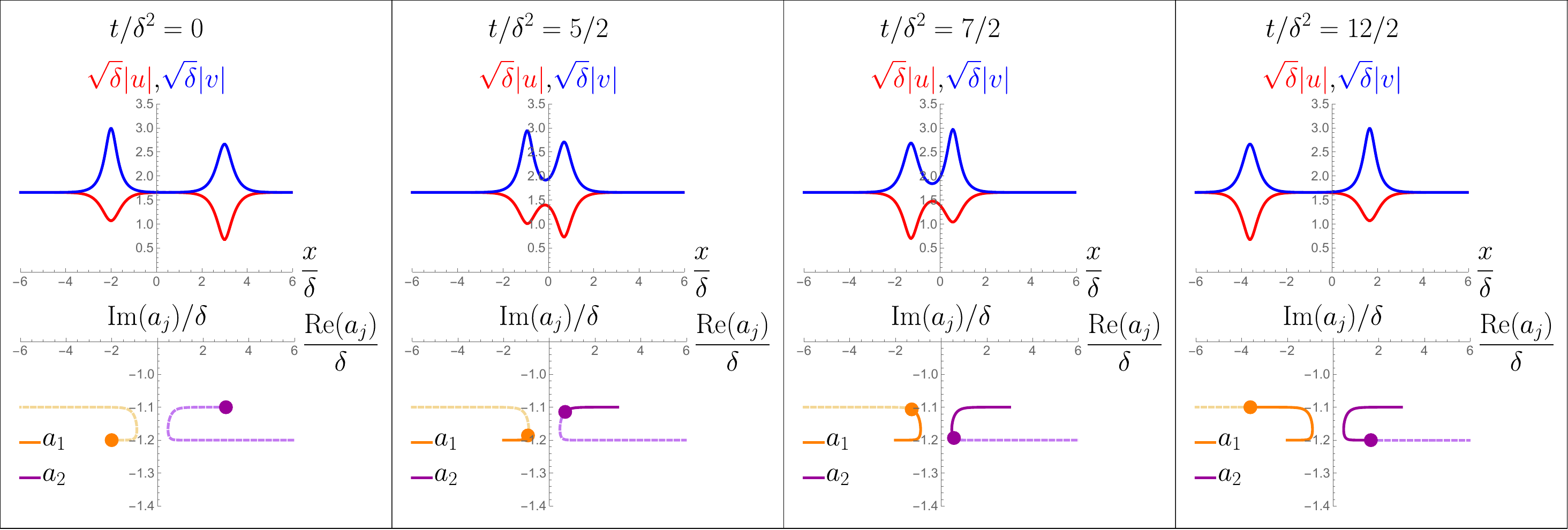}
  \caption{A two-soliton solution with the following (initial) parameter values: $\lambda\approx1.6376/\sqrt{\delta}$, $a_1(0)=-(2+12\ii/10)\delta$, $a_2(0)=(3-11\ii/10)\delta$, $c_1(0)\approx (0.29908 + \ii)\sqrt{\delta}$, $c_2(0)\approx -(0.14219 +  \ii)\sqrt{\delta}$ (the initial velocities of $a_1,a_2$ are determined by \eqref{eq:ajdot}). The resulting dynamics of the parameters is obtained from \eqref{eq:CM} and \eqref{eq:cjdot}. The (amplitudes of the) solution to the IMM system is plotted in the upper frames at four points in time. The corresponding dynamics of $a_1,a_2$ is plotted in the lower frames, where dots indicate the values at each time, the bold lines show the trajectories since $t=0$, and the dotted line shows the future trajectories (until they exceed the range of the plot).}
  \label{fig:2soliton}
\end{figure}

\begin{figure}
\centering
  \includegraphics[width=7cm]{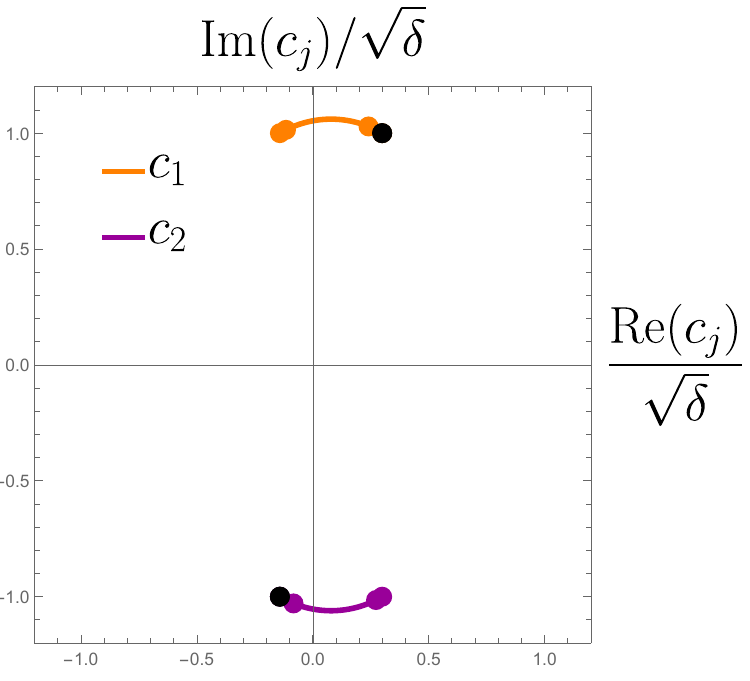}
  \caption{The trajectories of the parameters $c_1,c_2$ in the two-soliton solution in Fig.~\ref{fig:2soliton}. The black dots indicate the values at $t=0$, while the colored dots indicate the values at the subsequent times depicted in Fig.~\ref{fig:2soliton}. Note that some colored dots are very close or coincident, as $c_1,c_2$ only move appreciably when $a_1,a_2$ are close together.}
  \label{fig:2solitoncs}
\end{figure}

\subsubsection{Three-soliton solutions}\label{subsubsec:threesol}

The ideas from Section~\ref{subsubsec:twosol} can be extended to treat higher-$N$ cases, though the systems of polynomial equations become more complicated and we only present key details. When $N=3$, the constraints \eqref{eq:constraint} reduce to 
\begin{equation}\label{eq:c1c2c3}
\begin{split}
&c_1\big(\lambda-\ii c_1^*\alpha(a_1-a_1^*+\ii\delta)-\ii c_2^*\alpha(a_1-a_2^*+\ii\delta)-\ii c_3^*\alpha(a_1-a_3^*+\ii\delta)\big)+1=0, \\
&c_2	\big(\lambda-\ii c_1^*\alpha(a_2-a_1^*+\ii\delta)-\ii c_2^*\alpha(a_2-a_2^*+\ii\delta)-\ii c_3^*\alpha(a_2-a_3^*+\ii\delta)    \big)+1=0, \\
&c_3	\big(\lambda-\ii c_1^*\alpha(a_3-a_1^*+\ii\delta)-\ii c_2^*\alpha(a_3-a_2^*+\ii\delta)-\ii c_3^*\alpha(a_3-a_3^*+\ii\delta)   \big)+1=0.
\end{split}
\end{equation}

Similarly to before, by considering linear combinations of the imaginary parts of \eqref{eq:c1c2c3}, it may be shown that
\begin{equation}\label{eq:c3c1c2}
c_3=-(c_1+c_2)+d	
\end{equation}
for some $d\in \R$ (i.e., $\im(c_1+c_2+c_3)=0$ and, provided $\im(c_1+c_2)\neq 0$, 
\begin{equation}\label{eq:lambdac1c2c3}
\lambda=-\frac{\re\big(c_1(c_1^*+c_2^*-d)\alpha(a_1-a_3^*+\ii\delta)+c_2(c_1^*+c_2^*-d)\alpha(a_2-a_3^*+\ii\delta)  \big)}{\im(c_1+c_2)}	
\end{equation}
must hold.

By substituting \eqref{eq:c3c1c2}--\eqref{eq:lambdac1c2c3} into \eqref{eq:c1c2c3}, clearing out denominators, and taking the real and imaginary parts of each equation, we obtain a system of four independent polynomial equations in the variables $\re(c_1)$, $\re(c_2)$, $\im(c_1)$, $\im(c_2)$, and $d$ with real coefficients. Once $a_1,a_2,a_3$ are specified, these equations, together with the requirement that $\im(c_1+c_2)\neq 0$, can be solved by standard methods (when a solution exists). In Figs.~\ref{fig:3soliton}--\ref{fig:3solitoncs}, we provide an example of a three-soliton solution generated using this procedure.  

\begin{figure}
  \includegraphics[width=\linewidth]{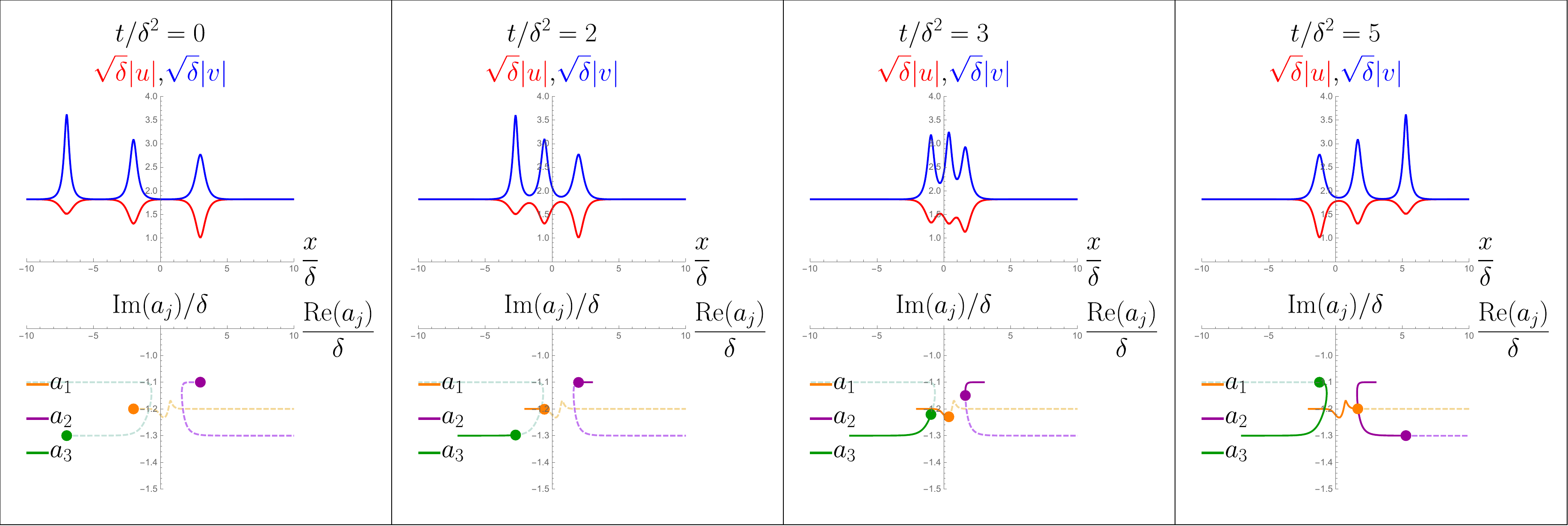}
  \caption{A three-soliton solution with the following (initial) parameter values: $\lambda\approx -1.1187/\sqrt{\delta}$, $a_1(0)=-(2+12\ii/10)\delta$, $a_2(0)=(3-11\ii/10)\delta$, $a_3(0)=-(7+13\ii/10)\delta,c_1(0)\approx (0.81989 - 0.77721\ii)\sqrt{\delta}$, $c_2(0)\approx (-0.77840 + 0.83807\ii)\sqrt{\delta}$, $c_3(0)\approx -(0.95540 + 0.060857\ii)\sqrt{\delta}$ (the initial velocities of $a_1,a_2,a_3$ are determined by \eqref{eq:ajdot}). The resulting dynamics of the parameters is obtained from \eqref{eq:CM} and \eqref{eq:cjdot}. The (amplitudes of the) solution to the IMM system \eqref{eq:IMM}--\eqref{eq:TT} is plotted in the upper frames at four points in time. The corresponding dynamics of $a_1,a_2,a_3$ is plotted in the lower frames, where dots indicate the values at each time, the bold lines show the trajectories since $t=0$, and the dotted lines show the future trajectories (until they exceed the range of the plot).}
  \label{fig:3soliton}
\end{figure}

\begin{figure}
\centering
  \includegraphics[width=7cm]{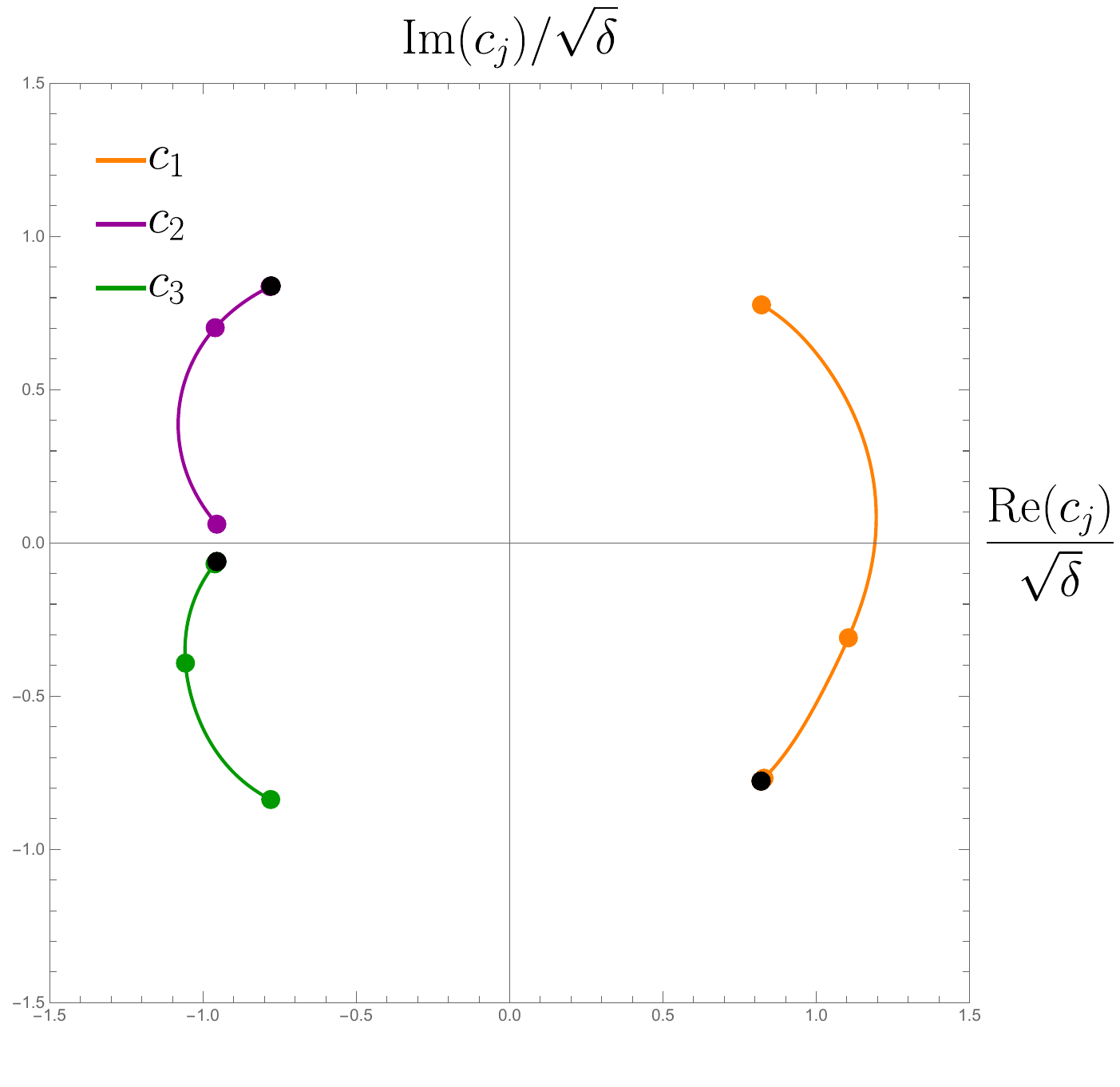}
  \caption{The trajectories of the parameters $c_1,c_2,c_3$ in the three-soliton solution in Fig.~\ref{fig:3soliton}. The black dots indicate the values at $t=0$, while the colored dots indicate the values at the subsequent times depicted in Fig.~\ref{fig:3soliton}. Note that some colored dots are very close or coincident, as each $c_j$ $(j=1,2,3)$ only moves appreciably when its corresponding parameter $a_j$ is close to at least one other $a_k$ $(k\neq j)$. }
  \label{fig:3solitoncs}
\end{figure}

\section{Periodic case}\label{sec:periodic}

We consider the IMM system \eqref{eq:IMM} with spatially periodic boundary conditions $u(x+2\ell,t)=u(x,t)$ and $v(x+2\ell,t)=v(x,t)$ for some $\ell>0$. In this case, the $T$ and $\tilde{T}$ operators \eqref{eq:TT} can be represented as integral operators on the interval $[-\ell,\ell)$ with a certain $2\ell$-periodic kernel; we refer to \cite{berntsonlangmann2021} for justification of this claim in a similar context. We define the periodic IMM system to be \eqref{eq:IMM} with 
\begin{equation}\label{eq:TTe}
\begin{split}
({Tf})(x)\coloneqq &\; \frac{1}{\pi}\pvint_{-\ell}^{\ell} \zeta_1(x'-x;\ell,\ii\delta)f(x')\,\mathrm{d}x', \\
({\tilde{T}f})(x)\coloneqq &\;\frac{1}{\pi}\int_{-\ell}^{\ell} \zeta_1(x'-x+\ii\delta;\ell,\ii\delta)f(x')\,\mathrm{d}x' ,
\end{split}
\end{equation}
where 
\begin{equation}\label{eq:zeta1}
\zeta_1(z;\ell,\ii\delta)\coloneqq  \zeta(z;\ell,\ii\delta)-\frac{\zeta(\ell;\ell,\ii\delta)}{\ell}z,
\end{equation}
with $\zeta(z;\ell,\ii\delta)$ the Weierstrass $\zeta$-function with half-periods $\ell$ and $\ii\delta$. 

We will adapt the main result of Section~\ref{sec:solitons} to the spatially periodic setting. A key object in doing so is the elliptic CM system, Case IV of \eqref{eq:CM}--\eqref{eq:V}. In order to construct solutions of the periodic IMM system controlled by the elliptic CM system, we modify the ansatz \eqref{eq:ansatz} in a number of ways. 

We replace the hyperbolic special function $\alpha(z)$ defined in \eqref{eq:alpha} with a particular elliptic generalization,
\begin{equation}\label{eq:alphaelliptic}
\alpha(z)\coloneqq \zeta_2(z;\ell,\ii\delta),
\end{equation}
where
\begin{equation}\label{eq:zeta2}
\zeta_2(z;\ell,\ii\delta)\coloneqq \zeta(z;\ell,\ii\delta)-\frac{\zeta(\ii\delta;\ell,\ii\delta)}{\ii\delta}z.
\end{equation}
We further define
\begin{equation}\label{eq:wp2}
\wp_2(z;\ell,\ii\delta)\coloneqq -\partial_z \zeta_2(z;\ell,\ii\delta)=\wp(z;\ell,\ii\delta)+\frac{\zeta(\ii\delta;\ell,\ii\delta)}{\ii\delta};
\end{equation}
note that $V(z)$ in Case IV of \eqref{eq:V} is thus equal to $-\alpha'(z)$. Basic properties of and identities for these special functions can be found in Appendix~\ref{app:functional}. 

Our main result on the periodic IMM system is that it admits solutions in the form
\begin{equation}\label{eq:ansatzelliptic}
\left(\begin{array}{c} u(x,t) \\ v(x,t) \end{array}\right) = \ee^{2\ii N \cc t}  \Bigg( \lambda(t) \left(\begin{array}{c} 1 \\ -1 \end{array}\right)  + \ii\sum_{j=1}^N c_j(t) \left(\begin{array}{c} \alpha(x-a_j(t)-\ii\delta/2) \\ -\alpha(x-a_j(t)+\ii\delta/2)   \end{array}\right)\Bigg),
\end{equation}
where 
\begin{equation}\label{eq:gamma0}
\cc\coloneqq \frac{\pi}{2\ell \delta};	
\end{equation}
observe that $\lambda$ has been promoted to a dynamical quantity. 

More specifically, we will show that the periodic IMM system has solutions in the form \eqref{eq:ansatzelliptic} with the parameters satisfying the following conditions: $\{a_j\}_{j=1}^N$ solve the elliptic CM system, Case IV of \eqref{eq:CM}--\eqref{eq:V}, $\{c_j\}_{j=1}^N$ satisfy the constraint\footnote{If the constraint \eqref{eq:cjconstraint} is satisfied at $t=0$, it satisfied at future times; see Lemma~\ref{lem:cjconstraint} for details. Hence, in Theorem~\ref{thm:ellipticsolitons}, \eqref{eq:cjconstraint} is only imposed at $t=0$.}
\begin{equation}\label{eq:cjconstraint}
\sum_{j=1}^N c_j =0
\end{equation}
and solve the ODEs \eqref{eq:cjdot} with $V(z)$ in Case IV of \eqref{eq:V}, and $\lambda$ solves
\begin{equation}\label{eq:lambdadot}
\dot{\lambda}=\frac12\sum_{j=1}^N\sum_{k\neq j}^N (c_j-c_k)\ftwo'(a_j-a_k),
\end{equation}
where
\begin{equation}\label{eq:f2}
\ftwo(z)\coloneqq \alpha(z)^2-V(z), 
\end{equation}
with $\alpha(z)$ and $V(z)$
 given by \eqref{eq:alphaelliptic} and Case IV of \eqref{eq:V}, respectively. (As in the case of Theorem~\ref{thm:solitons}, there are constraints on the initial conditions of these parameters as well as technical conditions on the behavior of $\{a_j\}_{j=1}^N$; see Theorem~\ref{thm:ellipticsolitons} below for the precise statement of the result described above.)
 
 It is interesting to note that the constraint \eqref{eq:cjconstraint} means that no nontrivial traveling wave solutions with $N=1$ can be obtained from the ansatz \eqref{eq:ansatzelliptic}; correspondingly, the case $N=2$ of \eqref{eq:ansatzelliptic} with \eqref{eq:cjconstraint} is the simplest nontrivial one. 

The precise statement of the result described above is given in Section~\ref{subsec:resultelliptic} as Theorem~\ref{thm:ellipticsolitons} and the corresponding proof can be found in Appendix~\ref{app:proofs}. Analysis of the $N=2,3$ cases of Theorem~\ref{thm:ellipticsolitons} is performed in Section~\ref{subsec:examplesperiodic}, where visualizations of particular solutions are also provided. 

\subsection{Result}\label{subsec:resultelliptic}

Our result is stated and followed by several remarks and a discussion of its relation to Theorem~\ref{thm:solitons}. We emphasize that $\alpha(z)$ and $V(z)$ are given by  \eqref{eq:alphaelliptic} and Case IV of \eqref{eq:V}, respectively. 

\begin{theorem}\label{thm:ellipticsolitons}
For $N\in \Z_{\geq 2}$, let $\lambda$ and $\{a_j,c_j\}_{j=1}^N$ be a solution of the system of ODEs consisting of \eqref{eq:CM}, \eqref{eq:cjdot}, and \eqref{eq:lambdadot} on an interval $[0,\tau)$ for some $\tau\in (0,\infty)\cup\{\infty\}$ and with initial conditions that satisfy
\begin{equation}\label{eq:constraintperiodic}
c_j \Bigg(\lambda^*-\ii\sum_{k=1}^N c_k^*\alpha(a_j-a_k^*+\ii\delta)\Bigg)+1=0	\quad (j=1,\ldots,N)
\end{equation}
 and \eqref{eq:cjconstraint} at $t=0$. Moreover, suppose that the conditions \eqref{eq:imaj} and 
 \begin{equation}\label{eq:ajakelliptic}
a_j\neq a_k \bmod 2\ell \quad (1\leq j<k\leq N)
 \end{equation}
 are satisfied for $t\in [0,\tau)$. Then, \eqref{eq:ansatzelliptic} solves the periodic IMM system, \eqref{eq:IMM} with \eqref{eq:TTe}, on $[0,\tau)$. 
\end{theorem}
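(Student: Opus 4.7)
The plan is to substitute the ansatz \eqref{eq:ansatzelliptic} into the two-vector form \eqref{eq:IMM2} of the periodic IMM system (with $T,\tilde T$ given by \eqref{eq:TTe}), regard $U$ as meromorphic in $x\in\C$, and extract equations by matching principal parts at the moving poles together with the background (zero-mode) contributions. Writing $U=\ee^{2\ii N\cc t}V$ with $V=\lambda(1,-1)^T+\ii\sum_{j}c_j A_+(\,\cdot-a_j)$, the phase cancels in $U\circ U^*=(|u|^2,-|v|^2)^T$. The first substantive step is to use the constraint \eqref{eq:constraintperiodic} to show that the residue of $|u|^2$ at each pole $a_j+\ii\delta/2$ of $u$ collapses to $-\ii$, and analogously for $|v|^2$ at $a_j-\ii\delta/2$ (with residues $+\ii$ at the reflected poles $\bar a_j\mp\ii\delta/2$). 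Since \eqref{eq:cjconstraint} makes $u$ and $v$ genuinely $2\ell$-periodic (and $\alpha=\zeta_2$ is $2\ii\delta$-periodic), $|u|^2$ and $|v|^2$ are elliptic, so a Liouville-type argument on the torus gives
\begin{equation*}
(U\circ U^*)_x=-\ii\sum_{j=1}^N\bigl[A_+'(x-a_j)-A_-'(x-\bar a_j)\bigr].
\end{equation*}

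Second, I apply $\ii+\cT$ using the eigenfunction identity $(\cT A_\pm'(\,\cdot-a))(x)=\pm\ii A_\pm'(x-a)$ (valid in the elliptic setting under \eqref{eq:imaj} and analogous to the identity noted in Remark~3 of Section~\ref{subsubsec:remarks}). This annihilates the $A_-'$ terms at the reflected poles and doubles those at the $a_j$, yielding $(\ii+\cT)(U\circ U^*)_x=2\sum_jA_+'(x-a_j)$. Using the pleasant relation $(1,-1)^T\circ A_+'=A_+'$ and the bilinearity of $\circ$, the nonlinear term $U\circ 2\sum_jA_+'(\,\cdot-a_j)$ becomes an explicit sum of products $A_+(\,\cdot-a_k)\circ A_+'(\,\cdot-a_j)$ plus $2\lambda\sum_jA_+'(\,\cdot-a_j)$. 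I then compute $\ii U_t=\ee^{2\ii N\cc t}(-2N\cc V+\ii V_t)$ and $U_{xx}=\ee^{2\ii N\cc t}V_{xx}$. Dividing by $\ee^{2\ii N\cc t}$ and matching principal parts at $x=a_j+\ii\delta/2$ in the first component (resp.\ at $a_j-\ii\delta/2$ in the second) order by order in the Laurent expansion: the triple poles cancel automatically, the double poles give the first-order relation \eqref{eq:ajdot}, the simple poles give the ODEs \eqref{eq:cjdot}, and the pair \eqref{eq:ajdot}--\eqref{eq:cjdot} together implies the elliptic CM equations \eqref{eq:CM} through the B\"acklund-type elimination indicated in Remark~4 of Section~\ref{subsubsec:remarks}.

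Third, and new to the periodic case, I match the regular background (the zero Fourier mode in the $(1,-1)^T$ direction). This produces $\ii\dot\lambda$ on one side and, on the other, $-2N\cc\lambda$ from the phase derivative plus the averages of $\lambda$ times $\sum_jA_+'(\,\cdot-a_j)$ and of the cross-terms $\sum_{j,k}c_kA_+(\,\cdot-a_k)\circ A_+'(\,\cdot-a_j)$. The average of $A_+'$ is $\cc(1,-1)^T$, because the quasi-periodicity $\alpha(z+2\ell)=\alpha(z)+\pi/\delta$ of $\zeta_2$ yields $\int_{-\ell}^\ell\alpha'(x-a-\ii\delta/2)\,\mathrm{d}x=\pi/\delta$; these contributions combine with the $-2N\cc\lambda$ coming from the phase so that the remaining identity reduces, upon invoking a functional identity of Appendix~\ref{app:functional} expressing combinations $\alpha(u)\alpha'(v)\pm\alpha(v)\alpha'(u)$ in terms of derivatives of $\ftwo=\alpha^2-V$, to exactly the stated ODE \eqref{eq:lambdadot}. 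To close the argument, I verify that the constraints \eqref{eq:cjconstraint} and \eqref{eq:constraintperiodic} are preserved by the flow: the former trivially by summing \eqref{eq:cjdot} and using the antisymmetry of $V$ (Lemma~\ref{lem:cjconstraint}), and the latter by differentiating its left-hand side in $t$ and collapsing via \eqref{eq:CM}, \eqref{eq:cjdot}, \eqref{eq:lambdadot}, and the same elliptic identities. The main obstacle is this combined regular-part-plus-preservation step: the elliptic identities required are non-obvious, and the careful separation of pole residues from genuinely regular contributions involves delicate bookkeeping, which is presumably why the authors defer the detailed proof to Appendix~\ref{app:proofs}.
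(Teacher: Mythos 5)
Your overall strategy---substituting the ansatz into \eqref{eq:IMM2}, computing $(U\circ U^*)_x$, applying $\ii+\cT$, and matching polar and background contributions---is the same as the paper's, but there is a critical error at the second step that breaks the rest of the argument. You invoke the eigenfunction identity $(\cT A_{\pm}'(\cdot-a))(x)=\pm\ii A_{\pm}'(x-a)$, citing the analogue of Remark~3 of Section~\ref{subsubsec:remarks}. That identity holds only in the hyperbolic (real-line) setting. In the periodic/elliptic setting the paper explicitly warns (Remark~2 of Section~\ref{subsubsec:remarksperiodic}, Lemma~\ref{lem:cT}) that $A_{\pm}'$ are \emph{not} eigenfunctions of $\cT$; instead \eqref{eq:TA} holds, with extra constant terms $2\ii\cc\,(0,1)^{\mathrm T}$ and $2\ii\cc\,(1,0)^{\mathrm T}$. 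Carrying these through, one finds
\begin{equation*}
(\ii+\cT)(U\circ U^*)_x=-2N\cc E+2\sum_{j=1}^N A_+'(x-a_j),
\end{equation*}
not $2\sum_j A_+'(x-a_j)$ as you claim. The missing term $-2N\cc E$ is essential: after $\circ$-multiplication by $U$ it contributes $-2N\cc U$, which is precisely what cancels the $-2N\cc U$ produced by differentiating the phase $\ee^{2\ii N\cc t}$ in $\ii U_t$. Because $U$ itself has poles, this missing contribution is not a pure background term---it feeds into the pole matching, so under your identity the residue conditions would not reduce to \eqref{eq:cjdot} and \eqref{eq:ajdot}; and your attempt to recover the $\cc$-dependence afterwards from the period-average of $A_+'$ cannot repair this, since an average only sees the constant ($E$-direction) part of the discrepancy.

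Two smaller points. First, your B\"acklund-type elimination runs in the wrong direction for the theorem as stated: the hypotheses give \eqref{eq:CM} on $[0,\tau)$ but \eqref{eq:ajdot} only at $t=0$, so one must propagate \eqref{eq:ajdot} forward in time (the paper's Proposition~\ref{prop:CM}) rather than derive \eqref{eq:CM} from \eqref{eq:ajdot}--\eqref{eq:cjdot}. Second, the preservation of \eqref{eq:constraintperiodic} is not a routine collapse: it is the longest computation in the paper (Proposition~\ref{prop:constraints}), and requires showing that the quantities $D_j$ in \eqref{eq:Dj} satisfy a closed linear ODE system, which uses \eqref{eq:lambdadot}, the constraint \eqref{eq:cjconstraint}, and the elliptic addition identity \eqref{eq:Djdotid1} in an essential way; gesturing at ``collapsing via the same elliptic identities'' does not establish it.
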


\subsubsection{Remarks on Theorem~\ref{thm:ellipticsolitons}}\label{subsubsec:remarksperiodic}

We give several comments on our result, some of which highlight differences versus Theorem~\ref{thm:solitons}. 

\begin{enumerate}
\item The ansatz \eqref{eq:ansatz} with \eqref{eq:alphaelliptic} is not  $2\ell$-periodic for generic $\{c_j\}_{j=1}^N$ because $\zeta_2(z)$ is only quasi-$2\ell$-periodic \eqref{eq:realperiod} and this necessitates the constraint \eqref{eq:cjconstraint}.  
\item The derivatives of the functions $A_{\pm}(z)$ in \eqref{eq:Apm} with $\alpha(z)$ as in \eqref{eq:alphaelliptic} are no longer eigenfunctions of the periodic $\cT$ operator \eqref{eq:cT} with \eqref{eq:TTe}, but instead satisfy the relations \cite{berntson2022elliptic}
\begin{equation}\label{eq:TA}
(\cT A_{+}'(\cdot-a_j))(x)=\ii A_+'(x-a_j)+2\ii \cc \left(\begin{array}{c} 0 \\ 1 \end{array}\right), \qquad \cT (A_{-}'(\cdot-a_j))(x)=-\ii A_-'(x-a_j^*)+2\ii \cc \left(\begin{array}{c} 1 \\ 0 \end{array}\right).
\end{equation}
The prefactor $\ee^{2\ii N\cc t}$ is required to make the ansatz \eqref{eq:ansatzelliptic} consistent in the presence of new terms proportional to $\gamma_0$ generated through \eqref{eq:TA} (see Proposition~\ref{prop:ansatz} and its proof in Appendix~\ref{app:proofs} for details). 

\item The proof of Theorem~\ref{thm:ellipticsolitons} involves the use of functional identities for $\alpha(z)$ and $V(z)$ at key points. These identities involve more terms in the elliptic case than in the hyperbolic case (see Appendix~\ref{app:functional} for details). To make the ansatz \eqref{eq:ansatzelliptic} consistent in the presence of these new terms, we promote $\lambda$ to a complex dynamical quantity (see Proposition~\ref{prop:ansatz} and its proof in Appendix~\ref{app:proofs} for details).
\end{enumerate}

\subsubsection{The relation between Theorems~\ref{thm:solitons} and \ref{thm:ellipticsolitons}} \label{subsubsec:limit}

We described the key changes to adapt Theorem~\ref{thm:solitons} to the spatially periodic setting in Section~\ref{subsubsec:remarksperiodic}. To recover a special case of Theorem~\ref{thm:solitons} from Theorem~\ref{thm:ellipticsolitons}, one takes a limit as $\ell\to \infty$ in Theorem~\ref{thm:ellipticsolitons}. In this limit, the periodic IMM system \eqref{eq:IMM} with \eqref{eq:TTe} becomes the real-line IMM system \eqref{eq:IMM} with \eqref{eq:TT} and the elliptic functions $\alpha(z)$ and $V(z)$ degenerate to their hyperbolic counterparts (see Appendix~\ref{subsec:hyperbolic} for details). It is then straightforward to verify that Theorem~\ref{thm:ellipticsolitons} essentially becomes Theorem~\ref{thm:solitons} (note that \eqref{eq:ftwolimit} implies that \eqref{eq:lambdadot} becomes $\dot{\lambda}=0$; we may assume $\lambda$ is a real constant without loss of generality by \eqref{eq:U1timesU1}). The outstanding issue is the constraint \eqref{eq:cjconstraint}, which is essential in the periodic case but superfluous (though not inconsistent) in the real-line case. Thus, in the limit described above, Theorem~\ref{thm:ellipticsolitons} becomes a specialization of Theorem~\ref{thm:solitons} with the additional constraint \eqref{eq:cjconstraint}. 

Practically, to prove Theorem~\ref{thm:solitons}, one takes the proof of Theorem~\ref{thm:ellipticsolitons} in Appendix~\ref{app:proofs}, drops the condition \eqref{eq:cjconstraint}, and makes the replacements $\ftwo(z)\to (\pi/2\delta)^2$ \eqref{eq:ftwolimit} and $\gamma_0\to 0$ throughout.

\subsection{Examples of solutions}\label{subsec:examplesperiodic}

Analogously to Section~\ref{subsec:examples}, where methods for generating and examples of soliton solutions are provided, we consider the cases $N=2,3$ of Theorem~\ref{thm:ellipticsolitons} (recall that the case $N=1$ does not exist). The key step, as before, is solving the nonlinear constraints \eqref{eq:constraintperiodic}, now subject to \eqref{eq:cjconstraint}. The $N=2$ case is solved in full, giving an algorithm for finding all solutions of the constraints \eqref{eq:cjconstraint} with \eqref{eq:constraintperiodic} when $N=2$. In the case $N=3$, we manipulate the constraints into a manageable form and provide a method for generating a restricted class of solutions. We also provide visualizations of solutions of the periodic IMM system resulting from these methods. 

The squared amplitudes of the solutions of the periodic IMM system constructed by Theorem~\ref{thm:ellipticsolitons} are given by
\begin{equation}\label{eq:moduliperiodic}
\left(\begin{array}{c} |u|^2 \\ |v|^2 \end{array}\right)	= B       \left(\begin{array}{c} 1 \\ 1     \end{array}\right)-\ii \sum_{j=1}^N \left(\begin{array}{c} \alpha(x-a_j-\ii\delta/2)-\alpha(x-a_j^*+\ii\delta/2) \\
\alpha(x-a_j+\ii\delta/2)-\alpha(x-a_j^*-\ii\delta/2)     \end{array}\right),
\end{equation}
where
\begin{equation}\label{eq:B(t)}
B(t)\coloneqq |\lambda(t)|^2+\frac12\sum_{j=1}^N\sum_{k=1}^N c_j(t)c_k(t)^* \ftwo \big(a_j(t)-a_k(t)^*+\ii\delta \big);	
\end{equation}
see Appendix~\ref{app:moduli} for a proof. As in the real-line case, $B$ is apparently dynamical, but actually conserved in time; see Appendix~\ref{app:B} for a proof. Thus, the dynamics of $\lambda$ and $\{c_j\}_{j=1}^N$ play no role in \eqref{eq:moduliperiodic}--\eqref{eq:B(t)}.

As in the real-line case, our solutions exhibit\footnote{To see this, one again verifies that (when \eqref{eq:imaj} holds) $f_{j,\pm}(x)=-\ii(\alpha(x-a_j\mp\ii\delta/2)-\alpha(x-a_j^*\pm\ii\delta/2))\lessgtr 0$ is satisfied for $x\in [-\ell,\ell)$.} ``dark" excitations in the $u$-amplitude and ``bright" excitations in the $v$-amplitude. We provide visualizations of example solutions below with units obtained from \eqref{eq:units} and $\ell$ expressed in units of $\delta$.\footnote{To see that \eqref{eq:units} is still applicable when $\alpha(z)$ and $V(z)$ are given by \eqref{eq:alphaelliptic} and Case IV of \eqref{eq:V}, respectively, we recall the scaling formulas for the Weierstrass $\zeta$- and $\wp$-functions \cite[Eqs.~(23.10.17--23.10.18)]{DLMF},
\begin{equation}\label{eq:Weierstrassscalings}
\zeta(cz;c\ell,c\delta)=c^{-1}\zeta(z;\ell,\delta) ,\qquad \wp(cz;c\ell,c\delta)=c^{-2}\wp(z;\ell,\delta),	
\end{equation}
valid for arbitrary $c\in \C\setminus\{0\}$ and observe that \eqref{eq:Weierstrassscalings} extends to the $\zeta_2$- and $\wp_2$-functions defined in \eqref{eq:zeta2} and \eqref{eq:wp2}, respectively. 
   }   
\subsubsection{Two-wave solutions}\label{subsec:2solelliptic}

We consider the constraints \eqref{eq:constraint} with $N=2$ at some fixed time; by imposing \eqref{eq:cjconstraint} as $c_2=-c_1$, using the $\mathrm{U}(1)\times \mathrm{U}(1)$ invariance of the IMM system \eqref{eq:U1timesU1}, we may assume $\lambda\in \R$ (at this fixed time). The constraints are thus given by \eqref{eq:c1c2} (with $\alpha(z)$ defined in \eqref{eq:alphaelliptic}), which we rearrange to 
\begin{equation}\label{eq:N=2constraints}
\begin{split}
&|c_{1}|^2\bigg(\frac{\lambda}{c_{1}^*}-\ii\big(\alpha(a_{1}-a_{1}^*+\ii\delta)-\alpha(a_{1}-a_{2}^*+\ii\delta)\big)\bigg)=-1, \\
&|c_{1}|^2\bigg(\frac{\lambda}{c_{1}^*}+\ii\big(\alpha(a_{2}-a_{1}^*+\ii\delta)-\alpha(a_{2}-a_{1}^*+\ii\delta)\big)\bigg)=1.
\end{split}
\end{equation}
By adding and subtracting the equations in \eqref{eq:N=2constraints}, we obtain
\begin{equation}\label{eq:N=2constraints2}
\begin{split}
&\ii\big(\alpha(a_{1}-a_{1}^*+\ii\delta)+\alpha(a_{2}-a_{2}^*+\ii\delta)-\alpha(a_{1}-a_{2}^*+\ii\delta)-\alpha(a_{2}-a_{1}^*+\ii\delta)\big)=\frac{2}{|c_{1}|^2}, \\
&\frac{\lambda}{c_{1}^*}=\frac{\ii}{2}\big(\alpha(a_{1}-a_{1}^*+\ii\delta)-\alpha(a_{2}-a_{2}^*+\ii\delta)-\alpha(a_{1}-a_{2}^*+\ii\delta)+\alpha(a_{2}-a_{1}^*+\ii\delta)\big).
\end{split}
\end{equation}
The first equation in \eqref{eq:N=2constraints2} is equivalent to 
\begin{equation}\label{eq:N=2constraints3}
g(\re(a_{1}-a_{2}))=\ii\big(\alpha(2\ii\,\im(a_{1})+\ii\delta)+\alpha(2\ii\,\im(a_{2})+\ii\delta)\big)-\frac{2}{|c_{1}|^2},
\end{equation}
where
\begin{equation}\label{eq:g}
g(z)\coloneqq \ii\big(\alpha(z+\ii\,\im(a_{1}+a_{2})+\ii\delta)-\alpha(z-\ii\,\im(a_{1}+a_{2})+\ii\delta)\big).
\end{equation}
We note that both sides of \eqref{eq:N=2constraints3} are real by the invariance of $\alpha(z)$ under Schwarz conjugation \eqref{eq:Schwarz}. We suppose the imaginary parts of $a_{1},a_{2}$ satisfying \eqref{eq:imaj} are given and investigate the solvability of \eqref{eq:N=2constraints3} for $\re(a_{1}-a_{2})$. To do this, we first analyze the function $g(z)$. 

The function $g(z)$ is real-valued for real arguments and $2\ell$-periodic by \eqref{eq:realperiod}. We determine its extrema for $z\in [0,2\ell)$. The derivative of $g(z)$ is found, using \eqref{eq:alphaelliptic} and \eqref{eq:wp2}, to be
\begin{equation}\label{eq:gprime}
g'(z)=-\ii\big(V(z+\ii\,\im(a_{1}+a_{2})+\ii\delta)-V(z-\ii\,\im(a_{1}+a_{2})+\ii\delta)\big),
\end{equation}
which is clearly an elliptic function. In the nondegenerate case, where $2\ii\,\im(a_{1}+a_{2})\neq 0 \bmod 2\ii\delta$ (otherwise $g(z)$ is the zero function), $g'(z)$ is a degree-four elliptic function, taking each value four times, counting multiplicity, within a period parallelogram. By inspecting \eqref{eq:gprime}, we see that within a period parallelogram, the set of zeroes of $g'(z)$ is $\{0,\ell,\ii\delta,\ell+\ii\delta\}$, all of which are simple. By inserting the first two of these into \eqref{eq:g}, we find that
\begin{equation}\label{eq:N=2constraints4}
0\leq  \sigma g(x)  \leq \big|2\ii\alpha(\ii\,\im(a_{1}+a_{2})+\ii\delta)\big| \quad (x\in \R),
\end{equation}
where 
\begin{equation}\label{eq:sigma}
\sigma = \mathrm{sgn}(2\ii\alpha(\ii\,\im(a_{1}+a_{2})+\ii\delta)).
\end{equation}
The equation \eqref{eq:N=2constraints2} is solvable when its right-hand side lies in the range of $g(x)$,
\begin{equation}\label{eq:N=2constraints5}
0\leq  \sigma\bigg(\ii\big(\alpha(2\ii\,\im(a_{1})+\ii\delta)+\alpha(2\ii\,\im(a_{2})+\ii\delta)\big)- \frac{2}{|c_{1}|^2}\bigg)    \leq \big|2\ii\alpha(\ii\,\im(a_{1}+a_{2})+\ii\delta)\big|.
\end{equation}

We now make use of the observations above. Suppose $\im(a_{1})$, $\im(a_{2})$ and $|c_{1}|$ are chosen so that \eqref{eq:N=2constraints5} holds. Then, \eqref{eq:N=2constraints3} can be solved for $\re(a_{1}-a_{2})$. Using this solution, the first equation in \eqref{eq:N=2constraints2} holds. The second equation in \eqref{eq:N=2constraints2} is then solved after choosing a phase for $c_{1}$ such that the solution for $\lambda$ is real. This provides admissible initial data for Theorem~\ref{thm:ellipticsolitons}; an example solution corresponding to such initial data is given in Figs.~\ref{fig:2sol_ell_1}--\ref{fig:2sol_ell_clambda_1}.  

\begin{remark}
When $N=2$ and $c_2=-c_1$ is imposed, the elliptic CM system, Case IV of \eqref{eq:CM}--\eqref{eq:V} is reduced to
\begin{equation}\label{eq:eom2wave}
\ddot{a}_1=-\ddot{a}_2=-4V'(a_1-a_2).
\end{equation}
By introducing the variables $a_{\pm}\coloneqq a_1\pm a_2$,  the first equation in \eqref{eq:eom2wave} can be written as the system $\ddot{a}_+=0$ and $\ddot{a}_-=-8V'(a_-)$, the latter of which is solved by the sixth Painlev\'{e} transcendent with particular parameter values; see \cite{manin1998} for details.
\end{remark}

\begin{figure}
  \includegraphics[width=\linewidth]{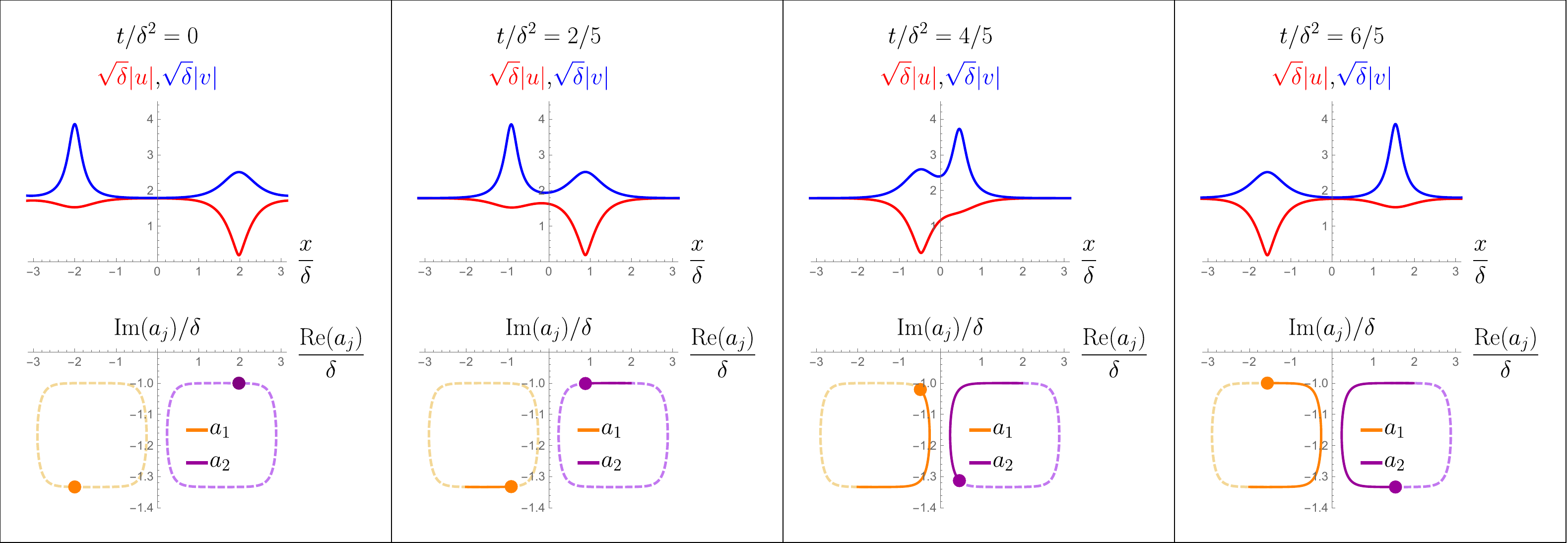}
  \caption{A periodic two-wave solution with $\ell=\pi\delta$ and the initial parameter values $\lambda(0)\approx 1.7836/\sqrt{\delta}$, $a_{1}(0)=-(2+4\ii/3)\delta$, $a_{2}(0)\approx (1.9780 - \ii)\delta$,  $c_{1}(0)=-c_{2}(0)\approx(0.56146 + 0.64878\ii)\sqrt{\delta}$ (the initial velocities of $a_1,a_2$ are determined by \eqref{eq:ajdot}. The resulting dynamics of the parameters are obtained from \eqref{eq:CM} and \eqref{eq:cjdot}. The (amplitudes of the) solution \eqref{eq:ansatz} to the periodic IMM system \eqref{eq:IMM} with \eqref{eq:TTe} is plotted in the upper frames at four points in time. The corresponding dynamics of $a_1,a_2$ is plotted in the lower frames, where dots indicate the values at each time, the bold lines show the trajectories since $t=0$, and the dotted lines show the future trajectories (which, in this example, are time-periodic).
}
  \label{fig:2sol_ell_1}
\end{figure}

\begin{figure}
  \includegraphics[width=\linewidth]{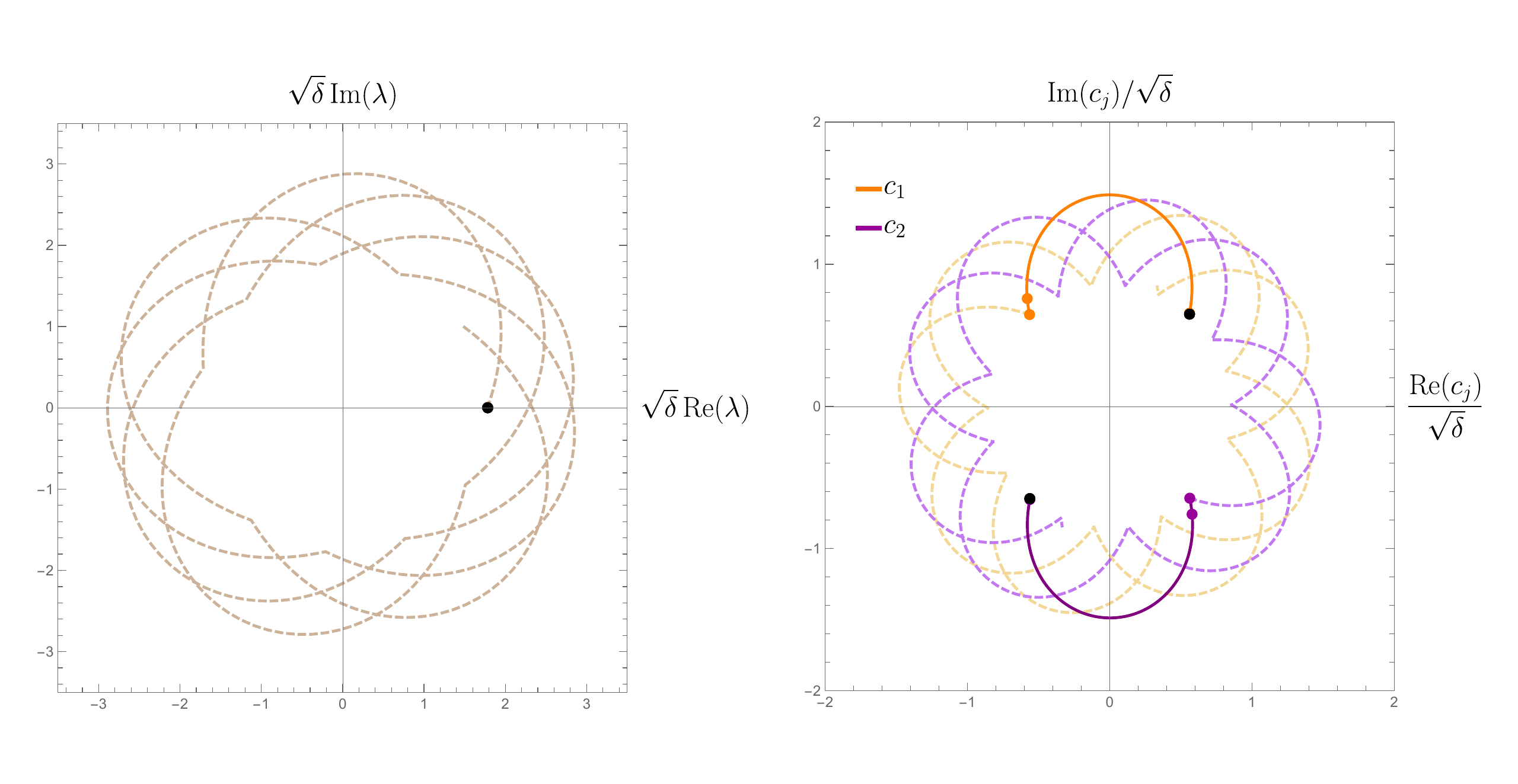}
  \caption{The trajectories of the parameters $\lambda$ and $c_1,c_2$  in the periodic two-wave solution in Fig.~\ref{fig:2sol_ell_1}. The black dots indicate the positions of $\lambda$ and $c_1,c_2$ at $t=0$. The parameter $\lambda$ is effectively static (equal to its initial value) on the time interval depicted in Fig.~\ref{fig:2sol_ell_1}, but the dotted line shows its future trajectory up to the maximal time $t=20 \delta^2$.  For the parameters $c_1,c_2$, the colored dots indicate the values at the subsequent times depicted in Fig.~\ref{fig:2sol_ell_1}, the bold lines show the trajectories up to the last time in Fig.~\ref{fig:2sol_ell_1} ($t=6\delta^2/5$), and the dotted lines show the future trajectories up to the maximal time $t=10\delta^2$.
}
  \label{fig:2sol_ell_clambda_1}
\end{figure}

\subsubsection{Three-wave solutions}

Similar to the three-soliton case considered in Section~\ref{subsubsec:threesol}, the question of solving constraints to generate three-wave initial data is significantly more difficult than its two-wave counterpart. We present a method to generate a restricted class of $N=3$ solutions from Theorem~\ref{thm:ellipticsolitons}. 

We consider the constraints \eqref{eq:constraint} with $N=3$ at some fixed time; by imposing \eqref{eq:cjconstraint} as $c_3=-c_1-c_2$, using the $\mathrm{U}(1)\times \mathrm{U}(1)$ invariance of the IMM system \eqref{eq:U1timesU1}, we may assume $\lambda\in \R$ (at this fixed time). The constraints are thus given by \eqref{eq:c1c2c3} (with $\alpha(z)$ defined in \eqref{eq:alphaelliptic}).

To proceed, we write
\begin{equation}
c_{2}=w c_{1} \qquad (w\in \C\setminus\{0\})
\end{equation}
so that $c_{3}=-(1+w)c_{1}$. Then, adding the three equations in \eqref{eq:c1c2c3} leads to 
\begin{align}\label{eq:c1c2c3rw1}
&\ii \big( \alpha(a_{1}-a_{1}^*+\ii\delta)+\alpha(a_{3}-a_{3}^*+\ii\delta)-\alpha(a_{1}-a_{3}^*+\ii\delta)-\alpha(a_{3}-a_{1}^*+\ii\delta)\big) \nonumber \\
&-2\,\im\big(w\big(\alpha(a_{2}-a_{1}^*+\ii\delta)-\alpha(a_{2}-a_{3}^*-\ii\delta)-\alpha(a_{3}-a_{1}^*+\ii\delta)+\alpha(a_{3}-a_{3}^*+\ii\delta)\big)\big)	\nonumber \\
&+\ii |w|^2\big( \alpha(a_{2}-a_{2}^*+\ii\delta)+\alpha(a_{3}-a_{3}^*+\ii\delta)-\alpha(a_{2}-a_{3}^*+\ii\delta)-\alpha(a_{3}-a_{2}^*+\ii\delta)\big)= \frac{3}{|c_{1}|^2}.
\end{align}

We rewrite \eqref{eq:c1c2c3rw1} as
\begin{multline}\label{eq:g13g23}
g_{13}(\re(a_{1}-a_{3}))+|w|^2 g_{23}(\re(a_{2}-a_{3}))\\=\ii \big(\alpha(2\ii\,\im(a_{1,0})+\ii\delta)+|w|^2\alpha(2\ii\,\im(a_{2})+\ii\delta)+(1+|w|^2)\alpha(2\ii\,\im(a_{3})+\ii\delta)\big)-\frac{3}{|c_{1}|^2}	 \\
-2\,\im\big(w\big(\alpha(a_{2}-a_{1}^*+\ii\delta)-\alpha(a_{2}-a_{3}^*-\ii\delta)-\alpha(a_{3}-a_{1}^*+\ii\delta)+\alpha(a_{3}-a_{3}^*+\ii\delta)\big)\big),
\end{multline}
where
\begin{equation}\label{eq:gjk}
g_{jk}(z)\coloneqq \ii\big(\alpha(z+\ii\,\im(a_{j}+a_{k})+\ii\delta)-\alpha(z-\ii\,\im(a_{j}+a_{k})+\ii\delta)\big) \quad (j,k=1,2,3),
\end{equation}
in generalization of \eqref{eq:g}. 

Note that \eqref{eq:g13g23} is independent of $\lambda$. To obtain an second equation independent of $\lambda$, we multiply the first equation in \eqref{eq:c1c2c3} by $w$ and subtract the second equation in \eqref{eq:c1c2c3}, yielding
\begin{align}\label{eq:w}
&-\ii w\big(\alpha(a_{1}-a_{1}^*+\ii\delta)-\alpha(a_{1}-a_{3}^*+\ii\delta)-\alpha(a_{2}-a_{1}^*+\ii\delta)+\alpha(a_{2}-a_{3}^*+\ii\delta)\big) \nonumber\\
&-\ii|w|^2\big(\alpha(a_{1}-a_{2}^*+\ii\delta)-\alpha(a_{1}-a_{3}^*+\ii\delta)-\alpha(a_{2}-a_{2}^*+\ii\delta)+\alpha(a_{2}-a_{3}^*+\ii\delta)\big)   =\frac{1-w}{|c_{1}|^2}. 	
\end{align}

Suppose the parameters $a_1,a_2,a_3$ can be chosen so that \eqref{eq:g13g23} and \eqref{eq:w} can be solved for $|c_1|$ and $w$ and consider the first equation in \eqref{eq:c1c2c3}, which we write as 
\begin{equation}\label{eq:threewavec1}
|c_1|^2\bigg(\frac{\lambda}{c_1^*}-\ii\big(\alpha(a_1-a_1^*+\ii\delta)+w^*\alpha(a_1-a_2^*+\ii\delta)-(1+w^*)\alpha(a_1-a_3^*+\ii\delta)\bigg)+1=0.
\end{equation}
By an appropriate choice of phase for $c_1$, \eqref{eq:threewavec1} can be solved to yield a real value for $\lambda$. While we have been unable to find a general procedure to determine $a_1,a_2,a_3$ such that \eqref{eq:g13g23} and \eqref{eq:w} are consistent equations for $|c_1|$ and $w$, we will discuss the simplified problem arising when the real parts of $a_1,a_2,a_3$ are chosen to be identical and $w$ is assumed to be real. 

Suppose $a_1,a_2,a_3$ are given such that $\re(a_1)=\re(a_2)=\re(a_3)$ and assume $w\in \R\setminus\{0\}$. Then, both \eqref{eq:g13g23} and \eqref{eq:w} are quadratic in $w$ and linear in $1/|c_1|^2$ with real coefficients. These equations are straightforwardly solved for $w$ and $|c_1|$ (when a solution exists), after which $\lambda\in \R$ can be found as described above. This provides admissible initial data for Theorem~\ref{thm:ellipticsolitons}; an example solution corresponding to such initial data is given in Figs.~\ref{fig:3sol_ell_1}--\ref{fig:3solitoncs_ell}. 

\begin{figure}
  \includegraphics[width=\linewidth]{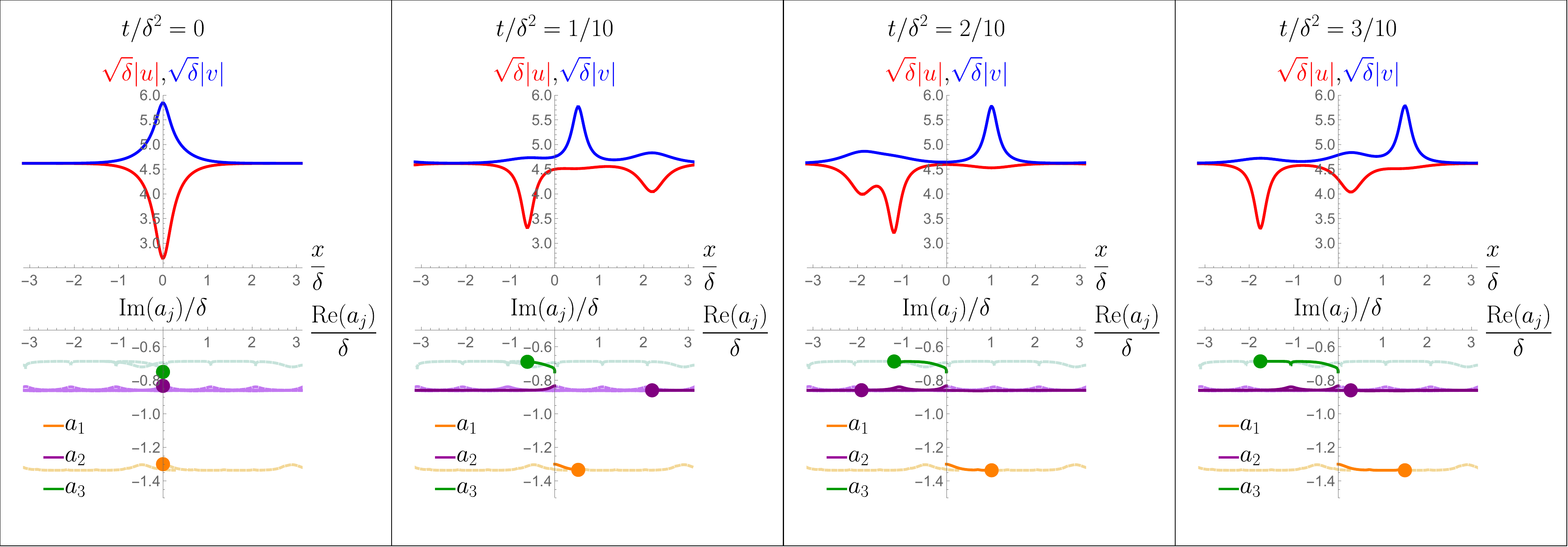}
  \caption{
A periodic three-wave solution with  $\ell= \pi\delta$ and the initial parameter values $\lambda(0) \approx -4.6166/\sqrt{\delta}$, $a_{1}(0)=-13\ii\delta/10$, $a_{2}(0)=-5\ii\delta/6$, $a_{3}(0)=-3\ii\delta/4$, $c_{1}(0)\approx-2.5617\sqrt{\delta}$, $c_{2}(0)\approx 1.0219\sqrt{\delta}$, $c_{3}(0)\approx 1.5398\sqrt{\delta}$ (the initial velocities of $a_1,a_2,a_3$ are determined by \eqref{eq:ajdot}). The resulting dynamics of the parameters is obtained from \eqref{eq:CM} and \eqref{eq:cjdot}. The (amplitudes of the) solution \eqref{eq:ansatz} to the periodic IMM system \eqref{eq:IMM} with \eqref{eq:TTe} is plotted in the upper frames at four points in time. The corresponding dynamics of $a_1,a_2,a_3$ is plotted in the lower frames, where dots indicate the values at each time, the bold lines show the trajectories since $t=0$, and the dotted lines show the future trajectories up to the maximal time $t=13\delta^2/10$.}
  \label{fig:3sol_ell_1}
\end{figure}

\begin{figure}
  \includegraphics[width=\linewidth]{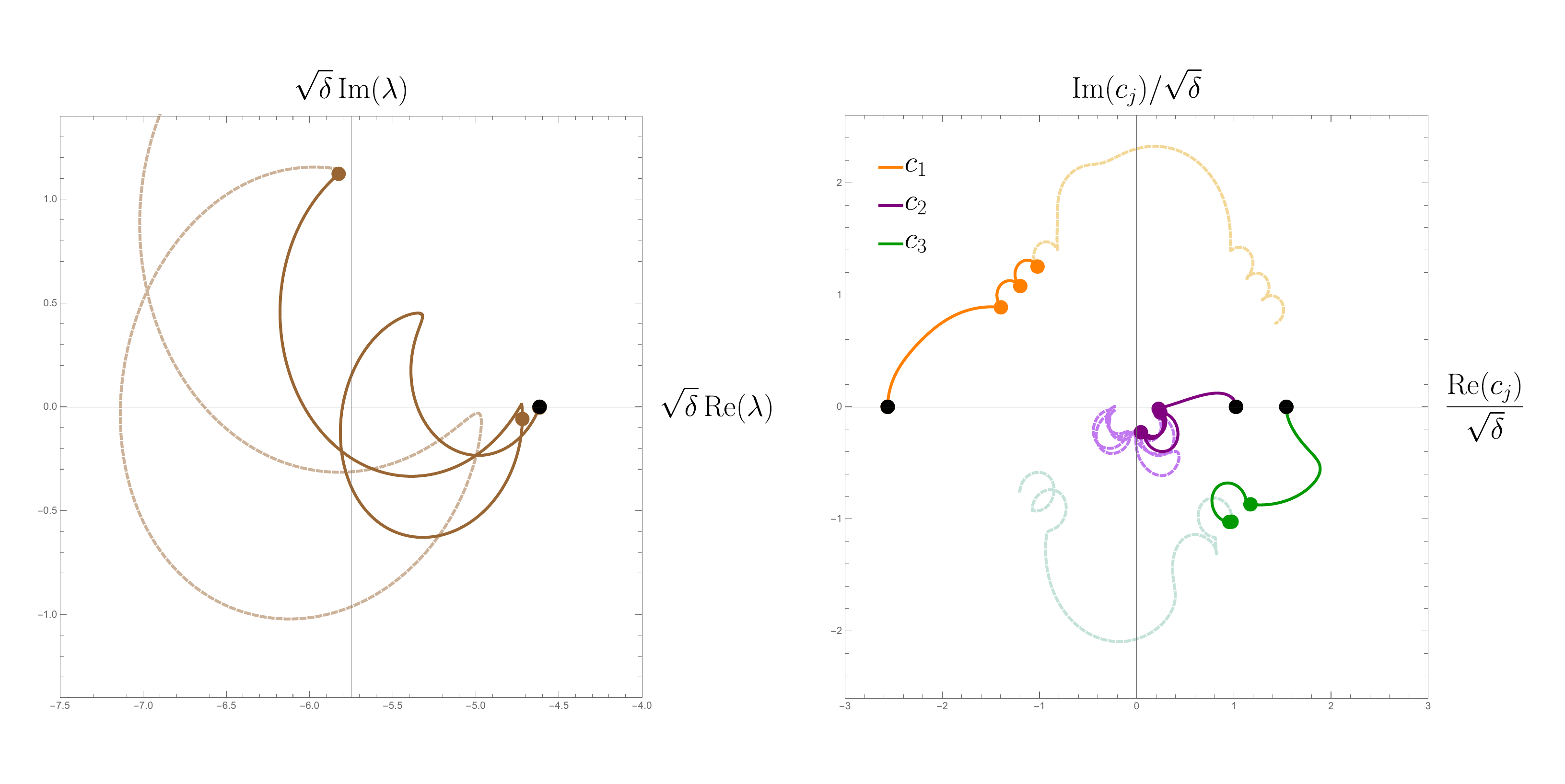}
  \caption{The trajectories of the parameters $\lambda$ and $c_1,c_2,c_3$  in the periodic three-wave solution in Fig.~\ref{fig:3sol_ell_1}. The black dots indicate the positions of $\lambda$ and $c_1,c_2,c_3$ at $t=0$. The colored dots indicate the values at the subsequent times depicted in Fig.~\ref{fig:3sol_ell_1}, the bold lines show the trajectories up to the last time in Fig.~\ref{fig:3sol_ell_1} ($t=3\delta^2/10$), and the dotted lines show the future trajectories, up to the maximal times $t\approx 0.42\delta^2$ for $\lambda$ (where the trajectory exceeds the range of the plot) and $t=\delta^2$ for the $c_1,c_2,c_3$ parameters. Note that for both $c_2$ and $c_3$, the colored dots coincide at certain points in time.}
  \label{fig:3solitoncs_ell}
\end{figure}

\section{Discussion} 

In this paper, we have introduced a new intermediate nonlinear Schr\"{o}dinger system, which interpolates between the mixed Manakov system and the HNLS equation, and solved it by developing relations with CM many-body systems. More specifically, we constructed exact multi-soliton solutions of the IMM system on the real line governed by the hyperbolic CM system and corresponding multi-wave solutions of the periodic IMM system governed by the elliptic CM system. Our results strongly suggest that the IMM system is an integrable model and worthy of further study; we list some such possibilities below.  

\begin{enumerate}

\item We have presented methods to solve the nonlinear constraints of Theorems~\ref{thm:solitons} and \ref{thm:ellipticsolitons} for small values of $N$. Analogous nonlinear constraints have arisen in the study of $N$-soliton solutions of the half-wave maps \cite{berntsonklabbers2020} and spin Benjamin-Ono \cite{berntsonlangmannlenells2022} equations; these constraints can be linearized \cite{berntson2022,berntsonlangmannlenells2022} and systematically solved by linear algebra for arbitrary $N\geq 1$. Thus, it is an interesting question as to whether the same can be done in the present case. We note any such method would expectedly also apply to the soliton solutions of the HNLS equation \cite{matsuno2002}.   

\item Continuum limits of CM systems have been the subject of a number of recent studies \cite{abanov2009,lenzmann2020,ahrend2022,gerard2022}. It would be interesting to understand if there is a relation between the IMM system and continuum limits of the hyperbolic and elliptic CM systems (or the two-particle species generalizations of versions of these CM systems \cite{calogero1976,berntson2020}). 

\item The IMM system is related to two recently-introduced integrable systems: the non-chiral intermediate long wave equation \cite{berntson2020} and the non-chiral intermediate Heisenberg ferromagnet equation \cite{berntson2022}. Like these equations, the IMM system involves the integral operators $T$ and $\tilde{T}$ and admits multi-soliton solutions described by a CM system subject to certain constraints. These equations admit Lax pairs and infinite numbers of conservation laws, so it would be interesting to investigate the existence of similar structures for the IMM system. 

\item The original derivation of the INLS equation \eqref{eq:INLS} as carried out in \cite{pelinovsky1995} uses a multiscale expansion technique, with the intermediate long wave equation as a starting point. It would be interesting to know if there is a similar correspondence between the non-chiral intermediate long wave equation \cite{berntson2020} and the IMM system, i.e., if the IMM system describes envelope waves in the non-chiral intermediate long wave equation.

\item By using the method in Appendix~\ref{app:local}, it may be shown that the $\sigma=+1$ and $\sigma=-1$ cases of the system
\begin{equation}
\begin{split}
\ii u_t=&\; u_{xx}+u(\ii-\sigma T)(|u|^2)_x-\sigma  u\tilde{T}(|v|^2)_x, \\
\ii v_t=&\; v_{xx}+v(\ii-\sigma T)(|v|^2)_x-\sigma  v\tilde{T}(|u|^2)_x
\end{split}
\end{equation}
are reducible to the focusing and defocusing Manakov systems \eqref{eq:manakov}, respectively, in the limit $\delta\downarrow 0$. It would be interesting to study these intermediate versions of the focusing and defocusing Manakov systems from the perspective of integrability.
\end{enumerate}

\paragraph{Acknowledgements.} 
We thank Christopher Ekman, Rob Klabbers, Edwin Langmann, and Jonatan Lenells for useful discussions and collaboration on closely related subjects. We are grateful for stimulating conversations with Katia Gallo and Daniel Qvarng\aa rd on vector nonlinear Schr\"{o}dinger equations and Anton Ottosson on related soliton equations. B.K.B. is supported by the Olle
Engkvist Byggm\"{a}stare Foundation, Grant 211-0122. This article is based on results obtained in the master's thesis of A.F. in theoretical physics at KTH Royal Institute of Technology. 

\appendix

\section{Special functions}\label{app:functional}

We collect identities for the special functions needed in the main text. Identities for elliptic variants of $\alpha(z)$ and $V(z)$, defined in \eqref{eq:alphaelliptic} and Case IV of \eqref{eq:V}, respectively, are given in Appendix~\ref{subsec:elliptic}. The identities for the hyperbolic variants of $\alpha(z)$ and $V(z)$, defined in \eqref{eq:alpha} and Case III of \eqref{eq:V}, respectively, can be obtained by degeneration of the corresponding elliptic identities in the limit $\ell\to\infty$. The procedure for doing so is provided in Appendix~\ref{subsec:hyperbolic}.

\subsection{Elliptic functions}\label{subsec:elliptic}

We refer to \cite[Chapter~23]{DLMF} for definitions of the Weierstrass functions $\zeta(z;\ell,\ii\delta)$ and $\wp(z;\ell,\ii\delta)$. The modifications of these functions we use, $\alpha(z)=\zeta_2(z;\ell,\ii\delta)$ and $V(z)=\wp_2(z;\ell,\ii\delta)$, are defined in terms of these basic functions in \eqref{eq:zeta2}--\eqref{eq:wp2}; the function $\ftwo(z)$ is defined in \eqref{eq:f2}. These functions satisfy the identities 
\begin{align}
\alpha'(z)=&\; -V(z), \label{eq:alphatoV} \\
\alpha(z)^2=&\; V(z)+\ftwo(z), \label{eq:IdV}\\
\alpha(z-a)\alpha(z-b)=&\;\alpha(a-b)\big(\alpha(z-a)-\alpha(z-b)\big) \nonumber\\
&\;+\frac12(\ftwo(z-a)+\ftwo(z-b)+\ftwo(a-b)\big)+\frac{3\zeta(\ii\delta;\ell,\ii\delta)}{2\ii\delta}. \label{eq:Idmain} 
\end{align}
for $z,a,b\in\C$. Moreover, the following periodicity properties hold,
\begin{equation}\label{eq:realperiod}
\alpha(z\pm 2\ell)=\alpha(z)\pm\frac{\pi}{\delta},\qquad V(z\pm 2\ell)=V(z), \qquad \ftwo(z\pm 2\ell)=\ftwo(z)\pm \frac{2\pi}{\delta}\alpha(z)+\bigg(\frac{\pi}{\delta}\bigg)^2
\end{equation}
and
\begin{equation}\label{eq:imperiod}
\alpha(z\pm 2\ii\delta)=\alpha(z),\qquad V(z\pm 2\ii\delta)=V(z),\qquad \ftwo(z\pm 2\ii\delta)=\ftwo(z).
\end{equation}
Proofs of each identity \eqref{eq:IdV}--\eqref{eq:imperiod}, excepting the periodicity properties of $\ftwo(z)$, can be found in \cite[Appendix~A]{berntsonlangmann2021}. The periodicity properties of $\ftwo(z)$ follow from those of $\alpha(z)$ and $V(z)$ in \eqref{eq:realperiod}--\eqref{eq:imperiod} and the definition of $\ftwo(z)$ \eqref{eq:f2}.

The following parity properties hold as consequences of the fact that $\zeta(z)$ is an odd function, $\zeta(-z)=-\zeta(z)$, and $\wp(z)$ is an even function, $\wp(-z)=\wp(z)$, and the definitions \eqref{eq:zeta2}--\eqref{eq:wp2} and \eqref{eq:f2},
\begin{equation}\label{eq:parity}
\alpha(-z)=-\alpha(z),\qquad V(-z)=V(z),\qquad \ftwo(-z)=\ftwo(z), \qquad \ftwo'(-z)=-\ftwo'(z).
\end{equation}

The Weierstrass $\zeta$- and $\wp$-functions with one real and one imaginary half-period are invariant under Schwarz conjugation, $\zeta(z^*)^*=\zeta(z)$ and $\wp(z^*)^*=\wp(z)$. From the definitions \eqref{eq:zeta2}--\eqref{eq:wp2} and \eqref{eq:f2}, it follows that also
\begin{equation}\label{eq:Schwarz}
\alpha(z^*)^*=\alpha(z),\qquad V(z^*)^*=V(z),\qquad \ftwo(z^*)^*=\ftwo(z). 
\end{equation}

\subsection{Hyperbolic functions}\label{subsec:hyperbolic}

Consider the limit
\begin{equation}\label{eq:zetalimit}
\lim_{\ell\to\infty}\zeta(z;\ell,\ii\delta)=\frac{\pi}{2\delta}\coth\bigg(\frac{\pi}{2\delta} z\bigg)-\frac13\bigg(\frac{\pi}{2\delta}\bigg)^2z,
\end{equation}
which gives, in particular,
\begin{equation}\label{eq:zetalimit2}
\lim_{\ell\to\infty} \frac{\zeta(\ii\delta;\ell,\ii\delta)}{\ii\delta}=-\frac13\bigg(\frac{\pi}{2\delta}\bigg)^2.
\end{equation}
It follows from \eqref{eq:zetalimit}-\eqref{eq:zetalimit2} and \eqref{eq:zeta2} that the special functions $\alpha(z)$ and $V(z)$ in the elliptic and hyperbolic cases are related via the limit $\ell\to\infty$,
\begin{equation}\label{eq:alphaVlimits}
\lim_{\ell\to\infty} \zeta_2(z;\ell,\ii\delta)=\frac{\pi}{2\delta}\coth\bigg(\frac{\pi}{2\delta} z\bigg),\qquad \lim_{\ell\to\infty} \wp_2(z;\ell,\ii\delta)=\frac{\big(\frac{\pi}{2\delta}\big)^2}{\sinh^2\big(\frac{\pi}{2\delta} z\big)},
\end{equation}
and, moreover,
\begin{equation}\label{eq:ftwolimit}
\lim_{\ell\to\infty} \ftwo(z;\ell,\ii\delta)=\bigg(\frac{\pi}{2\delta}\bigg)^2,
\end{equation}
using \eqref{eq:f2}. The hyperbolic counterpart of each identity in Appendix~\ref{subsec:elliptic}, excepting \eqref{eq:realperiod}, can be obtained by making the replacements \eqref{eq:zetalimit2}--\eqref{eq:ftwolimit}.

\section{Proofs}\label{app:proofs}

This section contains a detailed proof of Theorem~\ref{thm:ellipticsolitons}. The proof, which is built on three propositions, is given in Appendix~\ref{app:thmproof}. The proofs of the supporting propositions are given in Appendices~\ref{app:ansatzproof}--\ref{app:CMproof}.

\subsection{Proof of Theorem~\ref{thm:ellipticsolitons}}\label{app:thmproof}

We first establish conditions under which the ansatz \eqref{eq:ansatzelliptic} solves the periodic IMM system.

\begin{proposition}\label{prop:ansatz}
The ansatz \eqref{eq:ansatzelliptic} solves the periodic IMM system, \eqref{eq:IMM} with \eqref{eq:TTe}, provided that the time evolution equations \eqref{eq:cjdot}--\eqref{eq:ajdot} and \eqref{eq:lambdadot}, the equality constraints \eqref{eq:cjconstraint} and \eqref{eq:constraintperiodic}, and the inequality constraints \eqref{eq:imaj} and \eqref{eq:ajakelliptic} are satisfied.
\end{proposition}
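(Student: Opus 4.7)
The plan is to substitute the ansatz \eqref{eq:ansatzelliptic} into the two-vector form \eqref{eq:IMM2} of the periodic IMM system and verify the resulting identity by comparing both sides as meromorphic functions of $x$. Writing $U=\ee^{2\ii N\cc t}W$, where $W$ is the bracketed expression in \eqref{eq:ansatzelliptic}, the unit-modulus phase cancels from every nonlinear contribution, so the equation to check becomes
\begin{equation*}
\ii W_t - 2N\cc\,W \;=\; W_{xx} + W\circ (\ii+\cT)(W\circ W^*)_x .
\end{equation*}
Both sides will be shown to be meromorphic (and $2\ell$-quasi-periodic) in $x$, with poles only at $\{a_j\pm\ii\delta/2\}$; the identity then reduces to the matching of principal parts at each pole together with a finite list of scalar conditions on the regular part. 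Along the way, \eqref{eq:cjconstraint} is needed so that the quasi-periodicity \eqref{eq:realperiod} of $\alpha=\zeta_2$ does not destroy strict $2\ell$-periodicity of the ansatz.

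The algebraic heart of the computation is the evaluation of $W\circ W^*$. Besides terms linear in a single $\alpha$-factor, its components contain bilinear products of the form $\alpha(x-a_j\mp\ii\delta/2)\,\alpha(x-a_k^*\pm\ii\delta/2)$; these are linearised using the functional identity \eqref{eq:Idmain}, which expresses them as sums of single $\alpha$-values, $\ftwo$-values, and the constant $3\zeta(\ii\delta;\ell,\ii\delta)/(2\ii\delta)$. After this reduction, $(W\circ W^*)_x$ is a sum of $V$-type and $\ftwo'$-type terms, organised into $A_{\pm}'(x-a_j)$ blocks on which $\ii+\cT$ acts via \eqref{eq:TA}: the eigenfunction parts contribute $2\ii A_{\pm}'$ or $0$, and each block releases a component-constant $2\ii\cc$-shift. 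The prefactor $\ee^{2\ii N\cc t}$ is designed precisely to absorb the aggregate of these $\cc$-shifts from the $N$ poles in each component, accounting for the $-2N\cc W$ term above.

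With the nonlinear term simplified, matching proceeds by pole order. The coefficients of the highest-order poles at $x=a_j\pm\ii\delta/2$ come from $W_{xx}$ on the left and from the $W\circ V$-products on the right; their balance yields the elliptic CM equations \eqref{eq:CM} (with $V$ from Case IV of \eqref{eq:V}), exactly as in the pole-ansatz derivations of \cite{matsuno2002,gerard2022}. The next-to-leading (double-pole) coefficients at the same points give the velocity relations \eqref{eq:ajdot}; the simple-pole residues give the linear system \eqref{eq:cjdot}. The compatibility of these three families uses Proposition~\ref{prop:CM} below, where \eqref{eq:constraintperiodic} appears as a first integral of the combined system, explaining why the constraint only needs to be imposed at $t=0$. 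What remains is the $x$-constant part: after collecting all $\ftwo'$- and $\cc$-contributions, it splits into the ODE \eqref{eq:lambdadot} for $\dot{\lambda}$ and, matched pointwise at each pole, the algebraic constraint \eqref{eq:constraintperiodic}.

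The main obstacle is bookkeeping. The linearisation via \eqref{eq:Idmain} generates many $\ftwo$-pieces, constants, and cross-terms mixing the indices $j,k$ together with the upper and lower halves of the fundamental domain, and these must be regrouped so that the $2\ii\cc$-constants produced by \eqref{eq:TA}, the $V'(a_j-a_k)$ coming from $W_{xx}$, and the $\ftwo'(a_j-a_k)$ coming from the nonlinear term all combine cleanly. A clean organisation is to split $W\circ W^*$ into diagonal ($j=k$) and off-diagonal ($j\neq k$) contributions and to use the parity and Schwarz-conjugation identities \eqref{eq:parity}--\eqref{eq:Schwarz} to collapse sums that are symmetric in $j\leftrightarrow k$. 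With this accounting, the verification reduces to checking a small number of scalar identities, which match exactly the stated ODEs \eqref{eq:CM}, \eqref{eq:cjdot}, \eqref{eq:ajdot}, and \eqref{eq:lambdadot}, and the algebraic constraints \eqref{eq:cjconstraint} and \eqref{eq:constraintperiodic}.
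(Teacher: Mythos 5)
Your overall strategy---substitute the ansatz into the two-vector form, linearise the bilinear $\alpha$-products with \eqref{eq:Idmain}, apply \eqref{eq:TA} for the action of $\cT$, and match coefficients of the linearly independent functions $E$, $A_+(x-a_j)$, $A_+'(x-a_j)$---is exactly the paper's. However, your account of what the pole-by-pole matching produces contains a concrete error: you claim that the highest-order poles balance $W_{xx}$ against the nonlinear term so as to yield the elliptic CM equations \eqref{eq:CM}. In fact the triple poles cancel identically, imposing no condition: the diagonal ($j=k$) products $A_+(x-a_j)\circ A_+'(x-a_j)=-\tfrac12 A_+''(x-a_j)+\tfrac12 K_+'(x-a_j)$ (the paper's \eqref{eq:solId2}) reproduce precisely the $A_+''$ contribution of $U_{xx}$, as one sees in \eqref{eq:PiUUx5}--\eqref{eq:IMM3}. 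This is consistent with the statement of the proposition, whose hypotheses are the first-order equations \eqref{eq:cjdot}--\eqref{eq:ajdot} and \eqref{eq:lambdadot}, \emph{not} \eqref{eq:CM}; the second-order CM equations enter only afterwards, via Proposition~\ref{prop:CM}, as the condition under which \eqref{eq:ajdot} propagates from $t=0$. If your matching genuinely forced \eqref{eq:CM} at leading order, the proposition as stated would be false. The conditions that actually emerge are: \eqref{eq:ajdot} from the coefficient of $A_+'$ (double poles), \eqref{eq:cjdot} from the coefficient of $A_+$ (simple poles), and \eqref{eq:lambdadot} from the $x$-independent part.

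Two smaller inaccuracies: the propagation of \eqref{eq:constraintperiodic} from $t=0$ is the content of Proposition~\ref{prop:constraints}, not of Proposition~\ref{prop:CM}; and within the proof of the present proposition \eqref{eq:constraintperiodic} is not \emph{derived} from the constant part of the matching---it is \emph{used} as an algebraic input, replacing $c_j\bigl(\lambda^*-\ii\sum_{k} c_k^*\alpha(a_j-a_k^*+\ii\delta)\bigr)$ by $-1$ so that $(\ii+\cT)(U\circ U^*)_x$ collapses to $-2N\cc E+2\sum_{j}A_+'(x-a_j)$ (the paper's \eqref{eq:PiUUx2}). With these corrections your computation would go through and coincide with the paper's.
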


\begin{proof}
See Appendix~\ref{app:ansatzproof}.	
\end{proof}

We next show that the time evolution equations and two of the constraints in Proposition~\ref{prop:ansatz} are compatible. We first prove this for the constraint \eqref{eq:cjconstraint}.

\begin{lemma}\label{lem:cjconstraint}
Let $\{a_j,c_j\}_{j=1}^N$ be a solution of \eqref{eq:cjdot} on $[0,\tau)$ such that \eqref{eq:cjconstraint} holds at $t=0$ and \eqref{eq:ajakelliptic} holds on $[0,\tau)$. Then, \eqref{eq:cjconstraint} holds on $[0,\tau)$. 
\end{lemma}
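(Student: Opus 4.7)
The plan is to show that $\tfrac{d}{dt}\sum_{j=1}^N c_j = 0$ on $[0,\tau)$, which combined with the initial condition \eqref{eq:cjconstraint} at $t=0$ immediately gives the conclusion. So the only real content is a symmetry computation.

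First, using \eqref{eq:cjdot}, write
\begin{equation*}
\frac{d}{dt}\sum_{j=1}^N c_j \;=\; 2\ii \sum_{j=1}^N \sum_{k\neq j}^N (c_j - c_k)\, V(a_j - a_k).
\end{equation*}
The inequality constraint \eqref{eq:ajakelliptic} together with the $2\ell$-periodicity of $V$ in \eqref{eq:realperiod} ensures that each $V(a_j-a_k)$ appearing here is finite on $[0,\tau)$, so the right-hand side is well-defined.

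Next, I would relabel $j \leftrightarrow k$ in the double sum and invoke the evenness of $V$, $V(-z)=V(z)$, recorded in \eqref{eq:parity}:
\begin{equation*}
\sum_{j=1}^N\sum_{k\neq j}^N (c_j - c_k)\, V(a_j - a_k) \;=\; \sum_{j=1}^N\sum_{k\neq j}^N (c_k - c_j)\, V(a_k - a_j) \;=\; -\sum_{j=1}^N\sum_{k\neq j}^N (c_j - c_k)\, V(a_j - a_k).
\end{equation*}
A quantity equal to its own negative vanishes, so $\tfrac{d}{dt}\sum_j c_j \equiv 0$ on $[0,\tau)$. Combined with $\sum_j c_j(0) = 0$, this gives $\sum_j c_j(t) = 0$ for all $t \in [0,\tau)$.

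There is no real obstacle here; the argument is essentially the standard antisymmetrization trick for pairwise-interaction systems with an even two-body potential. The only thing to check carefully is that $V$ is indeed even in the elliptic case used in Theorem~\ref{thm:ellipticsolitons}, but that is exactly the content of the second identity in \eqref{eq:parity}.
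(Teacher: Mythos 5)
Your argument is correct and coincides with the paper's own proof: both differentiate $\sum_{j=1}^N c_j$, insert \eqref{eq:cjdot}, and kill the double sum by the $j\leftrightarrow k$ antisymmetry coming from the evenness of $V$ in \eqref{eq:parity}, with \eqref{eq:ajakelliptic} guaranteeing the summands are well-defined. Nothing to add.
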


\begin{proof}
Let 
\begin{equation}\label{eq:C}
C\coloneqq \sum_{j=1}^N c_j. 	
\end{equation}
By differentiating \eqref{eq:C} with respect to time and inserting \eqref{eq:cjdot}, we have
\begin{align}
\dot{C}=\sum_{j=1}^N \dot{c}_j=2\ii\sum_{j=1}^N\sum_{k\neq j}^N (c_j-c_k)V(a_j-a_k).
\end{align}
As the summand of the double sum is well-defined by the assumption that \eqref{eq:ajakelliptic} holds and anti-symmetric with respect to the interchange $j\leftrightarrow k$ (because $V(z)$ is an even function \eqref{eq:parity}), the double sum vanishes and we have $\dot{C}=0$. Thus, $C$ is equal to its initial value on $[0,\tau)$. In particular, if \eqref{eq:cjconstraint} holds at $t=0$, we have $C=0$ at $t=0$ and consequently on $[0,\tau)$. 
\end{proof}

The preceding lemma is used to prove the following proposition, which states that the constraint \eqref{eq:constraintperiodic} is also compatible with the system of first-order ODEs in Proposition~\ref{prop:ansatz}.

\begin{proposition}\label{prop:constraints}
 Let $\lambda$ and $\{a_j,c_j\}_{j=1}^N$ be a solution of \eqref{eq:lambdadot} and \eqref{eq:cjdot}--\eqref{eq:ajdot} on $[0,\tau)$ such that \eqref{eq:cjconstraint} and \eqref{eq:constraintperiodic} hold at $t=0$ and \eqref{eq:ajakelliptic} holds on $[0,\tau)$. Then, \eqref{eq:constraintperiodic} holds on $[0,\tau)$.  
\end{proposition}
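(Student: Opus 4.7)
The plan is to show that the $\C$-valued functions
\[
F_j(t) \coloneqq c_j\Bigg(\lambda^* - \ii\sum_{k=1}^N c_k^*\,\alpha(a_j - a_k^* + \ii\delta)\Bigg) + 1 \quad (j=1,\ldots,N)
\]
satisfy a closed, linear, homogeneous system of first-order ODEs on $[0,\tau)$ with bounded coefficients. Since $F_j(0)=0$ by hypothesis, uniqueness of solutions to such linear systems then forces $F_j\equiv 0$, which is precisely \eqref{eq:constraintperiodic}.

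First I would differentiate $F_j$ with respect to $t$, using $\alpha'(z)=-V(z)$ from \eqref{eq:alphatoV}, and substitute the time-evolution equations \eqref{eq:cjdot} for $\dot c_j$, \eqref{eq:ajdot} for $\dot a_j$, and (the complex conjugate of) \eqref{eq:lambdadot} for $\dot\lambda^*$; the conjugate equations for $\dot c_k^*$ and $\dot a_k^*$ have the same structural form thanks to the Schwarz-conjugation invariance \eqref{eq:Schwarz}. By Lemma~\ref{lem:cjconstraint}, the hypothesis \eqref{eq:cjconstraint} at $t=0$ propagates to all of $[0,\tau)$, so $\sum_k c_k = 0$ may be used freely throughout the computation.

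The core algebraic step is to identify the resulting triple sums of $\alpha$–$V$ and $\alpha$–$\alpha$ products as a linear combination of the $F_k$ and $F_k^*$. For this, I would apply the functional identity \eqref{eq:Idmain} to each product $\alpha(a_j - a_l)\alpha(a_j - a_k^* + \ii\delta)$ and its variants, thereby reducing it to single $\alpha$'s, $\ftwo$ contributions, and the constant $3\zeta(\ii\delta;\ell,\ii\delta)/(2\ii\delta)$; the $\ftwo'$ terms arising from $\dot\lambda^*$ are then absorbed using the identities \eqref{eq:IdV} and \eqref{eq:alphatoV}, while the residual $\ftwo$ and constant terms collapse via antisymmetrization in the summation indices together with $\sum_k c_k=0$. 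The upshot is an expression of the form
\[
\dot F_j(t) = \sum_{k=1}^N M_{jk}(t)\,F_k(t) + \sum_{k=1}^N N_{jk}(t)\,F_k(t)^*,
\]
where the entries of $M,N$ are built from $\alpha$ and $V$ evaluated at the differences $a_j-a_k$ and $a_j - a_k^* + \ii\delta$. The inequalities \eqref{eq:imaj} and \eqref{eq:ajakelliptic} keep all such arguments bounded away from the polar lattice $2\ell\Z + 2\ii\delta\Z$ of $\alpha,V$, so $M,N$ are bounded on $[0,\tau)$; treating $(\re F_j,\im F_j)$ as a real vector then reduces the conclusion to uniqueness for a real linear ODE with vanishing initial data.

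The main obstacle is the simplification described in the third paragraph: one has to orchestrate the applications of \eqref{eq:Idmain} so that every triple sum reassembles exactly into $F_k$ or $F_k^*$, with no orphan triple sums remaining. The elliptic constant piece in \eqref{eq:Idmain} is the trickiest bookkeeping issue, since its cancellation depends crucially on $\sum_k c_k=0$; moreover, the $\lambda$-dependent terms generated by $\dot a_j$ through \eqref{eq:ajdot} must be balanced against those coming from $\dot \lambda^*$ via \eqref{eq:lambdadot}, and it is precisely this balance that explains why the right-hand side of \eqref{eq:lambdadot} takes the form $\frac12\sum_{j,k} (c_j-c_k)\ftwo'(a_j-a_k)$ and no other.
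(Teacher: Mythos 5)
Your proposal follows essentially the same route as the paper's proof in Appendix~\ref{app:constraintproof}: the paper works with $D_j\coloneqq F_j-1$, differentiates, substitutes \eqref{eq:cjdot}--\eqref{eq:ajdot} and \eqref{eq:lambdadot}, uses Lemma~\ref{lem:cjconstraint} and the (differentiated) identity \eqref{eq:Idmain} to collapse the triple sums and the $\ftwo'$ terms, and arrives at the closed linear system $\dot{D}_j=2\ii\sum_{k\neq j}^N (D_j-D_k)V(a_j-a_k)-2\ii\sum_{k=1}^N (D_j-D_k^*)V(a_j-a_k^*+\ii\delta)$, whose constant solution $D_j\equiv -1$ is singled out by uniqueness exactly as you argue for $F_j\equiv 0$. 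The only cosmetic difference is your homogeneous normalization $F_j=D_j+1$ and the explicit realification to handle the $F_k^*$ terms, both of which are implicit in the paper's argument.
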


\begin{proof}
See Appendix~\ref{app:constraintproof}. 	
\end{proof}

Together, Propositions~\ref{prop:ansatz}--\ref{prop:constraints} show that a solution of the first-order system of equations \eqref{eq:cjdot}--\eqref{eq:ajdot} and \eqref{eq:lambdadot} on $[0,\tau)$ where (i) the initial conditions satisfy \eqref{eq:cjconstraint} and \eqref{eq:constraintperiodic} and (ii) the conditions \eqref{eq:imaj} and \eqref{eq:ajakelliptic} hold on $[0,\tau)$ can be used to construct a solution of the periodic IMM system. The next proposition states that under certain conditions, a solution of the system of equations consisting of the CM equations of motion \eqref{eq:CM} and the time evolution equations for $\{c_j\}_{j=1}^N$ \eqref{eq:cjdot} and $\lambda$ \eqref{eq:lambdadot} also solves the first-order system of Propositions~\ref{prop:ansatz}--\ref{prop:constraints}. 

\begin{proposition}\label{prop:CM}
Let $\lambda$ and $\{a_j,c_j\}_{j=1}^N$ be a solution of \eqref{eq:CM}, \eqref{eq:cjdot}, and \eqref{eq:lambdadot} on $[0,\tau)$ such that \eqref{eq:ajdot} and \eqref{eq:cjconstraint} hold at $t=0$ and \eqref{eq:ajakelliptic} holds on $[0,\tau)$. Then, \eqref{eq:ajdot} is satisfied on $[0,\tau)$. 
\end{proposition}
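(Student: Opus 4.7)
I would introduce the deviation
\begin{equation*}
F_j(t)\coloneqq c_j(t)\dot a_j(t)-2\lambda(t)-2\ii\sum_{k\neq j}^N c_k(t)\alpha\big(a_j(t)-a_k(t)\big),\qquad j=1,\ldots,N,
\end{equation*}
so that \eqref{eq:ajdot} is equivalent to $F_j\equiv 0$. By hypothesis, $F_j(0)=0$ for each $j$. The strategy is to show that, modulo the ODEs \eqref{eq:CM}, \eqref{eq:cjdot} and \eqref{eq:lambdadot} and the constraint \eqref{eq:cjconstraint} (which persists on $[0,\tau)$ by Lemma~\ref{lem:cjconstraint}), the $F_j$ satisfy a homogeneous first-order linear ODE system
\begin{equation*}
\dot F_j=\sum_{k=1}^N M_{jk}(t)F_k,
\end{equation*}
with coefficients well-defined on $[0,\tau)$ thanks to \eqref{eq:ajakelliptic}. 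Uniqueness for this linear system then forces $F_j\equiv 0$, which is the claim.

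To set up this linear system I would differentiate $F_j$ in time, insert $\ddot a_j$ from \eqref{eq:CM}, $\dot c_j$ and $\dot c_k$ from \eqref{eq:cjdot}, $\dot\lambda$ from \eqref{eq:lambdadot}, and use $\alpha'=-V$ from \eqref{eq:alphatoV}. A short rearrangement, writing $c_j\dot a_j=F_j+H_j$ with $H_j\coloneqq 2\lambda+2\ii\sum_{k\neq j}c_k\alpha(a_j-a_k)$, produces
\begin{equation*}
\dot F_j=2\ii\sum_{k\neq j}^N(F_j-F_k)\,V(a_j-a_k)+R_j,
\end{equation*}
where $R_j$ collects the residual terms that depend on $\{a_j,c_j,\lambda\}$ only through the on-constraint expression $H_j$. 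The whole content of the proposition is that $R_j\equiv 0$.

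The hard part is exactly the cancellation $R_j=0$. After substitution, $R_j$ contains a triple sum with summand of the form $c_k V(a_k-a_m)\alpha(a_j-a_k)$ (from $\dot c_k$), a double sum with summand $c_k\alpha(a_k-a_m)$-products generated by using \eqref{eq:ajdot} for $\dot a_k$ inside $(H_j-H_k)V(a_j-a_k)$, the term $-4c_j\sum_{k\neq j}V'(a_j-a_k)$ coming from $c_j\ddot a_j$, and $-2\dot\lambda$. The key tool is the addition identity \eqref{eq:Idmain}, which converts the products $\alpha(a_j-a_k)\alpha(a_j-a_m)$ and $\alpha(a_k-a_j)\alpha(a_k-a_m)$ into linear combinations of $\alpha$-differences plus $\ftwo$-terms plus the constant $3\zeta(\ii\delta;\ell,\ii\delta)/(2\ii\delta)$. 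After reindexing the double sums symmetrically in $(k,m)$ and using the parity properties \eqref{eq:parity}, one finds that: (i) the contributions proportional to $V'$ are exactly what \eqref{eq:CM} cancels through $-4c_j\sum V'(a_j-a_k)$; (ii) the contributions proportional to $\ftwo'(a_k-a_m)$ reassemble into $2\dot\lambda$ via \eqref{eq:lambdadot}; (iii) the residual constant terms, together with certain $\alpha$-terms, come out multiplied by the global sum $\sum_m c_m$, which vanishes on $[0,\tau)$ by Lemma~\ref{lem:cjconstraint}. The main obstacle is precisely this bookkeeping: one must show that the specific form of \eqref{eq:lambdadot} and the identity \eqref{eq:Idmain} are tailored so that every non-$F$ term in $R_j$ is accounted for, after which uniqueness of the linear system closes the argument.
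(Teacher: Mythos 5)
Your proposal is correct, and it reaches the conclusion by a genuinely different logical route than the paper, even though the computational core is the same. The paper proceeds in two stages: Lemma~\ref{lem:CM} shows that any solution of the \emph{first-order} system \eqref{eq:cjdot}--\eqref{eq:ajdot}, \eqref{eq:lambdadot} automatically satisfies the second-order CM equations \eqref{eq:CM}, and then Proposition~\ref{prop:CM} is deduced by setting up two initial value problems (one built on \eqref{eq:CM}, one on \eqref{eq:ajdot}) with matching data and invoking Picard--Lindel\"{o}f uniqueness to conclude they coincide. That route forces the paper to establish $c_j\neq 0$ on $[0,\tau)$ (Lemma~\ref{lem:cjneq0}, which leans on Proposition~\ref{prop:constraints}), because \eqref{eq:ajdot} only defines $\dot{a}_j$ after division by $c_j$. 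Your approach works directly with the given solution of the second-order system: since the troublesome velocities always enter $\dot F_j$ through the combinations $c_j\dot a_j=F_j+H_j$ and $c_k\dot a_k=F_k+H_k$ (the $\dot c_j\dot a_j$ term combines with $2\ii\dot a_j\sum_{k\neq j}c_kV(a_j-a_k)$ to produce $2\ii\,c_j\dot a_j\sum_{k\neq j}V(a_j-a_k)$), no division by $c_j$ is ever needed, and the nonvanishing of the $c_j$ becomes irrelevant. What you pay for this is that the identity $R_j\equiv 0$ you must verify is exactly the long computation of Lemma~\ref{lem:CM} read in the opposite direction (the chain \eqref{eq:BTcalc1}--\eqref{eq:cjajddot2}), so no work is saved there; your description of the cancellations --- the $V'$ terms matching $c_j\ddot a_j$, the $\ftwo'$ terms reassembling into $2\dot\lambda$ via \eqref{eq:lambdadot}, and the use of $\sum_j c_j=0$ (in the form $\sum_{l\neq j,k}c_l=-c_j-c_k$) --- matches the paper's bookkeeping. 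One small inaccuracy in your sketch: the additive constant $3\zeta(\ii\delta;\ell,\ii\delta)/(2\ii\delta)$ in \eqref{eq:Idmain} never actually appears in this computation, because the identity is applied in differentiated form; there are no "residual constant terms" to cancel. This does not affect the validity of the argument.
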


\begin{proof}
See Appendix~\ref{app:CMproof}.	
\end{proof}

The observations above can now be used to prove the theorem. Suppose we are given $\lambda$ and $\{a_j,c_j\}_{j=1}^N$ that satisfy (i) the equations of motion \eqref{eq:CM}, \eqref{eq:cjdot}, and \eqref{eq:lambdadot} on $[0,\tau)$, (ii) the equality constraints \eqref{eq:cjconstraint} and \eqref{eq:constraintperiodic} at $t=0$, and (iii) the inequality constraints \eqref{eq:imaj} and \eqref{eq:ajakelliptic} on $[0,\tau)$. By Proposition~\ref{prop:CM}, we have that the first-order equations of motion for $\{a_j\}_{j=1}^N$ \eqref{eq:ajdot} are satisfied on $[0,\tau)$. Lemma~\ref{lem:cjconstraint} and Proposition~\ref{prop:constraints} then show that the constraints \eqref{eq:cjconstraint} and \eqref{eq:constraintperiodic}, respectively, hold on $[0,\tau)$. The assumptions of Proposition~\ref{prop:ansatz} are now satisfied and consequently the ansatz \eqref{eq:ansatzelliptic} with our supposed solution $\lambda$ and $\{a_j,c_j\}_{j=1}^N$ provides a solution of the periodic IMM system on $[0,\tau)$. This completes the proof.

\subsection{Proof of Proposition~\ref{prop:ansatz}}\label{app:ansatzproof}

Using \eqref{eq:IMM2}, the ansatz \eqref{eq:ansatzelliptic} may be written as
\begin{equation}\label{eq:ansatz2}
U= \ee^{2\ii N \cc t} \Bigg(\lambda E + \ii \sum_{j=1}^N c_j A_+(x-a_j)\Bigg)
\end{equation}
with 
\begin{equation}\label{eq:AE}
E\coloneqq \left(\begin{array}{c} 1 \\ -1 \end{array}\right)
\end{equation}
and $A_{\pm}(z)$ as in \eqref{eq:Apm} (the function $A_-$ appears in $U^*$ and is thus needed below). Observe that $E$ in \eqref{eq:AE} acts as the identity on two-vectors under the operation $\circ$ defined in \eqref{eq:circ}.

We compute each term in \eqref{eq:IMM2} with \eqref{eq:ansatz2} and \eqref{eq:cjconstraint}. The first two terms are found to be
\begin{equation}\label{eq:Ut}
\ii U_t = - 2N \cc U+\ii\ee^{2\ii N\cc t}\Bigg( \dot{\lambda} E+\ii \sum_{j=1}^N \big(\dot{c}_j A_+(x-a_j)-c_j\dot{a}_jA_+'(x-a_j)\big)\Bigg)
\end{equation}
and 
\begin{equation}\label{eq:Uxx}
U_{xx}=\ii \ee^{2\ii N \cc t}\sum_{j=1}^N c_j A_+''(x-a_j).
\end{equation}
We compute the nonlinear term in \eqref{eq:IMM2} in several steps starting from
\begin{equation}\label{eq:UUx1}
(U\circ U^*)_x= -\ii \lambda \sum_{j=1}^N c_j^* A_-'(x-a_j^*)+\ii\lambda^* \sum_{j=1}^N c_j A_+'(x-a_j)   +\sum_{j=1}^N\sum_{k=1}^N c_jc_k^* \big(A_+(x-a_j)\circ A_-(x-a_k^*)\big)_x,
\end{equation}
where we have used that $A_+(x-a_k)^*=A_-(x-a_k^*)$ as a consequence of \eqref{eq:Schwarz}. To proceed, we need the identity
\begin{align}\label{eq:solId1}
A_+(x-a_j)\circ A_-(x-a_k^*)=&\; \alpha(a_j-a_k^*+\ii\delta)\big(A_+(x-a_j)-A_-(x-a_k^*)\big) \nonumber \\
&\; +\frac12 \big(K_+(x-a_j)+K_-(x-a_k^*)+\ftwo(a_j-a_k^*+\ii\delta)E\big)+\frac{3\zeta(\ii\delta)}{2\ii\delta}E,
\end{align}
where 
\begin{equation}\label{eq:Kpm}
K_{\pm}(z)\coloneqq \left(\begin{array}{c}
\ftwo(z\mp \ii\delta/2) \\
-\ftwo(z\pm\ii\delta/2)
\end{array}\right),
\end{equation}
which follows from \eqref{eq:Idmain} by specializing variables. By differentiating \eqref{eq:solId1} with respect to $x$, we find
\begin{equation}
\big(A_+(x-a_j)\circ A_-(x-a_k^*)\big)_x=\alpha(a_j-a_k^*+\ii\delta)\big(A_+'(x-a_j)-A_-'(x-a_k^*)\big)+\frac12\big(K_+'(x-a_j)+K_-'(x-a_k^*)\big)
\end{equation}
and inserting this into \eqref{eq:UUx1} gives
\begin{align}\label{eq:UUx2}
(U\circ U^*)_x=&\; -\ii \lambda \sum_{j=1}^N c_j^* A_-'(x-a_j^*) + \ii\lambda^* \sum_{j=1}^N c_j A_+'(x-a_j)  \nonumber \\
&\;  +\sum_{j=1}^N\sum_{k=1}^N c_jc_k^* \alpha(a_j-a_k^*+\ii\delta)\big(A_+'(x-a_j)-A_-'(x-a_k^*)\big) \nonumber \\
&\; +\frac12\sum_{j=1}^N\sum_{k=1}^N  c_jc_k^* \big(K_+'(x-a_j)+K_-'(x-a_k^*)\big),
\end{align}
but the final line vanishes by \eqref{eq:cjconstraint}.

We now require the following result from \cite{berntson2022elliptic}, which will be used to compute the action of $\cT$ on $(U\circ U^*)_x$.

\begin{lemma}\label{lem:cT}
When \eqref{eq:imaj} is satisfied, the operator $\cT$ has the actions \eqref{eq:TA} on the functions $A_{+}'(x-a_j)$ and $A_{-}'(x-a_j^*)$.
\end{lemma}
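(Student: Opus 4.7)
The natural approach is to verify the two vector identities in \eqref{eq:TA} componentwise by direct evaluation of the integrals defining $T$ and $\tilde T$ in \eqref{eq:TTe}. Using \eqref{eq:alphatoV}, the components of $A_+'(x-a_j)$ are $-V(x-a_j-\ii\delta/2)$ and $V(x-a_j+\ii\delta/2)$, and the condition \eqref{eq:imaj} places the two corresponding poles in the strips $-\delta<\im x'<0$ and $-2\delta<\im x'<-\delta$ respectively; the two poles of $A_-'(\cdot-a_j^*)$ lie in the strips obtained by reflection about the real axis. The whole calculation thus reduces to evaluating scalar integrals
\begin{equation*}
I(b,\sigma;x)\coloneqq \frac{1}{\pi}\,\mathrm{p.v.}\!\int_{-\ell}^{\ell}\zeta_1(x'-x+\sigma;\ell,\ii\delta)\,V(x'-b;\ell,\ii\delta)\,\mathrm{d}x',
\end{equation*}
for $\sigma\in\{0,\ii\delta\}$ (distinguishing $T$ from $\tilde T$) and $b\in\{a_j\pm\ii\delta/2,\,a_j^*\pm\ii\delta/2\}$.

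The key step is to deform the integration contour from $[-\ell,\ell]$ to $[-\ell,\ell]\mp 2\ii\delta$, with the sign of the shift chosen so that the displaced contour lies strictly below or above every pole of $V(\cdot-b)$. The $2\ell$-periodicity of both $\zeta_1(\cdot+\sigma)$ and $V$ cancels the vertical boundary contributions, while the $2\ii\delta$-periodicity of $V$ means the translated integrand differs from the original only through the quasi-periodicity
\begin{equation*}
\zeta_1(z+2\ii\delta;\ell,\ii\delta)-\zeta_1(z;\ell,\ii\delta)=-\ii\pi/\ell,
\end{equation*}
which follows from the Legendre relation $\ii\delta\,\zeta(\ell)-\ell\,\zeta(\ii\delta)=\ii\pi/2$ inserted into \eqref{eq:zeta1}. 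The shift thus produces (i) a constant contribution $\pm(\ii/\ell)\int_{-\ell}^\ell V(x'-b)\,\mathrm{d}x'=\mp\pi/(\ell\delta)$, obtained by evaluating $V=-\alpha'$ against the quasi-periodicity \eqref{eq:realperiod}, and (ii) $2\pi\ii$ times the residue at $x'=b$ whenever the corresponding pole is swept across by the shift. Since $V$ has leading Laurent behaviour $(x'-b)^{-2}$, the residue equals $\partial_{x'}\zeta_1(x'-x+\sigma)\big|_{x'=b}=-V(x-b+\sigma)+\text{const}$, and, recombining the four scalar integrals into the components of $\cT A_+'(\cdot-a_j)$ according to \eqref{eq:cT}, the $V$-type residues reassemble into the eigenvalue term $\ii A_+'(x-a_j)$, while the constant contributions land in a single basis vector $(0,1)^T$ and produce precisely the $2\ii\cc$-term with $\cc=\pi/(2\ell\delta)$.

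The main obstacle is the bookkeeping of which of the four scalar integrals picks up a pole residue under the prescribed contour shift and which only contributes the constant jump: the four cases are distinguished by the sign of $\im b$ and the value of $\sigma$, and the $\circ$-type sign flips in the second component of $\cT$ decide in which of the basis vectors $(0,1)^T$ or $(1,0)^T$ the constant ultimately lands. Once the first identity in \eqref{eq:TA} is established by this enumeration, the second identity follows either by redoing the same calculation with $a_j\mapsto a_j^*$ (which reflects the pole strips about the real axis, reverses the sign of every residue contribution, and so replaces $+\ii$ by $-\ii$ while swapping the basis vector) or, more efficiently, by Schwarz conjugation via \eqref{eq:Schwarz}, exploiting that $T$ and $\tilde T$ commute with complex conjugation on the real line.
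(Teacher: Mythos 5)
First, a caveat on the comparison: the paper does not prove Lemma~\ref{lem:cT} itself but imports it verbatim from \cite{berntson2022elliptic}, so your argument has to stand entirely on its own. Your general strategy --- residue calculus on a period rectangle, with the vertical edges cancelling by $2\ell$-periodicity and the horizontal edges compared via the quasi-periodicity of $\zeta_1$ --- is indeed the standard route, and your identification of the pole strips implied by \eqref{eq:imaj} and of the double-pole residue $\partial_{x'}\zeta_1(x'-x+\sigma)|_{x'=b}$ is correct. The proof nevertheless has a genuine gap at its central step: a contour shift by the \emph{full} imaginary period $2\ii\delta$ cannot evaluate the integral, because the integrand is quasi-periodic under precisely that shift. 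Since $V(z\pm2\ii\delta)=V(z)$ and $\zeta_1(z\pm2\ii\delta)=\zeta_1(z)\mp\ii\pi/\ell$, the integral over $[-\ell,\ell]\mp2\ii\delta$ equals $I(b,\sigma;x)$ itself plus $\pm(\ii/\ell)\int_{-\ell}^{\ell}V(x'-b)\,\mathrm{d}x'$, so the unknown drops out of the residue identity. If you carry the bookkeeping through (including the principal-value half-residue at the pole of $\zeta_1$ sitting at $x'=x\mp2\ii\delta$ \emph{on} your displaced contour, which your write-up does not account for), the $V(x-b+\sigma)$ contributions from the interior double pole and from the two boundary poles of $\zeta_1$ cancel identically, and the argument collapses to the true but vacuous statement $\int_{-\ell}^{\ell}V(x'-b)\,\mathrm{d}x'=-\pi/\delta$.

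The shift that works is by the \emph{half}-period $\ii\delta$. On the lower edge $\im x'=-\delta$, substituting $x'=x''-\ii\delta$ turns the $T$-kernel into $\zeta_1(x''-x-\ii\delta)=\zeta_1(x''-x+\ii\delta)+\ii\pi/\ell$, i.e.\ the $\tilde T$-kernel of \eqref{eq:TTe} up to an additive constant, and simultaneously maps $V(x'-a_j-\ii\delta/2)$ to $V(x''-a_j+\ii\delta/2)$, i.e.\ the first component of $A_+'$ to the second. The two horizontal edges of the height-$\delta$ rectangle therefore assemble exactly the combination $TF_1+\tilde TF_2$ appearing in \eqref{eq:cT} --- this is the structural reason $\cT$ and $A_\pm$ are built the way they are --- while the single interior double pole at $a_j+\ii\delta/2$ produces the eigenvalue term and the additive constant $\ii\pi/\ell$ times $\int_{-\ell}^{\ell}V(x'-b)\,\mathrm{d}x'=-\pi/\delta$ produces the $2\ii\cc$ term. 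A secondary issue: your shortcut for the second identity via Schwarz conjugation assumes $\tilde T$ commutes with complex conjugation, but its kernel $\zeta_1(x'-x+\ii\delta)$ is not real on the real line (its conjugate is $\zeta_1(x'-x+\ii\delta)+\ii\pi/\ell$), so conjugation produces an extra rank-one term $\tfrac{\ii}{\ell}\int_{-\ell}^{\ell}f^*\,\mathrm{d}x'$ that must be tracked; it is exactly what moves the $2\ii\cc$ contribution from one basis vector to the other in \eqref{eq:TA}.
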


It thus follows from \eqref{eq:TA} that
\begin{equation}\label{eq:PiApm}
((\ii+ \cT) A'_{+}(\cdot-a_j))(x)=2\ii A_{+}'(x-a_j)+ 2\ii\cc\left(\begin{array}{c} 0 \\ 1 \end{array}\right),\qquad ((\ii+ \cT) A'_{-}(\cdot-a_j^*))(x)= 2\ii\cc\left(\begin{array}{c} 1 \\ 0 \end{array}\right)
\end{equation}
under the condition \eqref{eq:imaj}. Hence, applying $(\ii+\cT)$ to \eqref{eq:UUx2} (with the final line removed) gives
\begin{align}\label{eq:PiUUx}
(\ii+\cT)(U\circ U^*)_x = &\; -2 \lambda^*\sum_{j=1}^N c_j A_+'(x-a_j)+ \ii\sum_{j=1}^N\sum_{k=1}^N c_jc_k^* \alpha(a_j-a_k^*+\ii\delta)\big(2A_+'(x-a_j)-2\cc E) \nonumber \\
=&\; -2\sum_{j=1}^N c_j A_+'(x-a_j)\Bigg(\lambda^*-\ii\sum_{k=1}^N c_k^*\alpha(a_j-a_k^*+\ii\delta)\Bigg)-2\ii\cc \sum_{j=1}^N\sum_{k=1}^N c_jc_k^*\alpha(a_j-a_k^*+\ii\delta)E. 
\end{align}
By inserting \eqref{eq:constraint} and 
\begin{equation}
\sum_{j=1}^N\sum_{k=1}^N c_jc_k^*\alpha(a_j-a_k^*+\ii\delta)=-\ii N,	
\end{equation}
which follows from \eqref{eq:constraint} and \eqref{eq:cjconstraint}, into the first and second terms in \eqref{eq:PiUUx}, respectively, we obtain
\begin{equation}
\label{eq:PiUUx2}
(\ii+\cT)(U\circ U)_x= -2 N\cc E+2 \sum_{j=1}^N A_+'(x-a_j). 
\end{equation}

Next, $\circ$-multiplication of \eqref{eq:PiUUx2} by $U$ leads to 
\begin{equation}\label{eq:PiUUx3}
U\circ (\ii+\cT)(U\circ U)_x =   -2N\cc U  +2\ee^{2\ii N\cc t}\lambda\sum_{j=1}^N A_+'(x-a_j)  +2\ii \ee^{2\ii N\cc t} \sum_{j=1}^N \sum_{k=1}^N c_jA_+(x-a_j)\circ A_+'(x-a_k).
\end{equation}

We will now employ the identities 
\begin{equation}\label{eq:solId2}
A_+(x-a_j)\circ A_+'(x-a_j)=-\frac12 A_+''(x-a_j)+\frac12 K_+'(x-a_j)
\end{equation}
and
\begin{align}\label{eq:solId3}
A_+(x-a_j)\circ A_+'(x-a_k)=&\; -\alpha(a_j-a_k)A_+'(x-a_k)-V(a_j-a_k)\big(A_+(x-a_j)-A_+(x-a_k)\big) \nonumber \\
&\; +\frac12 K_+'(x-a_k)+\frac12\ftwo'(a_j-a_k)E,
\end{align}
which are obtained by differentiating \eqref{eq:IdV} and \eqref{eq:Idmain}, respectively and specializing variables. We evaluate the double sum in \eqref{eq:PiUUx3}, using \eqref{eq:solId2} and \eqref{eq:solId3} for terms with $j=k$ and $j\neq k$, respectively; this leads to 
\begin{align}\label{eq:PiUUx4}
U\circ (\ii+\cT)(U\circ U)_x=&\; -2N\cc U +2\ee^{2\ii N\cc t}\lambda\sum_{j=1}^N A_+'(x-a_j) -\ii \ee^{2\ii N\cc t} \sum_{j=1}^N c_j\big(A_+''(x-a_j)-K_+'(x-a_j)\big) \nonumber \\
&\; -2\ii \ee^{2\ii N \cc t} \sum_{j=1}^N \sum_{k\neq j}^N c_j \alpha(a_j-a_k)A_+'(x-a_k) \nonumber \\
&\; -2\ii \ee^{2\ii N\cc t}\sum_{j=1}^N \sum_{k\neq j}^N c_j V(a_j-a_k)\big(A_+(x-a_j)-A_+(x-a_k)\big) \nonumber \\
&\; +\ii \ee^{2\ii N\cc t} \sum_{j=1}^N\sum_{k\neq j}^N c_j K_+'(x-a_k)   +\ii \ee^{2\ii N\cc t} \sum_{j=1}^N\sum_{k\neq j}^N c_j\ftwo'(a_j-a_k)E.
\end{align}
All terms in $K_+'(x-a_j)$ in \eqref{eq:PiUUx4} vanish by \eqref{eq:cjconstraint} (to see this, swap $j\leftrightarrow k$ in the first double sum in the fourth line and add to this the $K_+'(x-a_j)$ terms in the first line). Using this, swapping $j\leftrightarrow k$ in the double sum in the second line, rewriting the double sum in the third line as
\begin{multline}\label{eq:UPiUUx4}
\sum_{j=1}^N\sum_{k\neq j}^N  c_j V(a_j-a_k)\big(A_+(x-a_j)-A_+(x-a_k)\big) \\
=\frac12 \sum_{j=1}^N \sum_{k\neq j}^N (c_j-c_k) V(a_j-a_k)\big(A_+(x-a_j)-A_+(x-a_k)\big) \\
=\sum_{j=1}^N \sum_{k\neq j}^N (c_j-c_k)V(a_j-a_k)A_+(x-a_j),
\end{multline}
symmetrizing the second double sum in the fourth line, and rearranging, \eqref{eq:PiUUx4} becomes
\begin{align}\label{eq:PiUUx5}
U\circ (\ii+\cT)(U\circ U)_x=&\; -2N\cc U +\frac{\ii}{2} \ee^{2\ii N\cc t} \sum_{j=1}^N\sum_{k\neq j}^N (c_j-c_k)\ftwo'(a_j-a_k)E \nonumber\\
&\; -2\ii \ee^{2\ii N \cc t}\sum_{j=1}^N\sum_{k\neq j}^N (c_j-c_k) V(a_j-a_k)A_+(x-a_j)  \nonumber \\
&\; +\ee^{2\ii N\cc t}\sum_{j=1}^N A_+'(x-a_j)\Bigg( 2\lambda-2\ii \sum_{k\neq j}^N c_k \alpha(a_j-a_k)\Bigg)-\ii \ee^{2\ii N\cc t}\sum_{j=1}^N c_j A_+''(x-a_j).
\end{align}

Inserting \eqref{eq:Ut}, \eqref{eq:Uxx}, and \eqref{eq:PiUUx5} into \eqref{eq:IMM2} yields
\begin{align}\label{eq:IMM3}
\ii U_t-U_{xx}- U\circ (\ii+\cT)(U\circ U^*)_x=&\;  \ee^{2\ii N\cc t}\Bigg(\ii\dot{\lambda}+\frac{\ii}{2}\sum_{j=1}^N\sum_{k\neq j}^N (c_j-c_k) \ftwo'(a_j-a_k)\Bigg)E \nonumber \\
&\;     +\ee^{2\ii N\cc t}\sum_{j=1}^N A_+(x-a_j) \Bigg(-\dot{c}_j+2\ii\sum_{k\neq j}^N (c_j-c_k)V(a_j-a_k)\Bigg) \nonumber \\
&\;  +\ee^{2\ii N\cc t}\sum_{j=1}^N A_+'(x-a_j)\Bigg(c_j\dot{a}_j-2\lambda-2\ii\sum_{k\neq j}^N c_k \alpha(a_j-a_k) \Bigg).
\end{align}
Then, the linear independence of $E$ and $\{A_+(x-a_j),A_+'(x-a_j)\}_{j=1}^N$ for distinct $\{a_j\}_{j=1}^N$ \eqref{eq:ajakelliptic} implies that the periodic IMM system \eqref{eq:IMM2} is satisfied when \eqref{eq:cjdot}--\eqref{eq:ajdot} and \eqref{eq:lambdadot} hold.

\subsection{Proof of Propositon~\ref{prop:constraints}}\label{app:constraintproof} 

We start by making the definition
\begin{equation}\label{eq:Dj}
D_j\coloneqq c_j\bigg(\lambda^*-\ii \sum_{k=1}^N c_k^*\alpha(a_j-a_k^*+\ii\delta)\bigg) \quad (j=1,\ldots,N).
\end{equation}
We will show each $D_j$ is conserved when the initial conditions $\{D_j(0)=-1\}_{j=1}^N$ are satisfied. In this proof we repeatedly use that $\alpha(z)$ is a $2\ii\delta$-periodic, odd function and that $V(z)$ is a $2\ii\delta$-periodic, even function; see \eqref{eq:imperiod}--\eqref{eq:parity}.

By differentiating \eqref{eq:Dj} with respect to $t$ using and rearranging, we have
\begin{align}\label{eq:Djdotmaster}
\dot{D}_j-c_j\dot{\lambda}^*=&\;  \dot{c}_j\Bigg(\lambda^*-\ii\sum_{k=1}^N c_k^*\alpha(a_j-a_k^*+\ii\delta)\Bigg)+\ii\sum_{k=1}^Nc_jc_k^*\dot{a}_jV(a_j-a_k^*+\ii\delta)\nonumber\\
&\;-\ii \sum_{k=1}^N c_j \dot{c}_k^*\alpha(a_j-a_k^*+\ii\delta) 
-\ii\sum_{k=1}^N c_jc_k^*\dot{a}_k^*V(a_j-a_k^*+\ii\delta),
\end{align}
We consider the first and second lines of \eqref{eq:Djdotmaster} separately.

By inserting \eqref{eq:cjdot} and \eqref{eq:ajdot}, the first line of \eqref{eq:Djdotmaster} becomes 
\begin{align}\label{eq:Djdot2line1}
&2\ii \lambda^* \sum_{k\neq j}^N (c_j-c_k)V(a_j-a_k)+2\sum_{k=1}^N\sum_{	l\neq j}^N c_k^*(c_j-c_l)\alpha(a_j-a_k^*+\ii\delta)V(a_j-a_l) \nonumber \\
&+2\ii\lambda \sum_{k=1}^N c_k^*V(a_j-a_k^*+\ii\delta)-2\sum_{k=1}^N\sum_{l\neq j}^N c_k^*c_l\alpha(a_j-a_l)V(a_j-a_k^*+\ii\delta)
\end{align}
To proceed, we need the following identity
\begin{multline}\label{eq:Djdotid1}
\alpha(a_j-a_k^*+\ii\delta)V(a_j-a_l)+\alpha(a_j-a_l)V(a_j-a_k^*+\ii\delta)= \\ \alpha(a_k^*-a_l+\ii\delta)\big(V(a_j-a_k^*+\ii\delta)-V(a_j-a_l)\big)-\frac12\big(\ftwo'(a_j-a_k^*+\ii\delta)+\ftwo'(a_j-a_l)\big),
\end{multline}
which is obtained by differentiating \eqref{eq:Idmain} and specializing variables. Inserting \eqref{eq:Djdotid1} into \eqref{eq:Djdot2line1} gives
\begin{align}\label{eq:Djdot2line1step2}
&2\ii \lambda^* \sum_{k\neq j}^N (c_j-c_k)V(a_j-a_k)+2\ii\lambda\sum_{k=1}^N c_k^*V(a_j-a_k^*+\ii\delta)	 \nonumber \\
&+2\sum_{k\neq j}^N\sum_{l=1}^N c_jc_l^*\alpha(a_j-a_l^*+\ii\delta)V(a_j-a_k)-2\sum_{k\neq j}^N\sum_{l=1}^N c_kc_l^* \alpha(a_k-a_l^*+\ii\delta)V(a_j-a_l) \nonumber\\
&-2\sum_{k=1}^N\sum_{l\neq j}^N c_k^*c_l \alpha(a_k^*-a_l+\ii\delta)V(a_j-a_k^*+\ii\delta) +\sum_{k=1}^N\sum_{l\neq j}^N c_k^*c_l \big(\ftwo'(a_j-a_k^*+\ii\delta)+\ftwo'(a_j-a_l)\big),
\end{align}
where we have renamed indices $k\leftrightarrow l$ in the second line. The definition of $D_j$ \eqref{eq:Dj} allows us to rewrite \eqref{eq:Djdot2line1step2} as
\begin{align}\label{eq:Djdot2line1step3}
&2\ii\sum_{k\neq j}^N (D_j-D_k)V(a_j-a_k)+2\ii\lambda \sum_{k=1}^N c_k^* V(a_j-a_k^*+\ii\delta)\nonumber\\
&-2\sum_{k=1}^N\sum_{l\neq j}^N c_k^*c_l \alpha(a_k^*-a_l+\ii\delta)V(a_j-a_k^*+\ii\delta) +\sum_{k=1}^N\sum_{l\neq j}^N c_k^*c_l \big(\ftwo'(a_j-a_k^*+\ii\delta)+\ftwo'(a_j-a_l)\big).
\end{align}

We now turn to the second line of \eqref{eq:Djdotmaster}, which upon insertion of \eqref{eq:cjdot} and \eqref{eq:ajdot} reads
\begin{equation}\label{eq:Djdotline2}
-2\sum_{k=1}^N\sum_{l\neq k}^N c_j(c_k^*-c_l^*)\alpha(a_j-a_k^*+\ii\delta)V(a_k^*-a_l^*)-2\ii\sum_{k=1}^Nc_j\Bigg(\lambda^*-\ii\sum_{l\neq k}^Nc_l^*\alpha(a_k^*-a_l^*)\Bigg)V(a_j-a_k^*+\ii\delta).
\end{equation}
By making the replacements $a_j\rightarrow a_k^*$, $a_k^*\rightarrow a_j$, $a_l\rightarrow a_l^*$ in \eqref{eq:Djdotid1}, we find
\begin{multline}\label{eq:Djdotid2}
	\alpha(a_j-a_k^*+\ii\delta)V(a_k^*-a_l^*) -\alpha(a_k^*-a_l^*)V(a_j-a_k^*+\ii\delta)= \\  \alpha(a_j-a_l^*-\ii\delta)\big(V(a_k^*-a_l^*)-V(a_j-a_k^*+\ii\delta)\big)
 -\frac{1}{2}\big(\ftwo'(a_j-a_k^*+\ii\delta)-\ftwo'(a_k^*-a_l^*)\big).
\end{multline}
Inserting \eqref{eq:Djdotid2} into \eqref{eq:Djdotline2} and simplifying, we arrive at 
\begin{align}\label{eq:Djdotline2step2}
&-2\ii\lambda^*\sum_{k=1}^N c_j V(a_j-a_k^*+\ii\delta)-2\sum_{k=1}^N\sum_{l\neq k}^N c_j\big(c_k^*\alpha(a_j-a_k^*+\ii\delta)-c_l^*\alpha(a_j-a_l^*+\ii\delta)\big)V(a_k^*-a_l^*)	 \nonumber \\
& -2\sum_{k=1}^N\sum_{l\neq k}^N c_jc_l^* \alpha(a_j-a_l^*+\ii\delta)V(a_j-a_k^*+\ii\delta)-\sum_{k=1}^N\sum_{l\neq k}^N c_jc_l^*\ftwo'(a_j-a_k^*+\ii\delta)+ \sum_{k=1}^N\sum_{l\neq k}^N c_jc_l^*\ftwo'(a_k^*-a_l^*).
\end{align}

It is possible to further simplify \eqref{eq:Djdotline2step2} in two ways. Firstly, we note that the double sum in the first line of \eqref{eq:Djdotline2step2} vanishes by symmetry. Secondly, using \eqref{eq:lambdadot}, the last sum in \eqref{eq:Djdotline2step2} can be reexpressed as
\begin{equation}\label{eq:DjdotSimpl2}
\sum_{k=1}^N \sum_{l\neq k}^N c_jc_l^*\ftwo'(a_k^*-a_l^*) = -\frac{1}{2}\sum_{k=1}^N\sum_{l\neq k}^N c_j(c_k^*-c_l^*)\ftwo'(a_k^*-a_l^*) = -c_j\dot{\lambda}^*.
\end{equation} 
Using these simplifications,  \eqref{eq:Djdotline2step2} becomes
\begin{align}\label{eq:Djdotline2step3}
&-2\ii\lambda^*\sum_{k=1}^N c_j V(a_j-a_k^*+\ii\delta) -2\sum_{k=1}^N\sum_{l\neq k}^N c_jc_l^* \alpha(a_j-a_l^*+\ii\delta)V(a_j-a_k^*+\ii\delta) \nonumber \\
&-\sum_{k=1}^N\sum_{l\neq k}^N c_jc_l^*\ftwo'(a_j-a_k^*+\ii\delta)-c_j\dot{\lambda}^*
\end{align}

Let us return to our original expression. Inserting \eqref{eq:Djdot2line1step3} and \eqref{eq:Djdotline2step3} in place of the first and second lines of \eqref{eq:Djdotmaster}, respectively leads to the equation
\begin{align}\label{eq:Djdotmasterstep2}
\dot{D}_j=&\; 2\ii \sum_{k\neq j}^N (D_j-D_k)V(a_j-a_k) \nonumber \\ 
&\;-2\ii \lambda^* \sum_{k=1}^N c_j V(a_j-a_k^*+\ii\delta)-2\sum_{k=1}^N\sum_{l\neq k}^N c_jc_l^* \alpha(a_j-a_l^*+\ii\delta)V(a_j-a_k^*+\ii\delta) \nonumber \\
&\; +2\ii\lambda\sum_{k=1}^N c_k^*V(a_j-a_k^*+\ii\delta)-2\sum_{k=1}^N\sum_{l\neq j}^N c_k^*c_l \alpha(a_k^*-a_l+\ii\delta)V(a_j-a_k^*+\ii\delta) \nonumber \\
&\; +\sum_{k=1}^N\sum_{l\neq j}^N c_k^*c_l \big(\ftwo'(a_j-a_k^*+\ii\delta)+\ftwo'(a_j-a_l)\big)-\sum_{k=1}^N\sum_{l\neq k}^N c_jc_l^*\ftwo'(a_j-a_k^*+\ii\delta).
\end{align}
We may rewrite the second and third lines of
\eqref{eq:Djdotmasterstep2} as, respectively,
\begin{equation}\label{eq:firstline}
-2\ii \sum_{k=1}^N c_j  \Bigg(\lambda^*-\ii\sum_{l\neq k}^N c_l^*\alpha(a_j-a_l^*+\ii\delta)\Bigg)V(a_j-a_k^*+\ii\delta)=-2\ii\sum_{k=1}^N\big(D_j+\ii c_j\alpha(a_j-a_k^*+\ii\delta)V(a_j-a_k^*+\ii\delta)
\end{equation}
and
\begin{align}\label{eq:secondline}
2\ii\sum_{k=1}^N c_k^*\Bigg(\lambda+\ii\sum_{l\neq j}^N c_l\alpha(a_k^*-a_l+\ii\delta)\Bigg)V(a_j-a_k^*+\ii\delta)=2\sum_{k=1}^N (D_k^*+\ii c_j\alpha(a_j-a_k^*+\ii\delta)   \big)V(a_j-a_k^*+\ii\delta). 
\end{align}
The fourth line of \eqref{eq:Djdotmasterstep2} is now shown to vanish. From Lemma~\ref{lem:cjconstraint} and \eqref{eq:cjconstraint}, we have that $\sum_{l\neq j}^Nc_l=-c_j$, $\sum_{k=1}^N c_k^*=0$, and $\sum_{l\neq k} c_l^*=-c_k^*$; inserting these into the third line in \eqref{eq:Djdotmasterstep2} gives
\begin{align}\label{eq:ftwopsum}
-\sum_{k=1}^N c_jc_k^*\ftwo'(a_j-a_k^*+\ii\delta)+\sum_{k=1}^Nc_jc_k^* \ftwo'(a_j-a_k^*+\ii\delta)=0. 	
\end{align}
Thus, replacing the second, third, and fourth lines of \eqref{eq:Djdotmasterstep2} by \eqref{eq:firstline}, \eqref{eq:secondline}, and \eqref{eq:ftwopsum}, respectively, we obtain the following system of linear ODEs for $\{D_j\}_{j=1}^N$,
\begin{equation}\label{eq:Djdonev2}
\dot{D}_j=2\ii\sum_{k\neq j}^N (D_j-D_k)V(a_j-a_k)-2\ii\sum_{k=1}^N (D_j-D_k^*)V(a_j-a_k^*+\ii\delta)\quad (j=1,\ldots,N). 
\end{equation}

It is clear that \eqref{eq:Djdonev2} admits the solution $\{D_j(t)=-1\}_{j=1}^N$ on $[0,\tau)$. To see that this solution is unique when $\{D_j(0)=-1\}_{j=1}^N$, it suffices to note that all coefficients of $\{D_j,D_j^*\}_{j=1}^N$ in the linear system \eqref{eq:Djdonev2} are regular on $[0,\tau)$ as a consequence of the assumption that \eqref{eq:ajakelliptic} holds on $[0,\tau)$. By comparing the constraint \eqref{eq:constraintperiodic} with the definition of $D_j$ \eqref{eq:Dj}, we understand that this solution is equivalent to the conservation of \eqref{eq:constraintperiodic} on $[0,\tau)$. This completes our proof.

\subsection{Proof of Proposition~\ref{prop:CM}}\label{app:CMproof}

We first establish the following lemma. 

\begin{lemma}\label{lem:cjneq0}
Let $\lambda$ and $\{a_j,c_j\}_{j=1}^N$ be a solution of \eqref{eq:cjdot}--\eqref{eq:ajdot} on $[0,\tau)$ such that \eqref{eq:cjconstraint} and \eqref{eq:constraintperiodic} hold at $t=0$ and \eqref{eq:ajakelliptic} holds on $[0,\tau)$. Then,
\begin{equation}\label{eq:cjneq0}
	c_j\neq 0 \quad (j=1,\ldots,N)
\end{equation}
holds on $[0,\tau)$. 
\end{lemma}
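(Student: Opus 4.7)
The plan is to argue by contradiction, leveraging the conservation of the constraint \eqref{eq:constraintperiodic} established in Proposition~\ref{prop:constraints}. First, I would note that \eqref{eq:constraintperiodic} at $t=0$ forces $c_j(0)\ne 0$ for every $j$: if some $c_j(0)$ vanished, the left-hand side would collapse to $1$, contradicting the constraint. By continuity of $c_j$, the set of times on which all components are nonzero is open and contains a neighborhood of $0$.

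Set $t_\star\coloneqq \inf\{t\in[0,\tau): c_j(t)=0 \text{ for some }j\}$, with the convention $\inf\emptyset=\tau$. If $t_\star=\tau$ we are done, so assume $t_\star<\tau$; by the previous step $t_\star>0$, every $c_j$ is nonzero on $[0,t_\star)$, and the quantity $D_j$ defined in \eqref{eq:Dj} is smooth and well-defined there. I would then replay the calculation of Proposition~\ref{prop:constraints} on this subinterval: it produces the closed linear system \eqref{eq:Djdonev2} for $\{D_j\}$ with coefficients regular on $[0,t_\star)$ thanks to \eqref{eq:imaj}--\eqref{eq:ajakelliptic}, and together with $D_j(0)=0$ supplied by \eqref{eq:constraintperiodic}, uniqueness of solutions to a regular linear initial-value problem forces $D_j\equiv 0$ on $[0,t_\star)$. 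Passing to the limit $t\uparrow t_\star$, continuity gives $D_{j^\star}(t_\star)=0$ for any index $j^\star$ with $c_{j^\star}(t_\star)=0$; but direct evaluation of \eqref{eq:Dj} at such $j^\star$ collapses to $0+1=1$, since \eqref{eq:imaj} keeps $\im\big(a_{j^\star}-a_k^*+\ii\delta\big)\in(-2\delta,0)$, staying away from the lattice $2\ell\Z+2\ii\delta\Z$ of poles of $\alpha$. This contradiction proves $t_\star=\tau$, hence $c_j\ne 0$ throughout $[0,\tau)$.

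The one delicate point I would flag as the main obstacle is that the appeal to Proposition~\ref{prop:constraints} uses \eqref{eq:lambdadot}, which is not explicitly listed among the hypotheses of Lemma~\ref{lem:cjneq0}. In the logical structure of Appendix~\ref{app:thmproof}, however, this lemma is only needed within the proof of Proposition~\ref{prop:CM} (to justify dividing by $c_j$ when extracting $\dot{a}_j$ from \eqref{eq:ajdot}), at which point \eqref{eq:lambdadot} is part of the ambient hypotheses; alternatively, one may verify that \eqref{eq:lambdadot} can be recovered as a consistency condition by differentiating \eqref{eq:ajdot} in time and substituting \eqref{eq:cjdot}, so the argument closes under either reading.
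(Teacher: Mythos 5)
Your proof is correct and follows essentially the same route as the paper's, which is just two sentences: Proposition~\ref{prop:constraints} propagates \eqref{eq:constraintperiodic} to all of $[0,\tau)$, and $c_j=0$ at any time would collapse its left-hand side to $1$, a contradiction. The first-vanishing-time bootstrap is unnecessary---the derivation of \eqref{eq:Djdonev2} never divides by $c_j$, since \eqref{eq:ajdot} is only ever used in the form $c_j\dot{a}_j=\cdots$, so Proposition~\ref{prop:constraints} applies on all of $[0,\tau)$ directly (and note the conserved value is $D_j\equiv-1$ in the paper's normalization \eqref{eq:Dj}, not $0$); your observation that \eqref{eq:lambdadot} is absent from the lemma's stated hypotheses yet needed for Proposition~\ref{prop:constraints} is a fair catch, correctly resolved by noting that the lemma is only invoked in contexts where \eqref{eq:lambdadot} is assumed.
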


\begin{proof}
The constraint \eqref{eq:constraintperiodic} will be violated if $c_j=0$ for any $j=1,\ldots,N$. Under our assumptions, Proposition~\ref{prop:constraints} guarantees that \eqref{eq:constraintperiodic} holds on $[0,\tau)$, so this would be a contradiction.  
\end{proof}

Using Lemma~\ref{lem:cjneq0}, we relate certain solutions of the systems of equations (i) \eqref{eq:cjdot}--\eqref{eq:ajdot} and \eqref{eq:lambdadot} and (ii) \eqref{eq:CM}, \eqref{eq:cjdot}, and \eqref{eq:lambdadot}.

\begin{lemma}\label{lem:CM}
Let $\lambda$ and $\{a_j,c_j\}_{j=1}^N$ be a solution of \eqref{eq:cjdot}--\eqref{eq:ajdot} and \eqref{eq:lambdadot} on $[0,\tau)$ such that \eqref{eq:cjconstraint} and \eqref{eq:constraintperiodic} hold at $t=0$ and \eqref{eq:ajakelliptic} holds on $[0,\tau)$. Then, \eqref{eq:CM} holds on $[0,\tau)$. 
\end{lemma}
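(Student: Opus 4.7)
The plan is to differentiate the first-order constraint \eqref{eq:ajdot} with respect to time, substitute the other time-evolution equations \eqref{eq:cjdot} and \eqref{eq:lambdadot} together with \eqref{eq:ajdot} itself to eliminate all first-order derivatives, and then collapse the resulting sums using the functional identities of Appendix~\ref{subsec:elliptic}. Because Lemma~\ref{lem:cjneq0} guarantees $c_j\neq 0$ on $[0,\tau)$, division by $c_j$ is legal whenever needed. Explicitly, differentiating \eqref{eq:ajdot} and using $\alpha'(z)=-V(z)$ from \eqref{eq:alphatoV} yields
\begin{equation*}
c_j\ddot{a}_j+\dot{c}_j\dot{a}_j=2\dot{\lambda}+2\ii\sum_{k\neq j}^N\bigl[\dot{c}_k\,\alpha(a_j-a_k)-c_k V(a_j-a_k)(\dot{a}_j-\dot{a}_k)\bigr].
\end{equation*}
I would first collect the coefficient of $\dot{a}_j$ and invoke \eqref{eq:cjdot}, which collapses that coefficient to $-2\ii c_j\sum_{k\neq j}V(a_j-a_k)$; after re-inserting \eqref{eq:ajdot} to eliminate $c_j\dot{a}_j$, and similarly inserting \eqref{eq:ajdot} and \eqref{eq:cjdot} for each $\dot{a}_k$ and $\dot{c}_k$ in the remaining sums, all time derivatives vanish. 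The $\lambda$-linear contributions cancel because $\lambda$ enters every use of \eqref{eq:ajdot} with the same coefficient $2$.

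The core of the computation is then the bilinear identity obtained by differentiating \eqref{eq:Idmain} with respect to $z$:
\begin{equation*}
\alpha(z-a)V(z-b)+V(z-a)\alpha(z-b)=\alpha(a-b)\bigl(V(z-a)-V(z-b)\bigr)-\tfrac{1}{2}\bigl(\ftwo'(z-a)+\ftwo'(z-b)\bigr).
\end{equation*}
Applied with $z=a_j$ and $a,b\in\{a_k\}_{k\neq j}$, this reshuffles the double sum $\sum_{k,l\neq j}c_k\alpha(a_j-a_k)V(a_j-a_l)$ into contributions built from $\alpha(a_k-a_l)\bigl(V(a_j-a_k)-V(a_j-a_l)\bigr)$ plus $\ftwo'$-terms. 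The $\ftwo'$-terms are precisely the ones produced by $2\dot{\lambda}$ via \eqref{eq:lambdadot}, and they cancel after the constraint \eqref{eq:cjconstraint} (valid throughout $[0,\tau)$ by Lemma~\ref{lem:cjconstraint}) is used to rewrite sums of the form $\sum_{l\neq j}c_l$ as $-c_j$. A parallel swap of indices in $\sum_{k\neq j}\sum_{l\neq k}(c_k-c_l)V(a_k-a_l)\alpha(a_j-a_k)$ coming from $\dot{c}_k$ matches the leftover cross-terms.

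What remains is a sum of terms of the form $c_j\alpha(a_j-a_k)V(a_j-a_k)$, and here I would use $V'(z)=-2\alpha(z)V(z)-\ftwo'(z)$, which follows by differentiating \eqref{eq:IdV}, to identify this residue with $-4 c_j\sum_{k\neq j}V'(a_j-a_k)$ (a final application of \eqref{eq:cjconstraint} eliminates a spurious $\ftwo'$ contribution via the antisymmetry $\ftwo'(-z)=-\ftwo'(z)$ in \eqref{eq:parity}). Dividing by $c_j\neq 0$ then gives \eqref{eq:CM}. The main obstacle will be the combinatorial bookkeeping: each differentiation of $\alpha(a_j-a_k)$ through the chain rule produces asymmetric double sums in which one index is pinned to $j$ and the other ranges freely, and recasting these in the symmetric form demanded by the bilinear identity requires repeated index renaming, splitting off diagonal $(k=j)$ contributions, and careful use of the parities $V(-z)=V(z)$ and $\alpha(-z)=-\alpha(z)$ from \eqref{eq:parity}. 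Getting every sign and every application of \eqref{eq:cjconstraint} right so that the $\ftwo'$-terms match $\dot{\lambda}$ exactly is the most delicate part.
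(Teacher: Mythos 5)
Your proposal is correct and follows essentially the same route as the paper's proof in Appendix~\ref{app:CMproof}: differentiate \eqref{eq:ajdot}, eliminate all first derivatives via \eqref{eq:cjdot}, \eqref{eq:ajdot}, and \eqref{eq:lambdadot}, collapse the double sums with the $z$-derivative of \eqref{eq:Idmain} and the identity $\alpha(z)V(z)=-\tfrac12(V'(z)+\ftwo'(z))$, cancel the residual $\ftwo'$-terms using \eqref{eq:cjconstraint} (via Lemma~\ref{lem:cjconstraint}) and \eqref{eq:lambdadot}, and finally divide by $c_j\neq 0$ as justified by Lemma~\ref{lem:cjneq0}. The remaining work is exactly the combinatorial bookkeeping you identify, which the paper carries out in \eqref{eq:BTcalc1}--\eqref{eq:cjajddot2}.
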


\begin{proof}
By differentiating \eqref{eq:ajdot} with respect to time and inserting \eqref{eq:cjdot}, we compute
\begin{align}\label{eq:BTcalc1} \nonumber 
&c_j\ddot{a}_j-2\dot{\lambda} \\
&=-\dot{c}_j\dot{a}_j+2\ii \sum_{k\neq j}^N \big(\dot{c}_k\alpha(a_j-a_k)-c_k(\dot{a}_j-\dot{a}_k)V(a_j-a_k)\big) \nonumber \\
& = -2\ii \dot{a}_j\sum_{k\neq j}^N (c_j-c_k)V(a_j-a_k)+2\ii \sum_{k\neq j}^N\Bigg(2\ii\sum_{l\neq k}^N (c_k-c_l)\alpha(a_j-a_k)V(a_k-a_l)-c_k(\dot{a}_j-\dot{a}_k)V(a_j-a_k)\Bigg) \nonumber \\
&= -2\ii \sum_{k\neq j}^N (c_j\dot{a}_j-c_k\dot{a}_k)V(a_j-a_k)-4\sum_{k\neq j}^N\sum_{l\neq k}^N (c_k-c_l)\alpha(a_j-a_k)V(a_k-a_l).
\end{align}

Next, by inserting \eqref{eq:ajdot} into \eqref{eq:BTcalc1}, isolating diagonal terms, and simplifying, we find
\begin{align}\label{eq:BTcalc2}
&c_j\ddot{a}_j-2\dot{\lambda} \nonumber \\
&= 4\sum_{k\neq j}^N\Bigg( \sum_{l\neq j}^N c_l\alpha(a_j-a_l)-\sum_{l\neq k}^N c_l\alpha(a_k-a_l)\Bigg)V(a_j-a_k)-4\sum_{k\neq j}^N\sum_{l\neq k}^N (c_k-c_l)\alpha(a_j-a_k)V(a_k-a_l) \nonumber \\
&= 4\sum_{k\neq j}^N\sum_{l\neq j,k}^N c_l\big(\alpha(a_j-a_l)-\alpha(a_k-a_l)\big)V(a_j-a_k)+4\sum_{k\neq j}^N(c_j+c_k)\alpha(a_j-a_k)V(a_j-a_k) \nonumber \\
&\phantom{=} -4\sum_{k\neq j}^N\sum_{l\neq j, k}^N (c_k-c_l)\alpha(a_j-a_k)V(a_k-a_l)+4\sum_{k\neq j}^N (c_j-c_k)\alpha(a_j-a_k)V(a_j-a_k) \nonumber \\
&= 4\sum_{k\neq j}^N\sum_{l\neq j,k}^N c_l\big(\alpha(a_j-a_l)-\alpha(a_k-a_l)\big)V(a_j-a_k) \nonumber \\
&\phantom{=} +4\sum_{k\neq j}^N\sum_{l\neq j, k}^N c_l \big(\alpha(a_j-a_k)-\alpha(a_j-a_l) \big)   V(a_k-a_l)+8\sum_{k\neq j}^N c_j\alpha(a_j-a_k)V(a_j-a_k).
\end{align}

We now insert the identities
\begin{align}
\big(\alpha(a_j-a_l)-\alpha(a_k-a_l)\big)V(a_j-a_k)=&\; -\big(\alpha(a_j-a_k)-\alpha(a_j-a_l)\big)V(a_k-a_l) \nonumber \\
&\; -\frac12\big(\ftwo'(a_j-a_k)-\ftwo'(a_k-a_l)\big),
\end{align}
which can be obtained by differentiating \eqref{eq:Idmain} and specializing variables, and
\begin{equation}
\alpha(z)V(z)=-\frac12\big(V'(z)+\ftwo'(z)\big),
\end{equation}
which follows from \eqref{eq:alphatoV}--\eqref{eq:IdV}, into \eqref{eq:BTcalc2}. This yields
\begin{align}\label{eq:cjajddotcalc}
c_j\ddot{a}_j-2\dot{\lambda}=&\; -2\sum_{k\neq j}^N   \sum_{l\neq j,k}^N c_l\big(\ftwo'(a_j-a_k)-\ftwo'(a_k-a_l)\big)-4\sum_{k\neq j}^N c_jV'(a_j-a_k)-4\sum_{k\neq j}^N c_j \ftwo'(a_j-a_k).
\end{align}
Given $\sum_{l\neq j,k}^N c_l=-c_j-c_k$ by Lemma~\ref{lem:cjconstraint} and \eqref{eq:cjconstraint} and inserting \eqref{eq:lambdadot} into \eqref{eq:cjajddotcalc}, it follows that
\begin{align}\label{eq:cjajddot}
&c_j\ddot{a}_j-2\dot{\lambda} \nonumber \\
&=2\sum_{k\neq j}^N (c_j+c_k)\ftwo'(a_j-a_k)+2\sum_{k\neq j}^N\sum_{l\neq j,k}^N c_l\ftwo'(a_k-a_l)-4\sum_{k\neq j}^N c_jV'(a_j-a_k)-4\sum_{k\neq j}^N c_j \ftwo'(a_j-a_k)  \nonumber \\
&=-2\sum_{k\neq j}^N (c_j-c_k)\ftwo'(a_j-a_k)-\sum_{k\neq j}^N\sum_{l\neq j,k}^N (c_k-c_l)\ftwo'(a_k-a_l)-4\sum_{k\neq j}^N c_j V'(a_j-a_k),
\end{align}
where we have symmetrized the double sum, using that $\ftwo'(z)$ is an odd function \eqref{eq:parity}, in the second step. Employing the summation identity $\sum_{k=1}^N\sum_{l\neq k}^N s_{k,l}=\sum_{k\neq j}^N\sum_{l\neq j,k}^N s_{k,l}+2\sum_{k\neq j}^N s_{j,k}$ for symmetric complex arrays $\{s_{k,l}\}_{k,l=1}^N$ together with  \eqref{eq:lambdadot} in \eqref{eq:cjajddot} implies that 
\begin{equation}\label{eq:cjajddot2}
c_j\ddot{a}_j=-4\sum_{k\neq j}^N c_jV'(a_j-a_k).
\end{equation}
By Lemma~\ref{lem:cjneq0}, we may divide \eqref{eq:cjajddot2} through by $c_j$ to obtain \eqref{eq:CM}. This completes the proof.
\end{proof}

Consider the initial value problem consisting of the time evolution equations \eqref{eq:CM}, \eqref{eq:cjdot}, and \eqref{eq:lambdadot} and initial data satisfying \eqref{eq:ajdot} and \eqref{eq:cjconstraint} at $t=0$. By the Picard-Lindel\"{o}f theorem, this initial value problem admits a unique local solution, which may be uniquely extended (see, e.g., \cite[Corollary~3.2]{hartman1982}) as long as (i) no solution variable goes to infinity and (ii) \eqref{eq:ajakelliptic} holds (so that the functions defining the ODEs remain locally Lipschitz).

Similarly, consider the initial value problem consisting of the time evolution equations \eqref{eq:cjdot}--\eqref{eq:ajdot} and \eqref{eq:lambdadot} and the same initial data as imposed for the previous initial value problem (excepting the initial velocities of $\{a_j\}_{j=1}^N$; because \eqref{eq:ajdot} is a first-order equation, the initial velocities are not imposed, but will nonetheless match as \eqref{eq:ajdot} is a condition on the initial data in the first problem and a time evolution equation in the second). This initial value problem has a unique local solution, which may be uniquely extended as long as (i) no solution variable goes to infinity and (ii) \eqref{eq:ajakelliptic} and \eqref{eq:cjneq0} hold. However, Lemma~\ref{lem:cjneq0} shows that the condition \eqref{eq:cjneq0} cannot be violated. 

By Lemma~\ref{lem:CM}, the solutions to the two initial value problems must coincide on any interval $[0,\tau)$ on which \eqref{eq:ajakelliptic} holds; this implies the result.

\section{Details on the local limit}
\label{app:local}

The operators $T$ and $\tilde{T}$ in \eqref{eq:TT} satisfy \cite{scoufis2005,berntson2020}
\begin{equation}\label{eq:TTlimit}
\begin{split}
(Tf)(x)=&\; -\frac{1}{2\delta}\int_{-\infty}^x f(x')\,\mathrm{d}x'+\frac{1}{2\delta}\int_{x}^{\infty}f(x')\,\mathrm{d}x'+O(\delta), \\
(\tilde{T}f)(x)=&\; -\frac{1}{2\delta}\int_{-\infty}^x f(x')\,\mathrm{d}x'+\frac{1}{2\delta}\int_{x}^{\infty}f(x')\,\mathrm{d}x'+O(\delta)
\end{split}\quad \text{ as } \delta\downarrow 0.
\end{equation}

By inserting the expansions \eqref{eq:TTlimit} with $f=(|u|^2)_x$ and $f=(|v|^2)_x$ into \eqref{eq:IMM}, we obtain
\begin{equation}\label{eq:IMMlimit}
\begin{split}
\ii u_t=&\; u_{xx}+\ii u(|u|^2)_x-\frac{1}{\delta}u|u|^2+\frac{1}{2\delta}u(|u_{+\infty}|^2+|u_{-\infty}|^2)+\frac{1}{\delta}u|v|^2-\frac{1}{2\delta}u(|v_{+\infty}|^2+|v_{-\infty}|^2)+O(\delta), \\
\ii v_t=&\; v_{xx}+\ii v(|v|^2)_x+\frac{1}{\delta}v|v|^2-\frac{1}{2\delta}v(|v_{+\infty}|^2+|v_{-\infty}|^2)-\frac{1}{\delta}v|u|^2+\frac{1}{2\delta}v(|u_{+\infty}|^2+|u_{-\infty}|^2)+O(\delta),
\end{split}
\end{equation}
where
\begin{equation}
u_{\pm\infty}\coloneqq \lim_{x\to\pm\infty} u(x),\qquad v_{\pm\infty}\coloneqq \lim_{x\to\pm\infty} v(x).
\end{equation}
By imposing $|u_{+\infty}|^2+|u_{-\infty}|^2=|v_{+\infty}|^2+|v_{-\infty}|^2$ and rescaling $u\to \sqrt{\delta} u$ and $v\to \sqrt{\delta} v$ in \eqref{eq:IMMlimit}, we obtain \eqref{eq:manakov} with $\sigma_1=-\sigma_2=-1$ in the limit $\delta\downarrow 0$.

\section{Amplitudes of solutions}\label{app:amplitudes}

The derivation of \eqref{eq:moduli}--\eqref{eq:moduli2} and \eqref{eq:moduliperiodic}--\eqref{eq:B(t)} is provided in Appendix~\ref{app:moduli}. A proof of the constancy of the quantity $B$ in \eqref{eq:B(t)} is given in Appendix~\ref{app:B}.

\subsection{Derivation}\label{app:moduli}

Using \eqref{eq:ansatz2}, we write
\begin{equation}\label{eq:UUs1}
U\circ U^*=|\lambda|^2E-\ii\lambda \sum_{j=1}^N c_j^*A_-(x-a_j^*)+\ii\lambda^*\sum_{j=1}^N c_jA_+(x-a_j)+\sum_{j=1}^N\sum_{k=1}^N c_jc_k^*A_+(x-a_j)\circ A_-(x-a_k^*).
\end{equation}
We now insert \eqref{eq:solId1}, which gives
\begin{align}\label{eq:UUs2}
U\circ U^*=&\;|\lambda|^2E-\ii\lambda \sum_{k=1}^N c_k^*A_-(x-a_k^*)+\ii\lambda^*\sum_{j=1}^N c_jA_+(x-a_j) \nonumber \\
&\; +\sum_{j=1}^N\sum_{k=1}^N c_jc_k^*\alpha(a_j-a_k^*+\ii\delta)\big(A_+(x-a_j)-A_-(x-a_k^*)\big) \nonumber \\
&\; +\frac12\sum_{j=1}^N\sum_{k=1}^N c_jc_k^* \bigg(K_+(x-a_j)+K_-(x-a_k^*)+\ftwo(a_j-a_k^*+\ii\delta)E+\frac{3\zeta(\ii\delta;\ell,\ii\delta)}{\ii\delta} E\bigg) \nonumber \\
=&\; |\lambda|^2+\ii \sum_{j=1}^N c_j \Bigg( \lambda^*-\ii \sum_{k=1}^N c_k^*\alpha(a_j-a_k^*+\ii\delta)\Bigg)A_+(x-a_j) \nonumber \\
&\; -\ii \sum_{j=1}^N c_j^* \Bigg( \lambda+\ii \sum_{k=1}^N c_k\alpha(a_j^*-a_k-\ii\delta)\Bigg)A_-(x-a_j^*)  \nonumber \\
&\; +\frac12\sum_{j=1}^N\sum_{k=1}^N c_jc_k^* \bigg(K_+(x-a_j)+K_-(x-a_k^*)+\ftwo(a_j-a_k^*+\ii\delta)E+\frac{3\zeta(\ii\delta;\ell,\ii\delta)}{\ii\delta} E\bigg).
\end{align}

By using \eqref{eq:constraintperiodic}, we arrive at
\begin{align}\label{eq:UUs3}
U\circ U^*=&\; |\lambda|^2 E-\ii \sum_{j=1}^N \big(A_+(x-a_j)-A_-(x-a_j^*)\big) \nonumber \\
&\; +\frac12\sum_{j=1}^N\sum_{k=1}^N c_jc_k^* \bigg(K_+(x-a_j)+K_-(x-a_k^*)+\ftwo(a_j-a_k^*+\ii\delta)E+\frac{3\zeta(\ii\delta;\ell,\ii\delta)}{\ii\delta} E\bigg).
\end{align}	

We now consider the periodic and real-line cases separately.

\paragraph{Periodic case.}
By imposing the constraint \eqref{eq:cjconstraint}, the second line of \eqref{eq:UUs3} simplifies considerably,  
\begin{equation}
U\circ U^*=|\lambda|^2 E-\ii\sum_{j=1}^N \big(A_+(x-a_j)-A_-(x-a_j^*)\big)+\frac12\sum_{j=1}^N\sum_{k=1}^N c_jc_k^*\ftwo(a_j-a_k^*+\ii\delta)E.	
\end{equation}
By recalling the notation \eqref{eq:circ} and the definitions of $A_{\pm}(z)$ \eqref{eq:Apm} and $E$ \eqref{eq:AE}, we obtain \eqref{eq:moduliperiodic}--\eqref{eq:B(t)}.

\paragraph{Real-line case.}

We take the limit of \eqref{eq:UUs3} as $\ell\to \infty$. By using \eqref{eq:zetalimit2} and \eqref{eq:ftwolimit}, which together with the definition of \eqref{eq:Kpm} further imply 
\begin{equation}
\lim_{\ell\to\infty} K_{\pm}(z)=\bigg(\frac{\pi}{2\delta}\bigg)^2 E,	
\end{equation}
in \eqref{eq:UUs3}, we obtain
\begin{equation}\label{eq:UUsrealline}
U\circ U^*=\lambda^2 E-\ii\sum_{j=1}^N \big(A_+(x-a_j)-A_-(x-a_j^*)\big)+\bigg(\frac{\pi}{2\delta}\bigg)^2\sum_{j=1}^N\sum_{k=1}^N c_jc_k^* E,	
\end{equation}
where we have used that $\lambda\in \R$ in this case. By writing the last term in \eqref{eq:UUsrealline} as $\big(\frac{\pi}{2\delta}\big)^2 \big|\sum_{j=1}^N c_j\big|^2 E$ and recalling the notation \eqref{eq:circ} and the definitions of $A_{\pm}(z)$ \eqref{eq:Apm} and $E$ \eqref{eq:AE}, we obtain \eqref{eq:moduli}--\eqref{eq:moduli2}.

\subsection{Conservation of $B$}\label{app:B}

We prove that $B$ defined in \eqref{eq:B(t)} is conserved when the conditions of Theorem~\ref{thm:ellipticsolitons} are met. The precise statement is given and followed by the proof. 

\begin{proposition}\label{prop:B}
Let $u$ and $v$ be a solution of the periodic IMM system constructed in Theorem~\ref{thm:ellipticsolitons} on $[0,\tau)$ with corresponding parameters $\lambda$ and $\{a_j,c_j\}_{j=1}^N$. Then, $B$ defined in \eqref{eq:B(t)} is conserved on $[0,\tau)$.   	
\end{proposition}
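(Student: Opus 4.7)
The strategy is to differentiate $B$ directly and show that $\dot B$ vanishes after substituting the equations of motion, in the same spirit as the proof of Proposition~\ref{prop:constraints} in Appendix~\ref{app:constraintproof}. Writing $B = |\lambda|^2 + \tfrac12 P$ with $P \coloneqq \sum_{j,k=1}^N c_j c_k^*\,\ftwo(a_j-a_k^*+\ii\delta)$, the product rule combined with the symmetries $\ftwo(a_j-a_k^*+\ii\delta)^*=\ftwo(a_k-a_j^*+\ii\delta)$ and $\ftwo'(a_j-a_k^*+\ii\delta)^*=-\ftwo'(a_k-a_j^*+\ii\delta)$ (consequences of \eqref{eq:parity}, \eqref{eq:Schwarz}, and the $2\ii\delta$-periodicity \eqref{eq:imperiod}) yields
\begin{equation*}
\dot B = 2\re(\lambda^*\dot\lambda) + \re(S) + \re(T),
\end{equation*}
where $S \coloneqq \sum_{j,k}\dot c_j\,c_k^*\,\ftwo(a_j-a_k^*+\ii\delta)$ and $T \coloneqq \sum_{j,k} c_j c_k^*\,\dot a_j\,\ftwo'(a_j-a_k^*+\ii\delta)$. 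The task thus reduces to showing that $2\lambda^*\dot\lambda + S + T$ has vanishing real part.

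Into $S$ I would substitute $\dot c_j$ from \eqref{eq:cjdot}, producing a triple sum over $V(a_j-a_l)\ftwo(a_j-a_k^*+\ii\delta)$ weighted by $(c_j-c_l)c_k^*$; into $T$ I would substitute $c_j\dot a_j$ from \eqref{eq:ajdot}, producing a linear term $2\lambda\sum_{j,k}c_k^*\ftwo'(a_j-a_k^*+\ii\delta)$ together with a triple sum over $\alpha(a_j-a_l)\ftwo'(a_j-a_k^*+\ii\delta)$; and $\dot\lambda$ is supplied by \eqref{eq:lambdadot}. To reduce the resulting products of special functions I would employ the differentiated addition formula obtained from \eqref{eq:Idmain}, namely
\begin{equation*}
\alpha(z-a)V(z-b)+\alpha(z-b)V(z-a) = \alpha(a-b)\bigl(V(z-a)-V(z-b)\bigr) - \tfrac12\bigl(\ftwo'(z-a)+\ftwo'(z-b)\bigr),
\end{equation*}
which is exactly the identity \eqref{eq:Djdotid1} used in Appendix~\ref{app:constraintproof}; suitable specializations of $z,a,b$ convert the $\alpha\cdot\ftwo'$ products into $\ftwo'$- and $V$-valued contributions. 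I would then eliminate the bare factors of $\lambda$ and $\lambda^*$ using the constraint \eqref{eq:constraintperiodic} rearranged as $c_j\lambda^* = -1+\ii\sum_k c_j c_k^*\alpha(a_j-a_k^*+\ii\delta)$ together with its complex conjugate, and use \eqref{eq:cjconstraint} (preserved under the flow by Lemma~\ref{lem:cjconstraint}) to discard any sum of the form $\bigl(\sum_j c_j\bigr)$ times a $j$-independent factor.

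The expected cancellation pattern is that the $\ftwo'(a_j-a_k)$-terms produced by applying the addition formula to $T$ precisely cancel $2\re(\lambda^*\dot\lambda)$ (which by \eqref{eq:lambdadot} reads $\re\bigl(\lambda^*\sum_{j\neq k}c_j\ftwo'(a_j-a_k)\bigr)$ after antisymmetrization in $(j,k)$), while the residual $V$- and $\ftwo$-valued double sums cancel in pairs after relabeling indices and invoking the parity properties \eqref{eq:parity}. The main obstacle I anticipate is the combinatorial bookkeeping: each substitution generates triple sums over $(j,k,l)$, and the cancellations only emerge after several index-relabelings combined with simultaneous use of both equality constraints. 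Since the algebraic architecture mirrors that of Appendix~\ref{app:constraintproof}, I do not expect new obstacles beyond tracking the increased number of terms.
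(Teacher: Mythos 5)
Your overall strategy --- differentiate $B$ directly and force $\dot B=0$ by substituting the equations of motion --- is not the route the paper takes, and as sketched it has a concrete gap. Your decomposition $\dot B = 2\re(\lambda^*\dot\lambda)+\re(S)+\re(T)$ is correct (the symmetries you invoke do make the double sum real and pair up the conjugate terms). But the workhorse identity you cite, \eqref{eq:Djdotid1}, reduces products of the form $\alpha\cdot V$, whereas the products that actually arise here are of a different type: substituting \eqref{eq:cjdot} into $S$ produces triple sums over $V(a_j-a_l)\,\ftwo(a_j-a_k^*+\ii\delta)$, and substituting \eqref{eq:ajdot} into $T$ produces $\alpha(a_j-a_l)\,\ftwo'(a_j-a_k^*+\ii\delta)$. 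Neither $V\cdot\ftwo$ nor $\alpha\cdot\ftwo'$ is addressed by \eqref{eq:Djdotid1} or by any other identity in Appendix~\ref{app:functional}; to attack them you would first have to rewrite $\ftwo'=-2\alpha V-V'$ (from \eqref{eq:alphatoV}--\eqref{eq:IdV}) and then derive addition formulas for the resulting triple products $\alpha\alpha V$ and for $\alpha\cdot V'$, i.e.\ a second layer of identities that the paper never establishes. The claimed cancellation of $2\re(\lambda^*\dot\lambda)$ against the $\ftwo'(a_j-a_k)$ terms is likewise asserted rather than checked. The analogy with Appendix~\ref{app:constraintproof} is misleading: there the quantity $D_j$ is built from $\alpha$ alone, so only $\alpha\cdot V$ products appear; here $B$ is built from $\ftwo$, so every term is one derivative ``heavier'' and the available toolkit does not close.

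The paper avoids this entirely. It integrates the first row of \eqref{eq:moduliperiodic} over a period to obtain
\begin{equation*}
2\ell B=\int_{-\ell}^{\ell}|u|^2\,\mathrm{d}x+\ii\sum_{j=1}^N\int_{-\ell}^{\ell}\big(\alpha(x-a_j-\ii\delta/2)-\alpha(x-a_j^*+\ii\delta/2)\big)\,\mathrm{d}x,
\end{equation*}
shows the first term is conserved by the standard mass-conservation computation for the PDE itself (Lemma~\ref{lem:u2}), and evaluates the second term exactly as $-N\pi+\frac{2\pi}{\delta}\sum_{j=1}^N\im(a_j)$, reducing the claim to the conservation of $\sum_{j=1}^N\im(a_j)$ (Lemma~\ref{lem:ajsum}) --- a computation that only ever meets $\alpha\cdot\alpha$ products and therefore closes with \eqref{eq:Idmain} alone. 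If you want to salvage a direct proof of $\dot B=0$, you must first supply the missing product identities for $V\cdot\ftwo$ and $\alpha\cdot\ftwo'$; otherwise the integral identity is the efficient path.
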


\subsubsection{Proof of Proposition~\ref{prop:B}}

By integrating the first row of \eqref{eq:moduliperiodic} over $x\in [-\ell,\ell)$ and rearranging the result, we have
\begin{equation}\label{eq:2lB}
2\ell B=\int_{-\ell}^{\ell} |u|^2\,\mathrm{d}x +\ii \sum_{j=1}^N \int_{-\ell}^{\ell} \big(\alpha(x-a_j-\ii\delta/2)-\alpha(x-a_j^*+\ii\delta/2)\big)\,\mathrm{d}x.	
\end{equation}
We will show, in turn, that the first\footnote{We note that $\int_{-\ell}^{\ell}|v|^2\,\mathrm{d}x$ can similarly be shown to be conserved in time.} and second terms on the right-hand side of \eqref{eq:2lB} are conserved in time.

\begin{lemma}\label{lem:u2}
The quantity
\begin{equation}\label{eq:conserveduv}
\int_{-\ell}^{\ell} |u|^2\,\mathrm{d}x
\end{equation}	
is conserved when $u$ solves the periodic IMM system.
\end{lemma}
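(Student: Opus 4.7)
The plan is to compute $\frac{\mathrm{d}}{\mathrm{d}t}\int_{-\ell}^{\ell}|u|^2\,\mathrm{d}x$ directly by substituting the periodic IMM equation for $u_t$, and then show the integrand is a total $x$-derivative, which vanishes upon integration by $2\ell$-periodicity of $u$. Writing $\partial_t|u|^2=2\re(u_tu^*)=2\im((\ii u_t)u^*)$ and inserting the first equation of \eqref{eq:IMM}, the computation reduces to evaluating $\im\bigl(u_{xx}u^*+\ii|u|^2(|u|^2)_x+|u|^2T(|u|^2)_x-|u|^2\tilde T(|v|^2)_x\bigr)$.

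Two of the four terms are immediate. Since $|u_x|^2$ is real, $\im(u_{xx}u^*)=\im\bigl(\partial_x(u_xu^*)-|u_x|^2\bigr)=\partial_x\im(u_xu^*)$, a total $x$-derivative. The local cubic term contributes $\im(\ii|u|^2(|u|^2)_x)=|u|^2(|u|^2)_x=\tfrac{1}{2}\partial_x(|u|^4)$, also a total derivative. It therefore remains to show that the nonlocal terms $|u|^2T(|u|^2)_x$ and $|u|^2\tilde T(|v|^2)_x$ are real, so that they contribute nothing to $\im(\cdot)$. This amounts to the claim that $T$ and $\tilde T$ map the real, mean-zero functions $(|u|^2)_x$ and $(|v|^2)_x$ to real-valued functions.

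The main obstacle is this real-valuedness check, and it is where the special structure of the kernels enters. For $T$ it is straightforward: the kernel $\zeta_1(x'-x;\ell,\ii\delta)$ is real on $\R$, because the lattice with half-periods $\ell$ and $\ii\delta$ is invariant under complex conjugation, so by the Schwarz symmetry $\zeta(z^*)^*=\zeta(z)$ both $\zeta(y;\ell,\ii\delta)$ and $\zeta(\ell;\ell,\ii\delta)$ are real for real $y$. For $\tilde T$ the kernel is $\zeta_1(x'-x+\ii\delta;\ell,\ii\delta)$, which is not literally real on $\R$. However, combining Schwarz symmetry of $\zeta$ with the quasi-periodicity $\zeta(z+2\ii\delta;\ell,\ii\delta)=\zeta(z;\ell,\ii\delta)+2\zeta(\ii\delta;\ell,\ii\delta)$ and with the fact that $\zeta(\ii\delta;\ell,\ii\delta)$ is purely imaginary (oddness of $\zeta$ together with Schwarz symmetry), one finds that the imaginary part of the kernel is a constant (independent of $x'$). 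This constant contribution integrates against $(|v|^2)_x$ to zero, because $(|v|^2)_x$ has zero mean over $[-\ell,\ell)$ by periodicity of $v$; hence $\tilde T(|v|^2)_x$ is real on $\R$. Assembling the pieces yields $\partial_t|u|^2=2\partial_x\im(u_xu^*)+\partial_x(|u|^4)$, whose integral over $[-\ell,\ell)$ vanishes, establishing the lemma.
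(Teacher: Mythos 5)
Your proof is correct and follows essentially the same route as the paper: differentiate under the integral, substitute the equation, and observe that the nonlocal terms drop out because $T$ and $\tilde T$ send the real function $(|u|^2)_x$, $(|v|^2)_x$ to real functions, leaving only total derivatives and the self-adjoint $u_{xx}$ contribution. The one place you go beyond the paper is in explicitly verifying that real-valuedness for the periodic kernels (Schwarz symmetry for $T$, and for $\tilde T$ the observation that $\im\zeta_1(\cdot+\ii\delta;\ell,\ii\delta)$ is a constant that integrates against the mean-zero function $(|v|^2)_x$ to zero) --- the paper's proof uses this fact silently in the conjugation step, so your check is a welcome addition rather than a deviation.
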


\begin{proof}
By differentiating the first quantity in \eqref{eq:conserveduv} with respect to $t$ and inserting \eqref{eq:IMM}, we obtain
\begin{align}\label{eq:uvconserved1}
&\frac{\mathrm{d}}{\mathrm{d}t}\int_{-\ell}^{\ell} |u|^2\,\mathrm{d}x	= \int_{-\ell}^{\ell} (u_tu^*+uu_t^*)\,\mathrm{d}x= \nonumber\\
 &\int_{-\ell}^{\ell} \big( \big(-\ii u_{xx}-\ii u(\ii+T)(|u|^2)_x+\ii u\tilde{T}(|v|^2)_x\big)u^*+u\big(\ii u_{xx}^*+\ii u^*(-\ii+T)(|u|^2)_x-\ii u^*\tilde{T}(|u|^2)_x\big)\big)\,\mathrm{d}x. 
\end{align}
After cancelling terms in \eqref{eq:uvconserved1}, we are left with
\begin{align}\label{eq:uvconserved2}
\frac{\mathrm{d}}{\mathrm{d}t}\int_{-\ell}^{\ell} |u|^2\,\mathrm{d}x=\int_{-\ell}^{\ell}\big(-\ii u_{xx}u^*+\ii u u_{xx}^*+2|u|^2(|u|^2)_x\big)\,\mathrm{d}x.
\end{align}
The integral of the first two terms vanishes by the self-adjointness of second derivative while the third term is a total derivative and integrates to zero. Hence $\int_{-\ell}^{\ell}|u|^2\,\mathrm{d}x$ is conserved on solutions of the periodic IMM system.
\end{proof}

To show that the second term in \eqref{eq:2lB} is conserved in time, we first note the following identity relating the $\zeta_1$- and $\zeta_2$-functions  (in what follows, we suppress the second and third arguments of these functions for notational simplicity) defined in \eqref{eq:zeta1} and \eqref{eq:zeta2}, respectively, 
\begin{equation}\label{eq:zeta2tozeta1}
\zeta_2(z)=\zeta_1(z)+\gamma_0 z	.
\end{equation}
Equation \eqref{eq:zeta2tozeta1} may be established by using the definitions \eqref{eq:zeta1} and \eqref{eq:zeta2}, the standard elliptic identity \cite[Eq.~23.2.14]{DLMF} $\zeta(\ell)(\ii\delta)-\zeta(\ii\delta)\ell=\ii\pi/2$ and the definition of $\gamma_0$ \eqref{eq:gamma0}. 

It thus follows from \eqref{eq:alphaelliptic} and \eqref{eq:zeta2tozeta1} that
\begin{multline}\label{eq:alphaint1}
	\int_{-\ell}^{\ell}	\big(\alpha(x-a_j-\ii\delta/2)-\alpha(x-a_j^*+\ii\delta/2)\big)\,\mathrm{d}x=\\ \int_{-\ell}^{\ell} \big(\zeta_1(x-a_j-\ii\delta/2)-\zeta_1(x-a_j^*+\ii\delta/2)\big)\,\mathrm{d}x	-\gamma_0\int_{-\ell}^{\ell}(a_j-a_j^*+\ii\delta)\,\mathrm{d}x.
\end{multline}
Recalling \eqref{eq:imaj}, we use the following exact integrals \cite[Proposition~B.1]{berntsonlangmann2021},
\begin{equation}
\int_{-\ell}^{\ell} \zeta_1(x-a)\,\mathrm{d}x=\begin{cases}
	+\ii \pi & -2\delta<\im(a)<0\\
	-\ii\pi & 0<\im(a)<2\delta 
\end{cases}	
\end{equation}
in \eqref{eq:alphaint1} to write
\begin{align}
	\label{eq:alphaint2}
	\ii\sum_{j=1}^N\int_{-\ell}^{\ell}	\big(\alpha(x-a_j-\ii\delta/2)-\alpha(x-a_j^*+\ii\delta/2)\big)\,\mathrm{d}x=&\;  -2 N\pi - 2\ii  \gamma_0\ell\sum_{j=1}^N(a_j-a_j^*)+2N\gamma_0\ell\delta \nonumber\\=&\; -N\pi+\frac{2\pi}{\delta}\sum_{j=1}^N \im(a_j),
\end{align}
where we have inserted the definition of $\gamma_0$ \eqref{eq:gamma0} in the second step. 

It remains to show that \eqref{eq:alphaint2} is conserved in time; this is accomplished with the following lemma.

\begin{lemma}\label{lem:ajsum}
Let $\lambda$ and $\{a_j,c_j\}$ be a solution of \eqref{eq:CM}, \eqref{eq:cjdot}, and \eqref{eq:lambdadot} on $[0,\tau)$ such that \eqref{eq:ajdot}, \eqref{eq:cjconstraint}, and \eqref{eq:constraintperiodic} hold at $t=0$ and \eqref{eq:imaj} and \eqref{eq:ajakelliptic} hold on $[0,\tau)$. Then, the sum of the imaginary parts of $\{a_j\}_{j=1}^N$,
\begin{equation}\label{eq:imajsum}
	\sum_{j=1}^N \im(a_j)=\frac1{2\ii}\sum_{j=1}^N (a_j-a_j^*)
\end{equation}	
is conserved on $[0,\tau)$. 
\end{lemma}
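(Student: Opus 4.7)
The plan is to reduce the claim to showing that $W \coloneqq \sum_{j=1}^N \dot a_j$ is real. Summing the elliptic CM equations \eqref{eq:CM} over $j$ and using that $V'$ is odd (by \eqref{eq:parity}), one obtains $\sum_j \ddot a_j = 0$, so $W$ is conserved on $[0,\tau)$. Therefore $\frac{\mathrm{d}}{\mathrm{d}t}\sum_j\im(a_j) = \im W$ is also constant on $[0,\tau)$, and it suffices to establish $\im W = 0$ at a single point. By Propositions~\ref{prop:CM} and~\ref{prop:constraints} and Lemma~\ref{lem:cjconstraint}, the relations \eqref{eq:ajdot}, \eqref{eq:cjconstraint}, and \eqref{eq:constraintperiodic} are all available throughout $[0,\tau)$ and may be used freely when evaluating $W$.

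The strategy for evaluating $W$ is to solve \eqref{eq:constraintperiodic} for $1/c_j = -\lambda^* + \ii\sum_l c_l^*\alpha(a_j-a_l^*+\ii\delta)$ and substitute this into \eqref{eq:ajdot}. After summing over $j$ this yields
\begin{equation*}
W = -2N|\lambda|^2 + 2\ii\lambda\tilde A - 2\ii\lambda^*\tilde B - 2\tilde C,
\end{equation*}
where $\tilde A = \sum_{j,l} c_l^*\alpha(a_j-a_l^*+\ii\delta)$, $\tilde B = \sum_j\sum_{k\neq j} c_k\alpha(a_j-a_k)$, and $\tilde C = \sum_j\sum_{k\neq j}\sum_l c_kc_l^*\alpha(a_j-a_k)\alpha(a_j-a_l^*+\ii\delta)$.

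The main obstacle is reducing the triple sum $\tilde C$ to a manageable form. The plan is to apply the key identity \eqref{eq:Idmain} with $z=a_j$, $a=a_k$, $b=a_l^*-\ii\delta$ (using $2\ii\delta$-periodicity from \eqref{eq:imperiod}) so that each product $\alpha(a_j-a_k)\alpha(a_j-a_l^*+\ii\delta)$ becomes a combination of single $\alpha$-values, $\ftwo$-values, and a constant. On inserting this expansion into $\tilde C$, the constant piece and two of the three $\ftwo$-subsums vanish by \eqref{eq:cjconstraint}. The $\alpha$-difference pieces are then simplified by invoking \eqref{eq:constraintperiodic} and its conjugate---available via \eqref{eq:Schwarz} and \eqref{eq:imperiod}---in the two forms $\sum_k c_k^*\alpha(a_j-a_k^*+\ii\delta) = -\ii/c_j-\ii\lambda^*$ and $\sum_k c_k\alpha(a_k-a_l^*+\ii\delta) = -\ii/c_l^* - \ii\lambda$; this generates exactly the terms $-\ii\lambda^*\tilde B + \ii\lambda\tilde A$ that cancel the single-sum pieces already present in $W$, along with a residual $\sum_{j,l}\alpha(a_j-a_l^*+\ii\delta)$ and a diagonal contribution $\sum_{k,l}c_kc_l^*\alpha(a_k-a_l^*+\ii\delta)^2$. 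The identity $\alpha^2 = V+\ftwo$ from \eqref{eq:IdV} is used to split the latter into $V$- and $\ftwo$-contributions.

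Putting everything together, $W$ should simplify to
\begin{equation*}
W = -2N|\lambda|^2 - 2\ii\sum_{j,l}\alpha(a_j-a_l^*+\ii\delta) - 2\sum_{k,l}c_kc_l^*V(a_k-a_l^*+\ii\delta) - N\sum_{k,l}c_kc_l^*\ftwo(a_k-a_l^*+\ii\delta).
\end{equation*}
Reality of each term is then a direct consequence of the symmetry toolkit in Appendix~\ref{subsec:elliptic}: Schwarz conjugation \eqref{eq:Schwarz}, $2\ii\delta$-periodicity \eqref{eq:imperiod}, the parity relations \eqref{eq:parity}, and the index interchange $j\leftrightarrow l$ (respectively $k\leftrightarrow l$) together imply that $\sum_{j,l}\alpha(a_j-a_l^*+\ii\delta)$ is purely imaginary---so the coefficient $-2\ii$ produces a real term---while the $V$- and $\ftwo$-weighted sums in $c_kc_l^*$ each coincide with their own complex conjugate. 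Thus $W\in\R$, $\im W = 0$, and $\sum_j\im(a_j)$ is conserved on $[0,\tau)$, completing the proof.
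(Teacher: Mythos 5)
Your proposal is correct and takes essentially the same route as the paper's proof: both differentiate $\sum_j a_j$ using \eqref{eq:ajdot}, eliminate $1/c_j$ via \eqref{eq:constraintperiodic}, collapse the resulting triple sums with the addition-type identity \eqref{eq:Idmain} together with \eqref{eq:cjconstraint}, and conclude by the Schwarz/parity/periodicity symmetries \eqref{eq:Schwarz}, \eqref{eq:parity}, \eqref{eq:imperiod} (the paper shows the imaginary part vanishes by pairing each term with its conjugate, whereas you exhibit $W=\sum_j\dot a_j$ as a manifestly real expression --- the same cancellations in different packaging, with your preliminary observation that $W$ is conserved being harmless but not actually needed). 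One small slip in your narrative: only the $\ftwo(a_j-a_k)$ subsum is killed outright by $\sum_l c_l^*=0$, while the other two combine to $\tfrac{N-2}{2}\sum_{k,l}c_kc_l^*\ftwo(a_k-a_l^*+\ii\delta)$; nevertheless your final formula for $W$, including the coefficient $-N$ on the $\ftwo$-weighted sum after adding the contribution from \eqref{eq:IdV}, checks out, so the conclusion stands.
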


\begin{proof}
By Lemma~\ref{lem:cjconstraint}, Proposition~\ref{prop:constraints}, and Proposition~\ref{prop:CM}, we have that \eqref{eq:cjconstraint}, \eqref{eq:constraintperiodic}, and \eqref{eq:ajdot}, respectively, hold on $[0,\tau)$ and thus may be used freely in what follows. Also, we repeatedly use the fact that $\alpha(z)$ is a $2\ii\delta$-periodic \eqref{eq:imperiod} and odd \eqref{eq:parity} function below. 

We differentiate \eqref{eq:imajsum} with respect to $t$ and insert \eqref{eq:ajdot} to obtain  
\begin{align}\label{eq:imajconserved1}
\frac{\mathrm{d}}{\mathrm{d}t}\sum_{j=1}^N \im(a_j)=&\; \frac{1}{2\ii}\sum_{j=1}^N (\dot{a}_j-\dot{a}_j^*) \nonumber\\
=&\; -\ii\sum_{j=1}^N \frac{1}{c_j}\Bigg(\lambda+\ii\sum_{k\neq j}^N c_k\alpha(a_j-a_k)\Bigg)+\ii\sum_{j=1}^N \frac{1}{c_j^*}\Bigg(\lambda^*-\ii\sum_{k\neq j}^N c_k^*\alpha(a_j^*-a_k^*)\Bigg).
\end{align}
To proceed, we solve \eqref{eq:constraintperiodic} for $1/c_j$ and substitute the result into \eqref{eq:imajconserved1}; this gives
\begin{align}\label{eq:imajconserved2}
\frac{\mathrm{d}}{\mathrm{d}t}\sum_{j=1}^N \im(a_j)=&\; \ii\sum_{j=1}^N \Bigg(\lambda^*-\ii\sum_{l=1}^N c_l^*\alpha(a_j-a_l^*+\ii\delta)\Bigg)\Bigg(\lambda+\ii\sum_{k\neq j}^N c_k\alpha(a_j-a_k)\Bigg) \nonumber \\
&\; -\ii\sum_{j=1}^N \Bigg(\lambda+\ii\sum_{l=1}^N c_l\alpha(a_j^*-a_l+\ii\delta)\Bigg)\Bigg(\lambda^*-\ii\sum_{k\neq j}^N c_k^*\alpha(a_j^*-a_k^*)\Bigg) \nonumber \\
=&\; - \lambda^*\sum_{j=1}^N\sum_{k\neq j}^N c_k\alpha(a_j-a_k)-\lambda\sum_{j=1}^N\sum_{k\neq j}^N c_k^*\alpha(a_j^*-a_k^*) \nonumber\\
&\; +\lambda \sum_{j=1}^N\sum_{l=1}^N c_l^*\alpha(a_j-a_l^*+\ii\delta)+\lambda^*\sum_{j=1}^N\sum_{l=1}^N c_l\alpha(a_j^*-a_l+\ii\delta) \nonumber \\
&\; +\ii\sum_{j=1}^N\sum_{k\neq j}^N \sum_{l=1}^N \big(c_kc_l^*\alpha(a_j-a_k)\alpha(a_j-a_l^*+\ii\delta)-c_k^*c_l\alpha(a_j^*-a_k^*)\alpha(a_j^*-a_l+\ii\delta)      \big).
\end{align}

Next, we use \eqref{eq:constraintperiodic} to replace the quantities $\lambda^* c_k$ and $\lambda^*c_l$ (and their complex conjugates)
in \eqref{eq:imajconserved2}:
\begin{align}\label{eq:imajconserved3}
\frac{\mathrm{d}}{\mathrm{d}t}\sum_{j=1}^N \im(a_j)=&\; 	\sum_{j=1}^N\sum_{k\neq j}^N \alpha(a_j-a_k)-\ii \sum_{j=1}^N\sum_{k\neq j}^N\sum_{l=1}^N c_kc_l^*\alpha(a_j-a_k)\alpha(a_k-a_l^*+\ii\delta) \nonumber \\
&\; +\sum_{j=1}^N\sum_{k\neq j}^N \alpha(a_j^*-a_k^*)+\ii\sum_{j=1}^N\sum_{k\neq j}^N\sum_{l=1}^N c_k^*c_l\alpha(a_j^*-a_k^*)\alpha(a_k^*-a_l+\ii\delta) \nonumber \\
&\; -\sum_{j=1}^N\sum_{l=1}^N \alpha(a_j-a_l^*+\ii\delta)+\ii\sum_{j=1}^N\sum_{k=1}^N\sum_{l=1}^N c_kc_l^* \alpha(a_j-a_l^*+\ii\delta)\alpha(a_k-a_l^*+\ii\delta) \nonumber \\
&\; -\sum_{j=1}^N\sum_{l=1}^N \alpha(a_j^*-a_l+\ii\delta)-\ii\sum_{j=1}^N\sum_{k=1}^N\sum_{l=1}^N c_k^*c_l\alpha(a_j^*-a_l+\ii\delta)\alpha(a_k^*-a_l+\ii\delta) \nonumber \\
&\; +\ii\sum_{j=1}^N\sum_{k\neq j}^N \sum_{l=1}^N \big(c_kc_l^*\alpha(a_j-a_k)\alpha(a_j-a_l^*+\ii\delta)-c_k^*c_l\alpha(a_j^*-a_k^*)\alpha(a_j^*-a_l+\ii\delta)      \big).
\end{align}
All double sums in \eqref{eq:imajconserved3} vanish by symmetry; note also that the $k=j$ terms in the triple sums in the third and fourth lines cancel with each other. With these simplifications, \eqref{eq:imajconserved3} can be written as
\begin{align}\label{eq:imajconserved4}
&\frac{\mathrm{d}}{\mathrm{d}t}\sum_{j=1}^N \im(a_j)= \nonumber\\
& \ii\sum_{j=1}^N\sum_{k\neq j}^N\sum_{l=1}^N c_kc_l^* \big(\alpha(a_j-a_l^*+\ii\delta)\alpha(a_k-a_l^*+\ii\delta)- \alpha(a_j-a_k)\alpha(a_k-a_l^*+\ii\delta)+\alpha(a_j-a_k)\alpha(a_j-a_l^*+\ii\delta)\big) \nonumber\\
& -\ii\sum_{j=1}^N\sum_{k\neq j}^N\sum_{l=1}^N c_k^*c_l \big(\alpha(a_j^*-a_l+\ii\delta)\alpha(a_k^*-a_l+\ii\delta)- \alpha(a_j^*-a_k^*)\alpha(a_k^*-a_l+\ii\delta)+\alpha(a_j^*-a_k^*)\alpha(a_j^*-a_l+\ii\delta)\big).
\end{align}

Note that the first and second lines in \eqref{eq:imajconserved4} are related by complex conjugation. Any manipulation of the first line can be extended to the second line using this symmetry. We apply \eqref{eq:Idmain} with $z=a_j+\ii\delta/2$, $a=a_k+\ii\delta/2$, and $b=a_l^*-\ii\delta/2$ to \eqref{eq:imajconserved4}, which leads to 
\begin{align}\label{eq:imajconserved5}
\frac{\mathrm{d}}{\mathrm{d}t}\sum_{j=1}^N \im(a_j)= &\; \frac{\ii}{2}\sum_{j=1}^N\sum_{k\neq j}^N\sum_{l=1}^N c_kc_l^* \bigg( \ftwo(a_j-a_k)+\ftwo(a_j-a_l^*+\ii\delta)+\ftwo(a_k-a_l^*+\ii\delta) +\frac{3\zeta(\ii\delta)}{\ii\delta}    \bigg) \nonumber\\
&\; -\frac{\ii}{2}\sum_{j=1}^N\sum_{k\neq j}^N\sum_{l=1}^N c_k^*c_l \bigg( \ftwo(a_j^*-a_k^*)+\ftwo(a_j^*-a_l+\ii\delta)+\ftwo(a_k^*-a_l+\ii\delta) +\frac{3\zeta(\ii\delta)}{\ii\delta}    \bigg) .
\end{align}
By using \eqref{eq:cjconstraint} (which implies $\sum_{k\neq j}^N c_k=-c_j$) in \eqref{eq:imajconserved5}, we arrive at  
\begin{align}\label{eq:imajconserved6}
\frac{\mathrm{d}}{\mathrm{d}t}\sum_{j=1}^N \im(a_j)= &\;-\frac{\ii}{2}\sum_{j=1}^N\sum_{l=1}^N\big(c_jc_l^*\ftwo(a_j-a_l^*+\ii\delta)-c_j^*c_l\ftwo(a_j^*-a_l+\ii\delta)\big)
\nonumber \\
&\; +\frac{\ii}{2}\sum_{j=1}^N\sum_{k\neq j}^N\sum_{l=1}^N \big(c_kc_l^*\ftwo(a_k-a_l^*+\ii\delta)	-c_k^*c_l\ftwo(a_k^*-a_l+\ii\delta)\big). 
\end{align}
Both sums in \eqref{eq:imajconserved6} vanish by symmetry, using the fact that $\ftwo(z)$ is an even function \eqref{eq:parity}, and the result follows. 
\end{proof}

Applying Lemma~\ref{lem:u2} to our supposed solution of the periodic IMM system on $[0,\tau)$ and using Lemma~\ref{lem:ajsum} in \eqref{eq:alphaint2}, we find that the right-hand side of \eqref{eq:2lB} is conserved on $[0,\tau)$. It follows that $B$ is constant-in-time. 

\bibliographystyle{unsrt}

\bibliography{BF1}

\end{document}